\documentclass{article}
\usepackage[utf8]{inputenc}
\usepackage{amssymb,amsmath,amsthm}
\usepackage{graphicx}
\usepackage{proof}
\usepackage{cite}
\usepackage[UKenglish]{babel}
\usepackage[all]{xy} %xymatrix
\usepackage{hyperref}
\usepackage{times}

\hypersetup{
   colorlinks,%
   citecolor=blue,%
   filecolor=black,%
   linkcolor=red,%
   urlcolor=black
 }%超链接的颜色
 
 \newcommand{\bis}{\mathrel{\mathchoice%
{\raisebox{.3ex}{$\,
  \underline{\makebox[.7em]{$\leftrightarrow$}}\,$}}%
{\raisebox{.3ex}{$\,
  \underline{\makebox[.7em]{$\leftrightarrow$}}\,$}}%
{\raisebox{.2ex}{$\,
  \underline{\makebox[.5em]{\scriptsize$\leftrightarrow$}}\,$}}%
{\raisebox{.2ex}{$\,
  \underline{\makebox[.5em]{\scriptsize$\leftrightarrow$}}\,$}}}}

\newcommand{\lr}[1]{\langle #1 \rangle}
\newcommand{\lra}{\leftrightarrow}

\newcommand{\BP}{\textbf{P}}
\newcommand{\M}{\ensuremath{\mathcal{M}}}

\newcommand{\F}{\ensuremath{\mathcal{F}}}

\renewcommand{\phi}{\varphi}

\newcommand{\weg}[1]{}

\theoremstyle{definition}
\newtheorem{theorem}{Theorem}
\newtheorem{lemma}[theorem]{Lemma}
\newtheorem{definition}[theorem]{Definition}

\newtheorem{remark}[theorem]{Remark}
\newtheorem{proposition}[theorem]{Proposition}

\newtheorem{corollary}[theorem]{Corollary}

%\title{Neighborhood logics for contingency and accident}
%\title{Neighborhood logics for Ignorance of and Ignorance whether}
%\title{Ignorance of and Ignorance whether: A Neighborhood Perspective}
\title{Fitchean Ignorance and First-order Ignorance:\\ A Neighborhood Look}
%\title{`Ignorance of' and `Ignorance whether': A Neighborhood Look}
\author{Jie Fan\\
\small Institute of Philosophy, Chinese Academy of Sciences;\\
\small School of Humanities, University of Chinese Academy of Sciences  \\
\small \texttt{jiefan@ucas.ac.cn}}
\date{}
%\date{July 2020}

\begin{document}

\maketitle

\begin{abstract}
   In a seminal work~\cite{Fine:2018}, Fine classifies several forms of ignorance, among which are Fitchean ignorance, first-order ignorance, Rumsfeld ignorance, and second-order ignorance. It is shown that there is interesting relationship among some of them, which includes that in ${\bf S4}$, all higher-order ignorance are reduced to second-order ignorance. This is thought of as a bad consequence by some researchers. It is then natural to ask how to avoid this consequence. We deal with this issue in a much more general framework. In detail, we treat the forms of Fitchean ignorance and first-order ignorance as primitive modalities and study them as first-class citizens under neighborhood semantics, in which Rumsfeld ignorance and second-order ignorance are definable. The main contributions include model-theoretical results such as expressivity and frame definability, and axiomatizations. Last but not least, by updating the neighborhood models via the intersection semantics, we extend the results to the dynamic case of public announcements, which gives us some applications to successful formulas.
   %In this paper, we investigate the bimodal logic of `ignorance of' and `ignorance whether' under the neighborhood semantics. We compare the relative expressivity of this logic and other related logics such as its two sublogics and standard epistemic logic. We investigate the frame definability of the bimodal logic. We axiomatize the bimodal logic over various classes of neighborhood frames, which includes the classical logic, the monotone logic, and the regular logic. By updating the neighborhood models via the intersection semantics, we find suitable reduction axioms and thus reduce the public announcement operators to the bimodal logic, which gives us some applications to successful formulas.
\end{abstract}

\weg{\begin{abstract}
   In a seminal work~\cite{Fine:2018}, Fine classifies several forms of ignorance, among which are `ignorance of (the fact that)', `(first-order) ignorance whether', Rumsfeld ignorance, and second-order ignorance. It is shown that there are interesting relationship among the four forms. However, all these results are based on the context of ${\bf S4}$. It is natural to ask what relationship among them there are on other contexts. Since the latter two forms are definable with the former two ones, and the operators of the former two forms are non-normal, in this paper we treat both forms as primitive modalities and study them as first-class citizens under neighborhood semantics. The main contributions include model-theoretical results such as expressivity and frame definability, and axiomatizations. Last but not least, by updating the neighborhood models via the intersection semantics, we extend the results to the dynamic case of public announcements, which gives us some applications to successful formulas.
   %In this paper, we investigate the bimodal logic of `ignorance of' and `ignorance whether' under the neighborhood semantics. We compare the relative expressivity of this logic and other related logics such as its two sublogics and standard epistemic logic. We investigate the frame definability of the bimodal logic. We axiomatize the bimodal logic over various classes of neighborhood frames, which includes the classical logic, the monotone logic, and the regular logic. By updating the neighborhood models via the intersection semantics, we find suitable reduction axioms and thus reduce the public announcement operators to the bimodal logic, which gives us some applications to successful formulas.
\end{abstract}
}

\noindent Keywords: Fitchean ignorance, first-order ignorance, contingency, accident, unknown truths, expressivity, frame definability, axiomatizations, intersection semantics, successful formulas 

\section{Introduction}

Ignorance has been a hotly discussed theme in epistemology and many other fields since Socrates, who professed ignorance in e.g. the {\em Apology}~\cite{Bett:2010}. Just as there has been no consensus on the definition of knowledge, there has been no consensus on the definition of ignorance. Instead, there has been at least three views in the literature: the standard view, the new view, and the logical view.\footnote{The terminology ‘the standard view’ is introduced in~\cite{LeMorvan:2011}, ‘the new view’ is from~\cite{Peels:2011}, whereas the term
‘the logical view’ comes from~\cite{Fan:2016}.} The standard view thinks that ignorance is merely the negation of propositional knowledge, the new view thinks that ignorance is the lack of true belief,\footnote{For the discussion on the standard and new view, see~\cite{le2016nature} and references therein.} whereas the logical view thinks that ignorance means neither knowing nor knowing not~\cite{wiebeetal:2003,hoeketal:2004,steinsvold:2008,Fanetal:2014,Fanetal:2015,olsson2015explicating}.\footnote{To the best of our knowledge, the first to evidently  investigate ignorance from the logical view is~\cite{wiebeetal:2003} --- also see its extended journal version~\cite{hoeketal:2004}, which though includes an {\em unsound} transitive axiomatization, as shown in~\cite[pp.~102--103]{Fanetal:2015}.}

%Ignorance has been a hotly discussed theme in epistemology since Socrates, who professed ignorance in e.g. the {\em Apology}~\cite{Bett:2010}. Just as there has been no consensus on the definition of knowledge, there has been no consensus on the definition of ignorance. Instead, there has been at least three views proposed: the standard view, the new view, and the logical view.\footnote{The terminology ‘the standard view’ is introduced in~\cite{LeMorvan:2011}, ‘the new view’ is from~\cite{Peels:2011}, whereas the term ‘the logical view’ comes from~\cite{Fan:2016}.} The standard view thinks that ignorance is merely the negation of propositional knowledge~\cite{Unger:1975,Driver:1989,Goldmanetal:2009,LeMorvan:2010,LeMorvan:2011,LeMorvan:2012,LeMorvan:2013}, the new view thinks that ignorance is the lack of true belief~\cite{Peels:2011,Peels:2012,Kyle:2015}, whereas the logical view thinks that ignorance means neither knowing nor knowing not~\cite{wiebeetal:2003,hoeketal:2004,steinsvold:2008,Fanetal:2014,Fanetal:2015,olsson2015explicating}.\footnote{To the best of our knowledge, the first to evidently  investigate ignorance from the logical view is~\cite{wiebeetal:2003} --- also see its extended journal version~\cite{hoeketal:2004}, which though includes an {\em unsound} transitive axiomatization, as shown in~\cite[pp.~102--103]{Fanetal:2015}.}

%The interest in ignorance has recently begun to revive. 
Recently there has been a flurry of research on ignorance. Various forms of ignorance are proposed in the literature, such as pluralistic ignorance~\cite{o1976pluralistic,bjerring2014rationality,proietti2014ddl}, circumscriptive ignorance~\cite{konolige1982circumscriptive}, chronological ignorance~\cite{shoham1986chronological}, factive ignorance~\cite{kubyshkina2019logic}, relative ignorance~\cite{Goranko:2021}, disjunctive ignorance~\cite{Fan:2021disjunctive}. In a seminal paper~\cite{Fine:2018}, instead of discussing the definition of ignorance, Fine classifies several forms of ignorance, among which are `ignorance of (the fact that)' (also called `Fitchean ignorance' there), `first-order ignorance (whether)', `Rumsfeld ignorance' and `second-order ignorance'. One is {\em ignorant of} (the fact that) $\phi$, if $\phi$ is the case but one does not know it. One is {\em (first-order) ignorant whether} $\phi$, if one neither knows $\phi$ nor knows its negation. One is {\em Rumsfeld ignorant of} $\phi$, if one is ignorant of the fact that one is ignorant whether $\phi$. One is {\em second-order ignorant whether} $\phi$, if one is ignorant whether one is ignorant whether $\phi$.

\weg{The forms of `ignorance of' and `ignorance whether' correspond to important metaphysical concepts --- accident (or `accidental truths') and contingency, respectively. A proposition is {\em accidental}, if it is true but possibly false; otherwise, it is {\em essential}. A proposition is {\em contingent}, if it is possibly true and possibly false; otherwise, it is {\em non-contingent}. The notion of contingency has been discussed since Aristotle~\cite{Borgan67}, whereas the notion of accident traces back at least to Leibniz, in the guise of `v\'{e}rit\'{e}s de fait' (factual truths), see e.g.~\cite{Heinemann:1948,AD:1989A}. As for the axiomatizations of contingency logic and the logic of accidental truths, we refer to~\cite{Fan:2019} or~\cite{Fan:2021} and references therein.

Both notions of contingency and accident are so close that it is hard to distinguish them in both daily life and academic research. For example, Leibniz used the term `contingency' to mean what is essentially meant by `accident', see e.g.~\cite{AD:1989A,Heinemann:1948}; in Chinese, the same character has been used to express the notions of contingency and accident. Moreover, it was asked in~\cite{Marcos:2005} how the notions of essence and accident differ from other usual modal notions such as those of contingency and non-contingency. As mentioned there, ``In
formal metaphysics there has often been some confusion between essence and necessity,
and between accident and contingency.'' (p.~44)  

Due to the above-mentioned similarities, it is necessary to investigate the two notions in a unified framework. This is initiated in~\cite{Fan:2019}, where a bimodal logic of contingency and accident is provided. There, among others, various results --- such as expressivity, frame definability, and axiomatizations --- are given and extended to the dynamic case of public announcements. The bimodal logic over Euclidean frames is then axiomatized in~\cite{Fan:2021}. Besides, the neighborhood semantics of contingency logic and accident logic are given in, respectively, \cite{FanvD:neighborhood} and~\cite{GilbertVenturi:2017}.\footnote{ See~\cite{Fan:2018,fan2019family,Fan:2020neighborhood} for more results.}}

%Due to the similarity between contingency and accident, both in daily life and academic research,\footnote{For example, Leibniz used the term `contingency' to mean what is essentially meant by `accident', see e.g.~\cite{AD:1989A,Heinemann:1948}; in Chinese, the same character has been used to express the notions of contingency and accident.} a bimodal logic of contingency and accident is provided in~\cite{Fan:2019} to study both notions in a unified framework. There, among others, various results --- such as expressivity, frame definability, and axiomatizations --- are given and extended to the dynamic case of public announcements. The bimodal logic over Euclidean frames is then axiomatized in~\cite{Fan:2021}. 

As Fine~\cite{Fine:2018} shows, there is interesting relationship among some of the forms. For instance, within the context of the system ${\bf S4}$, second-order ignorance implies first-order ignorance; second-order ignorance implies Rumsfeld ignorance, and vice versa; one does not know one is Rumsfeld ignorant; one does not know one is second-order ignorant. However, all these results are based on the context of ${\bf S4}$.\weg{But the transitivity of knowledge is contentious, see e.g.~\cite{Williamson:2000}.} It is then natural to ask what relationship among these forms there is in other contexts, based on the following reasons: firstly, although knowledge is usually based on ${\bf S4}$ (for instance in~\cite{hintikka:1962}), ignorance is not --- it is argued on the new view that ignorance is {\em not} not-knowing~(e.g.~\cite{Peels:2011}); secondly, in the first explicitly logical studies  on ignorance~\cite{wiebeetal:2003,hoeketal:2004}, the semantic condition is arbitrary, without any restriction; moreover, in ${\bf S4}$, all higher-order ignorance are reduced to second-order ignorance --- this is called the {\em black hole} of ignorance in~\cite{Fine:2018} and a {\em quite problematic phenomenon} in~\cite[p.~1060]{bonzio2022logical}.

One may easily check that the latter two forms are definable with the former two ones. It is the former two forms that are our focus here.\footnote{Fitchean ignorance and first-order ignorance correspond to important metaphysical concepts --- accident (or `accidental truths') and contingency, respectively. For the history of the bimodal logic of contingency and accident and the importance of the two metaphysical concepts, we refer to~\cite{Fan:2019} and the reference therein.} It is important to distinguish these two forms. For instance, the Fitchean ignorance satisfies the so-called {\em Factivity Principle} (that is, if an agent is ignorant of $\phi$ then $\phi$ is true), but the first-order ignorance does not.\footnote{In a recent work~\cite[p.~7]{bonzio2022logical}, the authors seem to think that ignorance has only one form, and say that ignorance should satisfy Factivity Principle since knowledge does.} Moreover, since the operators of the two forms and their duals are not normal, the logic of Fitchean ignorance and first-order ignorance is not normal. As is well known, neighborhood semantics has been a standard semantics tool for non-normal modal logics since its introduction in 1970~\cite{Montague:1970,Scott:1970,Chellas1980,pacuit2017neighborhood}. In the current paper, we will investigate the logical properties of the two forms of ignorance and their relationship under the neighborhood semantics. As we will show, there is interesting relationship among first-order ignorance, second-order ignorance, and Rumsfeld ignorance. For example, under any condition, Rumsfeld ignorance implies first-order ignorance, and second-order ignorance plus first-order ignorance implies Rumsfeld ignorance, whereas under the condition $(c)$, Rumsfeld ignorance implies second-order ignorance, and thus Rumsfeld ignorance amounts to second-order ignorance plus first-order ignorance. However, similar to the case for relational semantics~\cite{Fan:2019}, the situation may become quite involved if we study the two notions in a unified framework under the neighborhood semantics. For instance, we will be confronted with a difficulty in axiomatizing the bimodal logic, since we have only one neighborhood function to deal with two modal operators uniformly, which makes it hard to find suitable
interaction axioms.

\weg{As the bimodal logic of contingency and accident is non-normal (due to the non-normality of their modalities),\footnote{For instance, as we will see below, the contingency operator is not monotone, thus not normal.} it may be interesting to investigate the logic from the viewpoint of neighborhood semantics, which has been a standard semantics for non-normal modal logics since it is introduced~\cite{Scott:1970,Montague:1970,Chellas1980,pacuit2017neighborhood}. To our knowledge, however, there has been no such results yet. In the current paper, we will follow Fine's term on `ignorance of' and `ignorance whether', instead of `contingency' and `accident', and investigate the logical properties of the two notions and their relationship under the neighborhood semantics. Similar to the case for the relational semantics, the situation may become quite involved if we study the two notions in a unified framework under the neighborhood semantics. For instance, we will be confronted with a difficulty in axiomatizing the bimodal logic, since we have only one neighborhood function to deal with two modal operators uniformly, which makes it hard to find suitable
interaction axioms.}

%In the current paper, instead of `contingency' and `accident', we will talk about things in the context of epistemology, namely, use `ignorance whether' and `ignorance of' instead.
\weg{Since contingency and accident are metaphysical correspondents of `ignorance whether' and `ignorance of' respectively,  our technical results about `ignorance whether' and `ignorance of' also apply to their correspondents. This also partly answers  open questions posed in~\cite{Fan:2021,Fan:2020neighborhood}. Note that there is also an interesting comparison between our work and~\cite{Fan:2021disjunctive}: although the latter investigates a weak combination (disjunctive, that is) of the two forms of ignorance, we here treat both forms as primitive modalities and study them as first-class citizens.}

%The metaphysical counterpart of ignorance, namely contingency, has been discussed since Aristotle~\cite{Borgan67}. A proposition is {\em contingent}, if it is possibly true and possibly false; otherwise, it is {\em non-contingent}.

%In metaphysics, ignorance corresponds to an important notion --- contingency, which dates back to Aristotle~\cite{Borgan67}. A proposition is {\em contingent}, if it is possibly true and possibly false; otherwise, it is {\em non-contingent}. 

%Our work is related to the work of ignorance. 

The remainder of the paper is organized as follows. After briefly reviewing the syntax and the neighborhood semantics of the bimodal logic of Fitchean ignorance and first-order ignorance and also some related logics (Sec.~\ref{sec.syntaxandsemantics}), we compare the relative expressivity (Sec.~\ref{sec.expressivity}) and investigate the frame definability of the bimodal logic (Sec.~\ref{sec.framedefinability}). We axiomatize the bimodal logic over various classes of neighborhood frames (Sec.~\ref{sec.axiomatizations}). By updating the neighborhood models via the intersection semantics, we find suitable reduction axioms and thus reduce the public announcements operators to the bimodal logic, which gives us good applications to successful formulas (Sec.~\ref{sec.updating}), where, as we shall show, any combination of $p$, $\neg p$, ${\neg\bullet} p$, and ${\neg\nabla} p$ via conjunction (or, via disjunction) is successful under the intersection semantics. Finally, we conclude with some future work in Sec.~\ref{sec.conclusion}.

\section{Syntax and Neighborhood Semantics}\label{sec.syntaxandsemantics}

This section introduces the languages and their neighborhood semantics involved in this paper.

Fix a nonempty set $\BP$ of propositional variables, and let $p\in\BP$. In what follows, $\mathcal{L}(\Box)$ is the language of standard epistemic logic, $\mathcal{L}(\nabla)$ is the language of the logic of (first-order) ignorance, $\mathcal{L}(\bullet)$ is the language of the logic of Fitchean ignorance\footnote{$\mathcal{L}(\bullet)$ is also called `the logic of essence and accident' or `the logic of unknown truths', see e.g.~\cite{Marcos:2005,steinsvold:2008}.}, and $\mathcal{L}(\nabla,\bullet)$ is the language of the bimodal logic of Fitchean ignorance and first-order ignorance. We will mainly focus on $\mathcal{L}(\nabla,\bullet)$. For the sake of simplicity, we only exhibit the single-agent languages, but all our results also apply to
multi-agent cases.

%Fix a nonempty set $\BP$ of propositional variables, and let $p\in\BP$. In what follows, $\mathcal{L}(\Diamond)$ is the language of standard modal logic, $\mathcal{L}(\nabla)$ is the language of contingency logic, $\mathcal{L}(\bullet)$ is the language of accident logic, and $\mathcal{L}(\nabla,\bullet)$ is the language of the logic of contingency and accident. We will mainly focus on $\mathcal{L}(\nabla,\bullet)$.
\begin{definition}[Languages]  
\[
\begin{array}{llll}
   \mathcal{L}(\Box)&\phi & ::= & p\mid \neg\phi\mid \phi\land\phi\mid \Box\phi\\
   \mathcal{L}(\nabla)&\phi & ::= & p\mid \neg\phi\mid \phi\land\phi\mid \nabla\phi\\
   \mathcal{L}(\bullet)&\phi & ::= & p\mid \neg\phi\mid \phi\land\phi\mid \bullet\phi\\
   \mathcal{L}(\nabla,\bullet)&\phi & ::= & p\mid \neg\phi\mid \phi\land\phi\mid \nabla\phi\mid \bullet\phi\\
\end{array}
\]
\end{definition}

$\Box\phi$ is read ``one knows that $\phi$'', $\nabla\phi$ is read ``one is {\em (first-order) ignorant whether} $\phi$'', and $\bullet\phi$ is read ``one is {\em ignorant of} (the fact that) $\phi$'', or ``$\phi$ is an unknown truth''. In the metaphysical setting, $\nabla\phi$ and $\bullet\phi$ are read, respectively, ``it is contingent that $\phi$'' and ``it is accidental that $\phi$''. Among other connectives, $\Diamond\phi$, $\Delta\phi$, and $\circ\phi$ abbreviate, respectively, $\neg\Box\neg\phi$, ${\neg\nabla}\phi$, and ${\neg\bullet}\phi$, read ``it is epistemically possible that $\phi$'', `` one knows whether $\phi$'', and ``one is non-ignorant of $\phi$''.

Note that the forms of `Rumsfeld ignorance (of $\phi$)' and `second-ignorance (whether $\phi$)' can be defined as, respectively, $\bullet\nabla\phi$ and $\nabla\nabla\phi$.

%$\Diamond\phi$ is read ``it is possible that $\phi$'', $\nabla\phi$ is read ``it is contingent that $\phi$'', and $\bullet\phi$ is read ``it is accident that $\phi$''. In an epistemic setting, $\nabla\phi$ and $\bullet\phi$ are read, respectively, ``one is ignorant about $\phi$'' and ``$\phi$ is an unknown truth''.\footnote{In Fine's term in~\cite{Fine:2018}, $\nabla\phi$ and $\bullet\phi$ are read ``one is (first-order) ignorant whether $\phi$'' and ``one is ignorant of the fact that $\phi$'', respectively.} Among other connectives, $\Box\phi$, $\Delta\phi$, and $\circ\phi$ abbreviate, respectively, $\neg\Diamond\neg\phi$, ${\neg\nabla}\phi$, and ${\neg\bullet}\phi$, read ``it is necessary that $\phi$'', ``it is non-contingent that $\phi$'', and ``it is essential that $\phi$''.

The above languages are interpreted over neighborhood models.
\begin{definition}[Neighborhood structures]
A {\em (neighborhood) model} is a triple $\M=\lr{S,N,V}$, where $S$ is a nonempty set of states (also called `points' or `possible worlds', $N$ is a neighborhood function from $S$ to $\mathcal{P}(\mathcal{P}(S))$, and $V$ is a valuation function. A {\em (neighborhood) frame} is a model without a valuation; in this case, we say that the model is based on the frame. A {\em pointed model} is a pair of a model with a point in it. Given an $s\in S$, an element of $N(s)$ is called `a neighborhood of $s$'.
\end{definition}

The following list of neighborhood properties come from~\cite[Def.~3]{FanvD:neighborhood}.
\begin{definition}[Neighborhood properties]\label{def.properties} Let $\mathcal{F}=\lr{S,N}$ be a frame, and $\M$ be a model based on $\mathcal{F}$. Let $s\in S$ and $X,Y\subseteq S$. We define various neighborhood properties as follows.
\begin{itemize}
\item $(n)$: $N(s)$ \emph{contains the unit}, if $S\in N(s)$. % Sometimes we say $N(s)$\emph{ satisfies condition (n)}.

\item $(r)$: $N(s)$ \emph{contains its core}, if $\bigcap N(s)\in N(s)$\weg{, where $\bigcap N(s)=\bigcap\{X\mid X\in N(s)\}$}. % Sometimes we say $N(s)$ \emph{satisfies (r)}.

\item $(i)$: $N(s)$ \emph{is closed under intersections}, if $X,Y\in N(s)$ implies $X\cap Y\in N(s)$. % Sometimes we say $N(s)$\emph{ satisfies condition (i)}.

\item $(s)$: $N(s)$ is \emph{supplemented}, or \emph{closed under supersets}, if $X\in N(s)$ and $X\subseteq Y\subseteq S$ implies $Y\in N(s)$. % Sometimes we say $N(s)$\emph{ satisfies condition (s)}.

\item $(c)$: $N(s)$ is\emph{ closed under complements}, if $X\in N(s)$ implies $S\backslash X\in N(s)$.\footnote{The property $(c)$ provides a new perspective for $\mathcal{L}(\nabla)$, see~\cite{Fan:2018} for details.}
% Sometimes we say $N(s)$\emph{ satisfies condition (c)}.

\item $(d)$: $X\in N(s)$ implies $S\backslash X\notin N(s)$.

\item $(t)$:  $X\in N(s)$ implies $s\in X$.

\item  $(b)$: $s\in X$ implies $\{u\in S\mid S\backslash X\notin N(u)\}\in N(s)$.

\item $(4)$:  $X\in N(s)$ implies $\{u\in S \mid X\in N(u)\}\in N(s)$.

\item  $(5)$:  $X\notin N(s)$ implies $\{u\in S \mid X\notin N(u)\}\in N(s)$.
\end{itemize}

The function $N$ possesses such a property, if for all $s\in S$, $N(s)$ has the property. $\mathcal{F}$ (and $\M$) has a property, if $N$ has. In particular, we say that $\mathcal{F}$ (and $\M$) is {\em monotone}, if $N$ has $(s)$. $\mathcal{F}$ (and $\M$) is a {\em quasi-filter}, if $N$ has $(i)$ and $(s)$; $\mathcal{F}$ (and $\M$) is a {\em filter}, if $N$ has also $(n)$.

Also, in what follows, we will use $\mathbb{C}_n$ to denote the class of $(n)$-models, and similarly for $\mathbb{C}_r$, etc. We use $\mathbb{C}_{\text{all}}$ for the class of all neighborhood models.
\end{definition}

\begin{definition}[Semantics] Let $\M=\lr{S,N,V}$ be a model. Given a pointed model $(\M,s)$, the truth condition of formulas is defined recursively as follows:
\[\begin{array}{|lll|}
    \hline
    \M,s\vDash p & \iff & s\in V(p)\\
    \M,s\vDash\neg\phi & \iff & \M,s\nvDash\phi\\
    \M,s\vDash\phi\land\psi&\iff &\M,s\vDash\phi\text{ and }\M,s\vDash\psi\\
    \M,s\vDash\Box\phi & \iff  & \phi^\M\in N(s) \\
    %\M,s\vDash\Diamond\phi & \iff & S\backslash\phi^\M\notin N(s)\\
    \M,s\vDash\nabla\phi &\iff & \phi^\M\notin N(s)\text{ and }S\backslash\phi^\M\notin N(s) \\
    \M,s\vDash\bullet\phi & \iff & \M,s\vDash\phi\text{ and }\phi^\M\notin N(s)\\
    \hline
\end{array}\]
where $\phi^\M$ denotes the {\em truth set} of $\phi$ in $\M$, in symbols, $\phi^\M=\{s\in S\mid \M,s\vDash\phi\}$; given a set $X\subseteq S$, $S\backslash X$ denotes the complement of $X$ with respect to $S$.

We say that $\phi$ is {\em true} in $(\M,s)$, if $\M,s\vDash\phi$; we say that $\phi$ is valid on a model $\M$, notation: $\M\vDash\phi$, if for all $s$ in $\M$, we have $\M,s\vDash\phi$; we say that $\phi$ is valid on a frame $\mathcal{F}$, notation: $\mathcal{F}\vDash\phi$, if for all $\M$ based on $\mathcal{F}$, we have $\M\vDash\phi$; we say that $\phi$ is valid over a class $\mathbb{F}$ of frames, notation: $\mathbb{F}\vDash\phi$, if for all $\mathcal{F}$ in $\mathbb{F}$, we have $\mathcal{F}\vDash\phi$; we say that $\phi$ is satisfiable over the class $\mathbb{F}$, if $\mathbb{F}\nvDash\neg\phi$. Similar notions go to a set of formulas.
\end{definition}

For the sake of reference, we also list the semantics of the aforementioned defined modalities as follows:
\[
\begin{array}{lll}
    \M,s\vDash\Diamond\phi & \iff & S\backslash\phi^\M\notin N(s)\\
    %\M,s\vDash\Box\phi & \iff  & \phi^\M\in N(s) \\
    \M,s\vDash\Delta\phi & \iff & \phi^\M\in N(s)\text{ or }S\backslash\phi^\M\in N(s)\\
    \M,s\vDash\circ\phi & \iff & \M,s\vDash\phi\text{ implies }\phi^\M\in N(s).\\
     & 
\end{array}
\]

\section{Expressivity}\label{sec.expressivity}

In this section, we compare the relative expressivity of $\mathcal{L}(\nabla,\bullet)$ and other languages introduced before, over various classes of neighborhood models. Some expressivity results over the class of relational models have been obtained in~\cite{Fan:2019} and~\cite{Fan:2021}.

To make our presentation self-contained, we introduce some necessary technical terms.
\begin{definition}\label{def.expressivity} Let $\mathcal{L}_1$ and $\mathcal{L}_2$ be two languages that are interpreted on the same class of models $\mathbb{C}$, where $\mathbb{C}$ ranges over classes of models  which are models for $\mathcal{L}_1$ and for $\mathcal{L}_2$.
\begin{itemize}
\item $\mathcal{L}_2$ is {\em at least as expressive as $\mathcal{L}_1$ over $\mathbb{C}$}, notation: $\mathcal{L}_1\preceq \mathcal{L}_2[\mathbb{C}]$, if for all $\phi\in\mathcal{L}_1$, there exists $\psi\in\mathcal{L}_2$ such that for all $\M\in\mathbb{C}$ and all $s$ in $\M$, we have that $\M,s\vDash\phi$ iff $\M,s\vDash\psi$.
\item $\mathcal{L}_1$ and $\mathcal{L}_2$ are {\em equally expressive over $\mathbb{C}$}, notation: $\mathcal{L}_1\equiv \mathcal{L}_2[\mathbb{C}]$, if $\mathcal{L}_1\preceq \mathcal{L}_2[\mathbb{C}]$ and $\mathcal{L}_2\preceq \mathcal{L}_1[\mathbb{C}]$.
\item $\mathcal{L}_1$ is {\em less expressive than $\mathcal{L}_2$ over $\mathbb{C}$}, notation: $\mathcal{L}_1\prec \mathcal{L}_2[\mathbb{C}]$, if $\mathcal{L}_1\preceq \mathcal{L}_2[\mathbb{C}]$ but $\mathcal{L}_2\not\preceq \mathcal{L}_1[\mathbb{C}]$.
\item $\mathcal{L}_1$ and $\mathcal{L}_2$ are incomparable in expressivity over $\mathbb{C}$, notation: $\mathcal{L}_1\asymp\mathcal{L}_2[\mathbb{C}]$, if $\mathcal{L}_1\not\preceq\mathcal{L}_2[\mathbb{C}]$ and $\mathcal{L}_2\not\preceq\mathcal{L}_1[\mathbb{C}]$.
\end{itemize}
\end{definition}

It turns out that over the class of $(c)$-models and the class of $(t)$-models, $\mathcal{L}(\nabla)$ is at least as expressive as $\mathcal{L}(\bullet)$ (Prop.~\ref{prop.exp-nabla-bullet-c} and Prop.~\ref{prop.exp-nabla-bullet-t}), whereas $\mathcal{L}(\nabla)$ is {\em not} at least as expressive as $\mathcal{L}(\bullet)$ over the class of models possessing either of other eight neighborhood properties (Prop.~\ref{prop.exp-nabla-bullet-risd}-Prop.~\ref{prop.exp-nabla-bullet-45}).

\begin{proposition}\label{prop.exp-nabla-bullet-risd}
$\mathcal{L}(\bullet)\not\preceq\mathcal{L}(\nabla)[\mathbb{C}]$, where $\mathbb{C}\in\{\mathbb{C}_\text{all},\mathbb{C}_r,\mathbb{C}_i,\mathbb{C}_s,\mathbb{C}_d\}$.
%is either the class of all models, or the class of models satisfying $(r)$ or $(i)$ or $(s)$ or $(d)$.
%$\mathcal{L}(\nabla)$ is not at least as expressive as $\mathcal{L}(\bullet)$ over the class of all models, models satisfying $(r)$ or $(i)$ or $(s)$ or $(d)$.
\end{proposition}

\begin{proof}
Consider the following models, which comes from~\cite[Prop.~2]{FanvD:neighborhood}. An arrow from a state $x$ to a set $X$ means that $X$ is a neighborhood of $x$ (Idem for other arrows).
$$
%\hspace{-1cm}
\xymatrix@C-10pt@R-10pt{\{t\}&&\{s,t\}\\
&s:p\ar[ul]\ar[ur]&t:\neg p \\
&\M&}
\qquad
\qquad
\xymatrix@L-10pt@C-10pt@R-10pt{\{t'\}&&\{s',t'\}\\
&s':p\ar[ul]\ar[ur]& t':p\\
&\M'&}
$$

It has been shown in~\cite[Prop.~2]{FanvD:neighborhood} that both $\M$ and $\M'$ satisfy $(r)$, $(i)$, $(s)$ and $(d)$, and $(\M,s)$ and $(\M',s')$ cannot be distinguished by $\mathcal{L}(\nabla)$.

However, both pointed models can be distinguished by an $\mathcal{L}(\bullet)$. To see this, note that $p^\M=\{s\}$ and $\{s\}\notin N(s)$, and thus $\M,s\vDash\bullet p$, whereas $\M',s'\nvDash \bullet p$, as $p^{\M'}=\{s',t'\}\in N'(s')$.
\end{proof}

\begin{proposition}\label{prop.exp-nabla-bullet-nb}
$\mathcal{L}(\bullet)\not\preceq\mathcal{L}(\nabla)[\mathbb{C}]$, where $\mathbb{C}\in\{\mathbb{C}_n,\mathbb{C}_b\}$.
%$\mathcal{L}(\nabla)$ is not at least as expressive as $\mathcal{L}(\bullet)$ over the class of models satisfying $(n)$ or $(b)$.
\end{proposition}

\begin{proof}
Consider the following models, which comes from~\cite[Prop.~3]{FanvD:neighborhood}:
$$
%\hspace{-1cm}
\xymatrix@L-10pt@C-5pt@R-10pt{&\emptyset&\{s\}\\
s:p\ar[ur]\ar[dr]\ar[r]&\{s,t\}&t:\neg p\ar[ul]\ar[u]\ar[dl]\ar[l]\\
&\{t\}&\\
&\M&}
\qquad
\qquad
\xymatrix@L-10pt@C-5pt@R-10pt{&\emptyset&\{s'\}\\
s':p\ar[ur]\ar[r]\ar[dr]&\{s',t'\}& t':p\ar[ul]\ar[u]\ar[dl]\ar[l]\\
&\{t'\}&\\
&\M'&}
$$

It has been shown in~\cite[Prop.~3]{FanvD:neighborhood} that both $\M$ and $\M'$ satisfy $(n)$ and $(b)$, and $(\M,s)$ and $(\M',s')$ cannot be distinguished by $\mathcal{L}(\nabla)$.

However, both pointed models can be distinguished by an $\mathcal{L}(\bullet)$. To see this, note that $p^\M=\{s\}$ and $\{s\}\notin N(s)$, and thus $\M,s\vDash\bullet p$, whereas $\M',s'\nvDash \bullet p$, as $p^{\M'}=\{s',t'\}\in N'(s')$.
\end{proof}

\begin{proposition}\label{prop.exp-nabla-bullet-45}
$\mathcal{L}(\bullet)\not\preceq\mathcal{L}(\nabla)[\mathbb{C}]$, where $\mathbb{C}\in\{\mathbb{C}_4,\mathbb{C}_5\}$.
%$\mathcal{L}(\nabla)$ is not at least as expressive as $\mathcal{L}(\bullet)$ over the class of models satisfying $(4)$ or $(5)$.
\end{proposition}

\begin{proof}
Consider the following models, which is a revision of the figures in~\cite[Prop.~4]{FanvD:neighborhood}:
$$
\hspace{-1cm}
\xymatrix@L-10pt@C-18pt@R-10pt{\emptyset&\{s\}&\{t\}&\{s,t\}&&&\{s',t'\}&\{s'\}&\{t'\}&\emptyset\\
&s:\neg p\ar[ul]\ar[u]&t:\neg p\ar[u]\ar[ur] &&&&&s':\neg p\ar[ul]\ar[u]& t':\neg p\ar[u]\ar[ur]&\\
&\M&&&&&&\M'&&}$$

Firstly, $\M$ and $\M'$ satisfy $(4)$ and $(5)$. In what follows we only show the claim for $\M$; the proof for the case $\M'$ is analogous.
\begin{itemize}
\item[-] For $(4)$: Suppose that $X\in N(s)$. Then $X=\emptyset$ or $X=\{s\}$. Notice that $\{u\mid X\in N(u)\}=\{s\}\in N(s)$. Similarly, we can demonstrate that $(4)$ holds for $N(t)$.
\item[-] For $(5)$: Assume that $X\notin N(s)$. Then $X=\{t\}$ or $X=\{s,t\}$. Notice that $\{u\mid X\notin N(u)\}=\{s\}\in N(s)$. A similar argument goes for $N(t)$.
\end{itemize}    

Secondly, $(\M,s)$ and $(\M',s')$ cannot be distinguished by $\mathcal{L}(\nabla)$, that is to say, for all $\phi\in\mathcal{L}(\nabla)$, we have that $\M,s\vDash\phi$ iff $\M',s'\vDash\phi$. The proof goes by induction on $\phi$, where the only nontrivial case is $\nabla\phi$.
By semantics, we have the following equivalences:
\[
\begin{array}{ll}
       & \M,s\vDash\nabla\phi  \\
    \iff & \phi^\M\notin N(s)\text{ and }(\neg\phi)^\M\notin N(s) \\
    \iff & \phi^\M\notin \{\emptyset,\{s\}\}\text{ and }(\neg\phi)^\M\notin\{\emptyset,\{s\}\}\\
    \iff & \phi^\M\neq \emptyset\text{ and }\phi^\M\neq \{s\}\text{ and }(\neg\phi)^\M\neq \emptyset\text{ and }(\neg\phi)^\M\neq \{s\}\\
    \iff & \phi^\M\neq \emptyset\text{ and }\phi^\M\neq \{s\}\text{ and }\phi^\M\neq \{s,t\}\text{ and }\phi^\M\neq \{t\}\\
    \iff & \text{false}\\
\end{array}
\]
\[
\begin{array}{ll}
       & \M',s'\vDash\nabla\phi  \\
    \iff & \phi^{\M'}\notin N'(s')\text{ and }(\neg\phi)^{\M'}\notin N'(s') \\
    \iff & \phi^{\M'}\notin \{\{s',t'\},\{s'\}\}\text{ and }(\neg\phi)^{\M'}\notin\{\{s',t'\},\{s'\}\}\\
    \iff & \phi^{\M'}\neq \{s',t'\}\text{ and }\phi^{\M'}\neq \{s'\}\text{ and }(\neg\phi)^{\M'}\neq \{s',t'\}\text{ and }(\neg\phi)^{\M'}\neq \{s'\}\\
    \iff & \phi^{\M'}\neq \{s',t'\}\text{ and }\phi^{\M'}\neq \{s'\}\text{ and }\phi^{\M'}\neq \emptyset\text{ and }\phi^{\M'}\neq \{t'\}\\
    \iff & \text{false}\\
\end{array}
\]

In either case, the penultimate line of the proof merely states that $\phi$ cannot be interpreted on the related model: its denotation is {\em not} one of all possible subsets of the domain. We conclude that $\M,s\vDash\nabla\phi$ iff $\M',s'\vDash\nabla\phi$.

Finally, we show that $(\M,s)$ and $(\M',s')$ can be distinguished by $\mathcal{L}(\bullet)$. To see this, note that $(\neg p)^\M=\{s,t\}\notin N(s)$, and thus $\M,s\vDash\bullet\neg p$. However, since $(\neg p)^{\M'}=\{s',t'\}\in N'(s')$, we have $\M,s\nvDash\bullet\neg p$.
\end{proof}

\begin{remark}\label{remark}
The reader may ask whether the figure in~\cite[Prop.~4]{FanvD:neighborhood} (as below) applies to the above proposition.
$$
\hspace{-1cm}
\xymatrix@L-10pt@C-18pt@R-10pt{\emptyset&\{s,t\}&\{s\}&\{t\}&&&\{s',t'\}&\{s'\}&\{t'\}&\emptyset\\
&s:\neg p\ar[ul]\ar[u]\ar[ur]&&t:\neg p\ar[u] &&&&s':\neg p\ar[ul]\ar[u]& t':\neg p\ar[u]\ar[ur]&\\
&\M&&&&&&\M'&&}$$
The answer is negative. This is because the pointed models $(\M,s)$ and $(\M',s')$ in this figure cannot be distinguished by $\mathcal{L}(\bullet)$ either. To see this, note that $\M,s\vDash\bullet\phi$ iff $\M,s\vDash\phi$ and $\phi^\M\notin N(s)$, which by the construction of $N(s)$ implies that $s\in\phi^\M$ and $\phi^\M\neq \{s\}$ and $\phi^\M\neq \{s,t\}$, which is impossible. It then follows that $\M,s\nvDash\bullet\phi$. A similar argument can show that $\M',s'\nvDash\bullet\phi$. Therefore, $\M,s\vDash\bullet\phi$ iff $\M',s'\vDash\bullet\phi$.
\end{remark}

\begin{proposition}\label{prop.exp-nabla-bullet-c}
$\mathcal{L}(\bullet)\preceq\mathcal{L}(\nabla)[\mathbb{C}_c]$.
%$\mathcal{L}(\nabla)$ is at least as expressive as $\mathcal{L}(\bullet)$ over the class of $(c)$-models.
\end{proposition}

\begin{proof}
It suffices to show that $\bullet\phi\lra (\phi\land\nabla\phi)$ is valid over the class of $(c)$-models. Let $\M=\lr{S,N,V}$ be a $(c)$-model and $s\in S$. Suppose that $\M,s\vDash\bullet\phi$, it remains only to prove that $\M,s\vDash \phi\land\nabla\phi$. By supposition, we have $\M,s\vDash\phi$ and $\phi^{\M}\notin N(s)$. We have also $S\backslash\phi^{\M}\notin N(s)$: otherwise, by $(c)$, $S\backslash(S\backslash\phi^{\M})\in N(s)$, that is, $\phi^{\M}\in N(s)$: a contradiction. Thus $\M,s\vDash\nabla\phi$, and therefore $\M,s\vDash\phi\land\nabla\phi$. The converse is clear from the semantics.
%By~\cite[Prop.~6]{FanvD:neighborhood}, over the class of $(c)$-models, $\mathcal{L}(\nabla)$ is equally expressive as $\mathcal{L}(\Diamond)$. Moreover, by semantics, one may easily show that $\vDash\bullet\phi\lra (\phi\land\Diamond\neg\phi)$, thus $\mathcal{L}(\Diamond)$ is at least as expressive as $\mathcal{L}(\bullet)$ on any class of models, and thus also on the class of $(c)$-models. Therefore, $\mathcal{L}(\nabla)$ is at least as expressive as $\mathcal{L}(\bullet)$ over the class of $(c)$-models.
\end{proof}

\begin{proposition}\label{prop.exp-nabla-bullet-t}
$\mathcal{L}(\bullet)\preceq\mathcal{L}(\nabla)[\mathbb{C}_t]$.
%$\mathcal{L}(\nabla)$ is at least as expressive as $\mathcal{L}(\bullet)$ over the class of $(t)$-models.
\end{proposition}

\begin{proof}
It suffices to show that $\bullet\phi\lra (\phi\land\nabla\phi)$ over the class of $(t)$-models. The proof is almost the same as that in Prop.~\ref{prop.exp-nabla-bullet-c}, except that $S\backslash\phi^{\M}\notin N(s)$ (that is, $(\neg\phi)^{\M}\notin N(s)$) is obtained from $\M,s\vDash\phi$ and the property $(t)$.
%By~\cite[Prop.~5]{FanvD:neighborhood}, over the class of $(t)$-models, $\mathcal{L}(\nabla)$ is equally expressive as $\mathcal{L}(\Diamond)$. From Prop.~\ref{prop.exp-nabla-bullet-c}, we have seen that $\mathcal{L}(\Diamond)$ is at least as expressive as $\mathcal{L}(\bullet)$ on any class of models, thus also on the class of $(t)$-models. Therefore, $\mathcal{L}(\nabla)$ is at least as expressive as $\mathcal{L}(\bullet)$ over the class of $(t)$-models.
\end{proof}

Conversely, on the class of $(c)$-models and the class of $(t)$-models, $\mathcal{L}(\bullet)$ is at least as expressive as $\mathcal{L}(\nabla)$ (Prop.~\ref{prop.exp-bullet-nabla-c} and Prop.~\ref{prop.exp-bullet-nabla-t}), whereas on the class of models possessing either of other eight neighborhood properties, $\mathcal{L}(\bullet)$ is {\em not} at least as expressive as $\mathcal{L}(\nabla)$ (Prop.~\ref{prop.exp-bullet-nabla-nrisdb}-Prop.~\ref{prop.exp-bullet-nabla-5}). As a corollary, on the class of $(c)$-models and the class of $(t)$-models, $\mathcal{L}(\nabla)$, $\mathcal{L}(\bullet)$, and $\mathcal{L}(\nabla,\bullet)$ are equally expressive, whereas over the class of models possessing the eight neighborhood properties in question, $\mathcal{L}(\nabla)$ and $\mathcal{L}(\bullet)$ are both less expressive than $\mathcal{L}(\nabla,\bullet)$ (Coro.~\ref{coro.exp-nabla-bullet-nablabullet}).

\begin{proposition}\label{prop.exp-bullet-nabla-nrisdb}
%On the class of all models, the $(m)$-models, the $(c)$-models, the $(n)$-models, the $(r)$-models, $\mathcal{L}(\bullet)$ is not at least as expressive as $\mathcal{L}(\nabla)$.
$\mathcal{L}(\nabla)\not\preceq\mathcal{L}(\bullet)[\mathbb{C}]$, where $\mathbb{C}\in\{\mathbb{C}_\text{all},\mathbb{C}_n,\mathbb{C}_r,\mathbb{C}_i,\mathbb{C}_s,\mathbb{C}_d,\mathbb{C}_b\}$.
%$\mathcal{L}(\bullet)$ is not at least as expressive as $\mathcal{L}(\nabla)$ over the class of all neighborhood models, the class of neighborhood models satisfying $(n)$ or $(r)$ or $(i)$ or $(s)$ or $(d)$ or $(b)$.
\end{proposition}

\begin{proof}
Consider the following models:
$$
\xymatrix@L-5pt@C-9pt@R-5pt{&&&&&&&\{t'\}\\
\M&s:\neg p\ar[r]&\{s,t\}&t:p\ar[l]&\M'&s':\neg p\ar[r]\ar[urr]&\{s',t'\}&t':p\ar[l]}
$$
It is straightforward to check that both $\M$ and $\M'$ satisfy $(n)$, $(r)$, $(i)$, $(s)$, and $(d)$. In what follows, we show that $\M$ and $\M'$ both have the property $(b)$.
\begin{itemize}
    \item[-] For $\M$: suppose that $s\in X$. Then $X=\{s\}$ or $X=\{s,t\}$. This implies that $\{u\mid S\backslash X\notin N(u)\}=\{s,t\}\in N(s)$. Similarly, we can show that $(b)$ holds for $N(t)$.
    \item[-] For $\M'$: assume that $s'\in X$. Then $X=\{s'\}$ or $X=\{s',t'\}$. If $X=\{s'\}$, then $\{u\mid S'\backslash X\notin N'(u)\}=\{t'\}\in N'(s')$; if $X=\{s',t'\}$, then $\{u\mid S'\backslash X\notin N'(u)\}=\{s',t'\}\in N'(s')$. Now assume that $t'\in X$. Then $X=\{t'\}$ or $X=\{s',t'\}$. If $X=\{t'\}$, then $\{u\mid S'\backslash X\notin N'(u)\}=\{s',t'\}\in N'(t')$; if $X=\{s',t'\}$, we can also show that $\{u\mid S'\backslash X\notin N'(u)\}=\{s',t'\}\in N'(t')$.
\end{itemize}

Moreover, $(\M,s)$ and $(\M',s')$ cannot be distinguished by $\mathcal{L}(\bullet)$. Here we use the notion of $\bullet$-morphisms introduced in~\cite[Def.~4.1]{Fan:2020neighborhood}.\footnote{Recall that the notion of $\bullet$-morphisms is defined as follows. Let $\M=\lr{S,N,V}$ and $\M'=\lr{S',N',V'}$ be neighborhood models. A function $f:S\to S'$ is a $\bullet$-morphism from $\M$ to $\M'$, if for all $s\in S$, 
\begin{itemize}
    \item[(Var)] $s\in V(p)$ iff $f(s)\in V'(p)$ for all $p\in\BP$,
    \item[($\bullet$-Mor)] for all $X'\subseteq S'$, $[s\in f^{-1}[X']\text{ and }f^{-1}[X']\notin N(s)]\Longleftrightarrow[f(s)\in X'\text{ and }X'\notin N'(f(s))]$.
\end{itemize}
It is then demonstrated in~\cite[Prop.~4.1]{Fan:2020neighborhood} that the formulas of $\mathcal{L}(\bullet)$ are invariant under $\bullet$-morphisms. In details, let $\M$ and $\M'$ be neighborhood models, and let $f$ be a $\bullet$-morphism from $\M$ to $\M'$. Then for all $s\in S$, for all $\phi\in\mathcal{L}(\bullet)$, we have that $\M,s\vDash\phi$ iff $\M',f(s)\vDash\phi$.\label{fn.morphisms}}
Define a function $f:S\to S'$ such that $f(s)=s'$ and $f(t)=t'$. We prove that $f$ is a $\bullet$-morphism from $\M$ to $\M'$. The condition (Var) follows directly from the valuations. For the condition ($\bullet$-Mor), we first prove that it holds for $s$: assume that $s\in f^{-1}[X']$ and $f^{-1}[X']\notin N(s)$, then it must be that $X'=\{s'\}$. Then we have $f(s)=s'\in X'$ and $X'\notin N'(f(s))$. The converse is similar. In a similar way, we can show that ($\bullet$-Mor) also holds for $t$. Then by~\cite[Prop.~4.1]{Fan:2020neighborhood} (see also fn.~\ref{fn.morphisms}), we have $\M,s\vDash\phi$ iff $\M',s'\vDash\phi$ for all $\phi\in\mathcal{L}(\bullet)$.

However, these pointed models can be distinguished by $\mathcal{L}(\nabla)$. This is because $\M,s\vDash\nabla p$ (as $p^\M=\{t\}\notin N(s)$ and $(\neg p)^\M=\{s\}\notin N(s)$) and $\M',s'\nvDash\nabla p$ (as $p^{\M'}=\{t'\}\in N'(s')$).
\end{proof}

\weg{\begin{proof}
Consider the following models, which comes from~\cite[Prop.~3.1]{Fan:2020neighborhood}. The only difference is $N'(s)=N(s)\cup\{\{t\}\}$.
$$
\xymatrix@L-5pt@C-9pt@R-5pt{&&&&&&&\{t\}\\
\M&s:\neg p\ar[r]&\{s,t\}&t:p\ar[l]&\M'&s:\neg p\ar[r]\ar[urr]&\{s,t\}&t:p\ar[l]}
$$

It has been shown in~\cite[Prop.~3.1]{Fan:2020neighborhood} that both $\M$ and $\M'$ satisfy $(n)$ and $(r)$ and  $(i)$ and $(s)$\footnote{In~\cite{Fan:2020neighborhood}, $(i)$ and $(s)$ are, respectively, called $(c)$ and $(m)$.}, and $(\M,s)$ and $(\M',s)$ cannot be distinguished by $\mathcal{L}(\bullet)$. Also, one can see that $S=\{s,t\}\in N(s)$ but $\emptyset\notin N(s)$, $\{s,t\}\in N(t)$ but $\emptyset\notin N(t)$, thus $\M$ has $(d)$. Similarly, we can show that $\M'$ also has $(d)$. In what follows, we show that $\M$ and $\M'$ both have the property $(b)$.
\begin{itemize}
    \item[-] For $\M$: suppose that $s\in X$. Then $X=\{s\}$ or $X=\{s,t\}$. This implies that $\{u\mid S\backslash X\notin N(u)\}=\{s,t\}\in N(s)$. Similarly, we can show that $(b)$ holds for $N(t)$.
    \item[-] For $\M'$: assume that $s\in X$. Then $X=\{s\}$ or $X=\{s,t\}$. If $X=\{s\}$, then $\{u\mid S\backslash X\notin N'(u)\}=\{t\}\in N'(s)$; if $X=\{s,t\}$, the proof is as in the case for $\M$. Now assume that $t\in X$. Then $X=\{t\}$ or $X=\{s,t\}$. If $X=\{t\}$, then $\{u\mid S\backslash X\notin N'(u)\}=\{s,t\}\in N'(t)$; if $X=\{s,t\}$, we can also show that $\{u\mid S\backslash X\notin N'(u)\}=\{s,t\}\in N'(t)$.
\end{itemize}

However, these pointed models can be distinguished by $\mathcal{L}(\nabla)$. This is because $\M,s\vDash\nabla p$ (as $p^\M=\{t\}\notin N(s)$ and $(\neg p)^\M=\{s\}\notin N(s)$) and $\M',s\nvDash\nabla p$ (as $p^{\M'}=\{t\}\in N'(s)$).
\end{proof}}

\begin{proposition}\label{prop.exp-bullet-nabla-4}
$\mathcal{L}(\nabla)\not\preceq \mathcal{L}(\bullet)[\mathbb{C}_4]$.
%$\mathcal{L}(\bullet)$ is not at least as expressive as $\mathcal{L}(\nabla)$ over the class of $(4)$-models.% or $(5)$.
\end{proposition}

\begin{proof}
Consider the following models:
$$
\xymatrix@L-5pt@C-9pt@R-5pt{&&&\{t\}&&&&\{t'\}\\
\M&s:\neg p\ar[r]&\{s,t\}&t:p\ar[l]\ar[u]&\M'&s':\neg p\ar[r]\ar[urr]&\{s',t'\}&t':p\ar[l]\ar[u]}
$$

Firstly, both $\M$ and $\M'$ have $(4)$.% To begin with, consider $\M$.
\begin{itemize}
    \item[-] For $\M$: Suppose that $X\in N(s)$.  Then $X=\{s,t\}$, and so $\{u\mid X\in N(u)\}=\{s,t\}\in N(s)$.
    Now assume that $X\in N(t)$. Then $X=\{t\}$ or $X=\{s,t\}$. If $X=\{t\}$, then $\{u\mid X\in N(u)\}=\{t\}\in N(t)$; if $X=\{s,t\}$, then $\{u\mid X\in N(u)\}=\{s,t\}\in N(t)$.
    %\item[-] For $(5)$: Assume that $X\notin N(s)$. Then $X=\emptyset$ or $X=\{t\}$. If $X=\emptyset$, then $\{u\mid X\notin N(u)\}=\{s,t\}\in N(s)$; if $X=\{t\}$, then $\{u\mid X\notin N(u)\}=\{s\}\in N(s)$. Similarly, we can show that $X\notin N(t)$ implies $\{u\mid X\notin N(u)\}\in N(t)$.
    \item[-] For $\M'$: Suppose that $X\in N'(s')$. Then $X=\{t'\}$ or $X=\{s',t'\}$. Either case implies that $\{u\mid X\in N'(u)\}=\{s',t'\}\in N'(s')$. Now assume that $X\in N'(t')$. Then $X=\{t'\}$ or $X=\{s',t'\}$. Again, either case implies that $\{u\mid X\in N'(u)\}=\{s',t'\}\in N'(t')$.
   % \item[-] For $(5)$: Suppose that $X\notin N(s)$. Then $X=\emptyset$, and so $\{u\mid X\notin N(u)\}=\{s,t\}\in N(s)$. Now assume that $X\notin N(t)$. Then $X=\emptyset$ or $X=\{s\}$. If $X=\emptyset$, then $\{u\mid X\notin N(u)\}=\{s,t\}\in N(t)$; if $X=\{s\}$, then $\{u\mid X\notin N(u)\}=\{t\}\in N(t)$.
\end{itemize}

Secondly, similar to the proof of the corresponding part in Prop.~\ref{prop.exp-bullet-nabla-nrisdb}, we can show that $(\M,s)$ and $(\M',s')$ cannot be distinguished by $\mathcal{L}(\bullet)$. 

It remains only to show that $(\M,s)$ and $(\M',s')$ can be distinguished by $\mathcal{L}(\nabla)$. The proof for this is analogous to that in Prop.~\ref{prop.exp-bullet-nabla-nrisdb}.
\end{proof}

\weg{\begin{proof}
Consider the following models, where the only difference is $N'(s)=N(s)\cup\{\{t\}\}$:
$$
\xymatrix@L-5pt@C-9pt@R-5pt{&&&\{t\}&&&&\{t\}\\
\M&s:\neg p\ar[r]&\{s,t\}&t:p\ar[l]\ar[u]&\M'&s:\neg p\ar[r]\ar[urr]&\{s,t\}&t:p\ar[l]\ar[u]}
$$

Firstly, both $\M$ and $\M'$ have $(4)$.% To begin with, consider $\M$.
\begin{itemize}
    \item[-] For $\M$: Suppose that $X\in N(s)$.  Then $X=\{s,t\}$, and so $\{u\mid X\in N(u)\}=\{s,t\}\in N(s)$.
    Now assume that $X\in N(t)$. Then $X=\{t\}$ or $X=\{s,t\}$. If $X=\{t\}$, then $\{u\mid X\in N(u)\}=\{t\}\in N(t)$; if $X=\{s,t\}$, then $\{u\mid X\in N(u)\}=\{s,t\}\in N(t)$.
    %\item[-] For $(5)$: Assume that $X\notin N(s)$. Then $X=\emptyset$ or $X=\{t\}$. If $X=\emptyset$, then $\{u\mid X\notin N(u)\}=\{s,t\}\in N(s)$; if $X=\{t\}$, then $\{u\mid X\notin N(u)\}=\{s\}\in N(s)$. Similarly, we can show that $X\notin N(t)$ implies $\{u\mid X\notin N(u)\}\in N(t)$.
    \item[-] For $\M'$: Suppose that $X\in N'(s)$. Then $X=\{t\}$ or $X=\{s,t\}$. If $X=\{s,t\}$, then similar to the proof for $\M$, we can show that $\{u\mid X\in N'(u)\}\in N'(s)$; if $X=\{t\}$, then $\{u\mid X\in N'(u)\}=\{s,t\}\in N'(s)$. Now assume that $X\in N'(t)$. Then $X=\{t\}$ or $X=\{s,t\}$. This implies that $\{u\mid X\in N'(u)\}=\{s,t\}\in N'(t)$.
   % \item[-] For $(5)$: Suppose that $X\notin N(s)$. Then $X=\emptyset$, and so $\{u\mid X\notin N(u)\}=\{s,t\}\in N(s)$. Now assume that $X\notin N(t)$. Then $X=\emptyset$ or $X=\{s\}$. If $X=\emptyset$, then $\{u\mid X\notin N(u)\}=\{s,t\}\in N(t)$; if $X=\{s\}$, then $\{u\mid X\notin N(u)\}=\{t\}\in N(t)$.
\end{itemize}

Secondly, $(\M,s)$ and $(\M',s)$ cannot be distinguished by $\mathcal{L}(\bullet)$. For this, we show a stronger result that for all $\phi\in\mathcal{L}(\bullet)$, for all $x\in S$, $\M,x\vDash\phi$ iff $\M',x\vDash\phi$, that is, $\phi^\M=\phi^{\M'}$. As the two models differs only in the neighborhood of $s$, it suffices to show that $\M,s\vDash\phi$ iff $\M',s\vDash\phi$, that is, $s\in \phi^\M$ iff $s\in \phi^{\M'}$. The proof proceeds by induction on $\phi$, in which the nontrivial case is $\bullet\phi$.

Suppose that $\M,s\vDash\bullet\phi$, then $s\in \phi^\M$ and $\phi^\M\notin N(s)$. As $s\in \phi^\M$, we must have $\phi^\M\neq \{t\}$. Thus $\phi^\M\notin N(s)\cup\{\{t\}\}=N'(s)$. By induction hypothesis, we have $s\in \phi^{\M'}$ and $\phi^{\M'}\notin N'(s)$. Therefore, $\M',s\vDash\bullet\phi$.

Conversely, assume that $\M',s\vDash\bullet\phi$, then $s\in \phi^{\M'}$ and $\phi^{\M'}\notin N'(s)$. By induction hypothesis, the former implies that $s\in\phi^\M$; since $N(s)\subseteq N'(s)$, by induction hypothesis, the latter entails that $\phi^\M\notin N(s)$. Therefore, $\M,s\vDash\bullet\phi$.

It remains only to show that $(\M,s)$ and $(\M',s)$ can be distinguished by $\mathcal{L}(\nabla)$. The proof for this is analogous to that in Prop.~\ref{prop.exp-bullet-nabla-nrisdb}.
\end{proof}}

\begin{proposition}\label{prop.exp-bullet-nabla-5}
$\mathcal{L}(\nabla)\not\preceq \mathcal{L}(\bullet)[\mathbb{C}_5]$.
%$\mathcal{L}(\bullet)$ is not at least as expressive as $\mathcal{L}(\nabla)$ over the class of $(5)$-models.
\end{proposition}

\begin{proof}
Consider the following models:
\[
\xymatrix@L-5pt@C-9pt@R-5pt{
     & \{s\} &&\{t\}&& && \{s'\} &&\{t'\}\\
 \M    & s:p\ar[u]\ar[urr]&\emptyset&t:p\ar[u]\ar[ull]\ar[l]\ar[r]&\{s,t\}&&\M'    & s':p\ar[u]\ar[r]\ar[urr]&\emptyset&t':p\ar[u]\ar[ull]\ar[l]\ar[r]&\{s',t'\}\\}
\]

Firstly, both $\M$ and $\M'$ possess the property $(5)$. Since for all $X\subseteq S=\{s,t\}$, $X\in N(t)$, the property $(5)$ is possessed vacuously by $N(t)$ Similarly, $(5)$ is also possessed vacuously by $N'(t')$. It remains only to show that both $N(s)$ and $N'(s')$ have $(5)$.
\begin{itemize}
    \item[-] For $N(s)$: suppose that $X\notin N(s)$, then $X=\emptyset$ or $X=\{s,t\}$. Either case implies that $\{u\in S\mid X\notin N(u)\}=\{s\}\in N(s)$.
    \item[-] For $N'(s')$: assume that $X\notin N'(s')$, then $X=\{s',t'\}$. This follows that $\{u\in S'\mid X\notin N'(u)\}=\{s'\}\in N'(s')$.
\end{itemize}

Secondly, $(\M,s)$ and $(\M',s)$ cannot be distinguished by $\mathcal{L}(\bullet)$. Again, this can be shown as the corresponding part in Prop.~\ref{prop.exp-bullet-nabla-nrisdb}.

Finally, $(\M,s)$ and $(\M',s')$ can be distinguished by $\mathcal{L}(\nabla)$. On one hand, $p^\M=\{s,t\}\notin N(s)$ and $S\backslash p^\M=\emptyset\notin N(s)$, thus $\M,s\vDash\nabla p$. On the other hand, $S'\backslash p^{\M'}=\emptyset\in N'(s')$, thus $\M',s'\nvDash\nabla p$.
\end{proof}

\weg{\begin{proof}
Consider the following models, where the only difference is $N'(s)=N(s)\cup\{\emptyset\}$:
\[
\xymatrix@L-5pt@C-9pt@R-5pt{
     & \{s\} &&\{t\}&& && \{s\} &&\{t\}\\
 \M    & s:p\ar[u]\ar[urr]&\emptyset&t:p\ar[u]\ar[ull]\ar[l]\ar[r]&\{s,t\}&&\M'    & s:p\ar[u]\ar[r]\ar[urr]&\emptyset&t:p\ar[u]\ar[ull]\ar[l]\ar[r]&\{s,t\}\\}
\]

Firstly, both $\M$ and $\M'$ possess the property $(5)$. Since for all $X\subseteq S=\{s,t\}$, $X\in N(t)=N'(t)$, the property $(5)$ is possessed vacuously by $N(t)$ and $N'(t)$. It remains only to show that both $N(s)$ and $N'(s)$ have $(5)$.
\begin{itemize}
    \item[-] For $N(s)$: suppose that $X\notin N(s)$, then $X=\emptyset$ or $X=\{s,t\}$. Either case implies that $\{u\in S\mid X\notin N(u)\}=\{s\}\in N(s)$.
    \item[-] For $N'(s)$: assume that $X\notin N'(s)$, then $X=\{s,t\}$. This follows that $\{u\in S\mid X\notin N'(u)\}=\{s\}\in N'(s)$.
\end{itemize}

Secondly, $(\M,s)$ and $(\M',s)$ cannot be distinguished by $\mathcal{L}(\bullet)$. For this, we show a stronger result that for all $\phi\in\mathcal{L}(\bullet)$, for all $x\in S$, $\M,x\vDash\phi$ iff $\M',x\vDash\phi$, that is, $\phi^\M=\phi^{\M'}$. As the two models differ only in the neighborhood of $s$, it suffices to show that $\M,s\vDash\phi$ iff $\M',s\vDash\phi$, that is, $s\in \phi^\M$ iff $s\in \phi^{\M'}$. The proof proceeds by induction on $\phi$, in which the nontrivial case is $\bullet\phi$.

Suppose that $\M,s\vDash\bullet\phi$, then $s\in \phi^\M$ and $\phi^\M\notin N(s)$. As $s\in \phi^\M$, we must have $\phi^\M\neq \emptyset$. Thus $\phi^\M\notin N(s)\cup\{\emptyset\}=N'(s)$. By induction hypothesis, we have $s\in \phi^{\M'}$ and $\phi^{\M'}\notin N'(s)$. Therefore, $\M',s\vDash\bullet\phi$.

Conversely, assume that $\M',s\vDash\bullet\phi$, then $s\in \phi^{\M'}$ and $\phi^{\M'}\notin N'(s)$. By induction hypothesis, the former implies that $s\in\phi^\M$; since $N(s)\subseteq N'(s)$, by induction hypothesis, the latter entails that $\phi^\M\notin N(s)$. Therefore, $\M,s\vDash\bullet\phi$.

Finally, $(\M,s)$ and $(\M',s)$ can be distinguished by $\mathcal{L}(\nabla)$. On one hand, $p^\M=\{s,t\}\notin N(s)$ and $S\backslash p^\M=\emptyset\notin N(s)$, thus $\M,s\vDash\nabla p$. On the other hand, $S\backslash p^{\M'}=\emptyset\in N'(s)$, thus $\M',s\nvDash\nabla p$.
\end{proof}}

\begin{proposition}\label{prop.exp-bullet-nabla-c}
$\mathcal{L}(\nabla)\preceq \mathcal{L}(\bullet)[\mathbb{C}_c]$.
%$\mathcal{L}(\bullet)$ is at least as expressive as $\mathcal{L}(\nabla)$ over the class of $(c)$-models.
\end{proposition}

\begin{proof}
We claim that over the class of $(c)$-models, $\vDash\nabla\phi\lra \bullet\phi\vee{\bullet\neg}\phi$.

First, we show that $\vDash\nabla\phi\to \bullet\phi\vee{\bullet\neg}\phi$.

Let $\M=\lr{S,N,V}$ be any model and $s\in S$. Suppose that $\M,s\vDash\nabla\phi$. Then $\phi^\M\notin N(s)$ and $S\backslash\phi^\M\notin N(s)$, that is, $(\neg\phi)^\M\notin N(s)$. We have either $\M,s\vDash\phi$ or $\M,s\vDash\neg\phi$. If $\M,s\vDash\phi$, since $\phi^\M\notin N(s)$, we infer that $\M,s\vDash\bullet\phi$; if $\M,s\vDash\neg\phi$, since $(\neg\phi)^\M\notin N(s)$, we derive that $\M,s\vDash{\bullet\neg}\phi$. Therefore, $\M,s\vDash\bullet\phi\vee{\bullet\neg}\phi$. Since $(\M,s)$ is arbitrary, this establishes the validity of $\nabla\phi\to \bullet\phi\vee{\bullet\neg}\phi$.

Conversely, we prove that over the class of $(c)$-models, $\vDash \bullet\phi\vee{\bullet\neg}\phi\to \nabla\phi$.

Suppose that $\M=\lr{S,N,V}$ be a $(c)$-model and $s\in S$. Assume that $\M,s\vDash\bullet\phi\vee{\bullet\neg}\phi$. Then $\M,s\vDash \bullet\phi$ or $\M,s\vDash{\bullet\neg}\phi$. If $\M,s\vDash \bullet\phi$, then $\phi^\M\notin N(s)$. By $(c)$, we have $S\backslash \phi^\M\notin N(s)$, so $\M,s\vDash\nabla\phi$. If $\M,s\vDash{\bullet\neg}\phi$, then $(\neg\phi)^\M\notin N(s)$, namely $S\backslash\phi^\M\notin N(s)$. By $(c)$ again, we obtain $\phi^\M\notin N(s)$, and thus $\M,s\vDash\nabla\phi$. Therefore, $\M,s\vDash\nabla\phi$.
\end{proof}

\begin{proposition}\label{prop.exp-bullet-nabla-t}
$\mathcal{L}(\nabla)\preceq \mathcal{L}(\bullet)[\mathbb{C}_t]$.
%$\mathcal{L}(\bullet)$ is at least as expressive as $\mathcal{L}(\nabla)$ over the class of $(t)$-models.
%$\mathcal{L}(\nabla)$ and $\mathcal{L}(\bullet)$ are equally expressive over the class of $(t)$-models.
\end{proposition}

\begin{proof}
We claim that over the class of $(t)$-models, $\vDash\nabla\phi\lra \bullet\phi\vee\bullet\neg\phi$. The proof is almost the same as that in Prop.~\ref{prop.exp-bullet-nabla-c}, except that in the proof of the validity of $\bullet\phi\vee\bullet\neg\phi\to\nabla\phi$, $\M,s\vDash\nabla\phi$ is obtained as follows: if $\M,s\vDash\bullet\phi$, then $\M,s\vDash\phi$ and $\phi^{\M}\notin N(s)$, thus $\M,s\nvDash\neg\phi$, namely $s\notin (\neg\phi)^{\M}$, and then by $(t)$, we infer that $(\neg\phi)^{\M}\notin N(s)$, namely $S\backslash\phi^{\M}\notin N(s)$, so $\M,s\vDash\nabla\phi$; similarly, we can show that if $\M,s\vDash\bullet\neg\phi$ then $\M,s\vDash\nabla\phi$.
\end{proof}

\weg{By Props.~\ref{prop.exp-nabla-bullet-c}, \ref{prop.exp-nabla-bullet-t}, \ref{prop.exp-bullet-nabla-c}, \ref{prop.exp-bullet-nabla-t}, and the fact that $\mathcal{L}(\nabla)$ is equally expressive as $\mathcal{L}(\Box)$ over the class of $(c)$-models and over the class of $(t)$-models~\cite[Prop.~5, Prop.~6]{FanvD:neighborhood}, we have immediately the following result.

\begin{corollary}\label{coro.exp-nabla-bullet-diamond-equal}
$\mathcal{L}(\nabla)$, $\mathcal{L}(\bullet)$, and $\mathcal{L}(\Box)$ are equally expressive over the class of $(c)$-models and also over the class of $(t)$-models.
\end{corollary}}

With the above results in mind, we have the following result, which extends the expressivity results over Kripke models in~\cite{Fan:2019}.
\begin{corollary}\label{coro.exp-nabla-bullet-nablabullet}
Where $\mathbb{C}\in\{\mathbb{C}_\text{all},\mathbb{C}_n,\mathbb{C}_r,\mathbb{C}_i,\mathbb{C}_s,\mathbb{C}_d,\mathbb{C}_b,\mathbb{C}_4,\mathbb{C}_5\}$,
$\mathcal{L}(\nabla)\asymp\mathcal{L}(\bullet)[\mathbb{C}]$, and $\mathcal{L}\prec\mathbb{L}(\nabla,\bullet)[\mathbb{C}]$, where $\mathcal{L}\in\{\mathcal{L}(\nabla),\mathcal{L}(\bullet)\}$. Where $\mathbb{C}\in\{\mathbb{C}_c,\mathbb{C}_t\}$, $\mathcal{L}_1\equiv\mathcal{L}_2[\mathbb{C}]$, where $\mathcal{L}_1,\mathcal{L}_2\in\{\mathcal{L}(\nabla),\mathcal{L}(\bullet),\mathcal{L}(\nabla,\bullet)\}$.
%$\mathcal{L}(\nabla)$ and $\mathcal{L}(\bullet)$ are incomparable and both less expressive than $\mathcal{L}(\nabla,\bullet)$ over the class of all models, models satisfying $(n)$ or $(r)$ or $(i)$ or $(s)$ or $(d)$ or $(b)$ or $(4)$ or $(5)$, whereas all three logics are equally expressive over the class of $(c)$-models and over the class of $(t)$-models.
\end{corollary}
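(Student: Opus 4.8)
The plan is to assemble the corollary from the ten preceding propositions together with two elementary facts that I will record first: (i) the relation $\preceq$ is transitive (compose the witnessing translations), and (ii) both $\mathcal{L}(\nabla)$ and $\mathcal{L}(\bullet)$ are syntactic fragments of $\mathcal{L}(\nabla,\bullet)$, so $\mathcal{L}(\nabla)\preceq\mathcal{L}(\nabla,\bullet)[\mathbb{C}]$ and $\mathcal{L}(\bullet)\preceq\mathcal{L}(\nabla,\bullet)[\mathbb{C}]$ hold over every class $\mathbb{C}$ via the identity translation. I will also use a replacement-of-equivalents lemma: if $\phi\equiv\phi'$ over $\mathbb{C}$ then $\phi^\M=\phi'^\M$ for every $\M\in\mathbb{C}$, whence $\nabla\phi\equiv\nabla\phi'$ and $\bullet\phi\equiv\bullet\phi'$ over $\mathbb{C}$, since the truth conditions of $\nabla$ and $\bullet$ depend on their argument only through its truth set.

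For the first regime, let $\mathbb{C}\in\{\mathbb{C}_\text{all},\mathbb{C}_n,\mathbb{C}_r,\mathbb{C}_i,\mathbb{C}_s,\mathbb{C}_d,\mathbb{C}_b,\mathbb{C}_4,\mathbb{C}_5\}$. Incomparability $\mathcal{L}(\nabla)\asymp\mathcal{L}(\bullet)[\mathbb{C}]$ is immediate by pairing, for each such $\mathbb{C}$, the relevant non-reducibility in each direction: $\mathcal{L}(\bullet)\not\preceq\mathcal{L}(\nabla)[\mathbb{C}]$ from Prop.~\ref{prop.exp-nabla-bullet-risd}, Prop.~\ref{prop.exp-nabla-bullet-nb}, or Prop.~\ref{prop.exp-nabla-bullet-45} (according to which list contains $\mathbb{C}$), and $\mathcal{L}(\nabla)\not\preceq\mathcal{L}(\bullet)[\mathbb{C}]$ from Prop.~\ref{prop.exp-bullet-nabla-nrisdb}, Prop.~\ref{prop.exp-bullet-nabla-4}, or Prop.~\ref{prop.exp-bullet-nabla-5}. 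For the strictness $\mathcal{L}(\nabla)\prec\mathcal{L}(\nabla,\bullet)[\mathbb{C}]$, the $\preceq$ half is (ii); for the failure of the converse, suppose $\mathcal{L}(\nabla,\bullet)\preceq\mathcal{L}(\nabla)[\mathbb{C}]$ towards a contradiction. Since $\mathcal{L}(\bullet)\preceq\mathcal{L}(\nabla,\bullet)[\mathbb{C}]$ by (ii), transitivity (i) would give $\mathcal{L}(\bullet)\preceq\mathcal{L}(\nabla)[\mathbb{C}]$, contradicting the relevant one of Props.~\ref{prop.exp-nabla-bullet-risd},~\ref{prop.exp-nabla-bullet-nb},~\ref{prop.exp-nabla-bullet-45}. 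The argument for $\mathcal{L}(\bullet)\prec\mathcal{L}(\nabla,\bullet)[\mathbb{C}]$ is symmetric, using $\mathcal{L}(\nabla)\not\preceq\mathcal{L}(\bullet)[\mathbb{C}]$.

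For the second regime, $\mathbb{C}\in\{\mathbb{C}_c,\mathbb{C}_t\}$. From Prop.~\ref{prop.exp-nabla-bullet-c} and Prop.~\ref{prop.exp-bullet-nabla-c} (resp. Prop.~\ref{prop.exp-nabla-bullet-t} and Prop.~\ref{prop.exp-bullet-nabla-t}) I get $\mathcal{L}(\nabla)\equiv\mathcal{L}(\bullet)[\mathbb{C}]$ by definition of $\equiv$. It therefore suffices to add $\mathcal{L}(\nabla,\bullet)\equiv\mathcal{L}(\nabla)[\mathbb{C}]$, since $\equiv$ is an equivalence relation and will then collapse all three languages. The nontrivial half is $\mathcal{L}(\nabla,\bullet)\preceq\mathcal{L}(\nabla)[\mathbb{C}]$. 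Here I exploit that, over both $\mathbb{C}_c$ and $\mathbb{C}_t$, the semantics yields the \emph{uniform} validity $\bullet\psi\leftrightarrow(\psi\land\nabla\psi)$ for every formula $\psi$ (on $(c)$-models $\psi^\M\notin N(s)$ iff $S\backslash\psi^\M\notin N(s)$; on $(t)$-models truth of $\psi$ already forces $S\backslash\psi^\M\notin N(s)$), which is the content underlying Prop.~\ref{prop.exp-nabla-bullet-c} and Prop.~\ref{prop.exp-nabla-bullet-t}. I then define a translation $t\colon\mathcal{L}(\nabla,\bullet)\to\mathcal{L}(\nabla)$ commuting with the Booleans, with $t(\nabla\phi)=\nabla t(\phi)$ and $t(\bullet\phi)=t(\phi)\land\nabla t(\phi)$, and prove $\phi\equiv t(\phi)[\mathbb{C}]$ by induction on $\phi$: the Boolean and $\nabla$ cases use the induction hypothesis and the replacement lemma, while the $\bullet$ case chains $\bullet\phi\equiv\bullet t(\phi)\equiv t(\phi)\land\nabla t(\phi)$ via the replacement lemma and then the uniform validity.

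The only real subtlety — and the step I would treat most carefully — is this last compositional lifting: the preceding propositions reduce only the \emph{pure} fragments $\mathcal{L}(\bullet)$ and $\mathcal{L}(\nabla)$ to one another, whereas translating a bimodal formula with nested and mixed modalities requires replacing subformulas inside the scope of $\nabla$ and $\bullet$. This is exactly what the replacement-of-equivalents lemma licenses, and it is sound because the truth conditions of both operators factor through the truth set of their argument; the uniform validity of $\bullet\psi\leftrightarrow(\psi\land\nabla\psi)$, as opposed to a merely formula-by-formula reduction, is what makes the inductive $\bullet$-step go through. Everything else is bookkeeping over the nine classes and the transitivity of $\preceq$.
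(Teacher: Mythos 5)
Your proposal is correct and follows the route the paper intends: the corollary appears there without an explicit proof, as an immediate assembly of Props.~\ref{prop.exp-nabla-bullet-risd}--\ref{prop.exp-bullet-nabla-t} together with the trivial fragment inclusions and transitivity of $\preceq$, which is exactly your bookkeeping for the nine classes and both incomparability directions. The one step the paper leaves implicit --- lifting the pure-fragment reductions over $\mathbb{C}_c$ and $\mathbb{C}_t$ to all of $\mathcal{L}(\nabla,\bullet)$ --- you discharge correctly with the inductive translation $t(\bullet\phi)=t(\phi)\land\nabla t(\phi)$ and the replacement-of-equivalents lemma, and this is sound precisely because the paper's proofs of Props.~\ref{prop.exp-nabla-bullet-c} and~\ref{prop.exp-nabla-bullet-t} establish the validity of $\bullet\phi\lra(\phi\land\nabla\phi)$ schematically, for arbitrary formulas $\phi$, not merely for $\mathcal{L}(\bullet)$-formulas.
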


Moreover, over the class of $(c)$-models and the class of $(t)$-models, $\mathcal{L}(\nabla,\bullet)$ and $\mathcal{L}(\Box)$ are equally expressive (Prop.~\ref{prop.nablabullet-diamond-ct}), whereas over the class of models possessing either of other eight neighborhood properties except for $(d)$, $\mathcal{L}(\nabla,\bullet)$ is less expressive than $\mathcal{L}(\Box)$ (Prop.~\ref{prop.exp-nablabullet-diamond-ri45} and Prop.~\ref{prop.exp-nablabullet-diamond-nsb}).

\begin{proposition}\label{prop.exp-nablabullet-diamond-ri45}
$\mathcal{L}(\nabla,\bullet)\prec\mathcal{L}(\Box)[\mathbb{C}]$, where $\mathbb{C}\in \{\mathbb{C}_\text{all},\mathbb{C}_r,\mathbb{C}_i,\mathbb{C}_4,\mathbb{C}_5\}$.
%$\mathcal{L}(\nabla,\bullet)$ is less expressive than $\mathcal{L}(\Diamond)$ over the class of all neighborhood models, the class of neighborhood models satisfying $(r)$ or $(i)$ or $(4)$ or $(5)$.
\end{proposition}

\begin{proof}
Use Remark~\ref{remark} and~\cite[Prop.~4]{FanvD:neighborhood}.
\end{proof}

\begin{proposition}\label{prop.exp-nablabullet-diamond-nsb}
$\mathcal{L}(\nabla,\bullet)\prec\mathcal{L}(\Box)[\mathbb{C}]$, where $\mathbb{C}\in\{\mathbb{C}_n,\mathbb{C}_s,\mathbb{C}_b\}$.
%$\mathcal{L}(\nabla,\bullet)$ is less expressive than $\mathcal{L}(\Diamond)$ over the class of neighborhood models satisfying $(n)$ or $(s)$ or $(b)$.
\end{proposition}

\begin{proof}
Consider the following models $\M=\lr{S,N,V}$ and $\M'=\lr{S',N',V'}$, where $S=\{s,t\}$ and $S'=\{s',t'\}$.
$$
\hspace{-1cm}
\xymatrix@L-10pt@C-18pt@R-10pt{\emptyset&\{s,t\}&\{s\}&\{t\}&&&\{s',t'\}&\{s'\}&\{t'\}&\emptyset\\
&s:\neg p\ar[ul]\ar[u]\ar[ur]\ar[urr]&&t:\neg p\ar[u]\ar[ull] &&&&s':\neg p\ar[ul]\ar[u]& t':\neg p\ar[u]\ar[ur]\ar[ul]\ar[ull]&\\
&\M&&&&&&\M'&&}$$

It should be straightforward to check that both $\M$ and $\M'$ possess $(n)$ and $(s)$. Moreover, both models also possess $(b)$, shown as follows. Since for all $X\subseteq S$, we have $X\in N(s)$, one may easily see that $N(s)$ has $(b)$. Similarly, we can show that $N'(t')$ has $(b)$. Besides,
\begin{itemize}
    \item[-] $N(t)$ has $(b)$: suppose that $t\in X$, then $X=\{t\}$ or $X=\{s,t\}$. The first case implies $\{u\in S\mid S\backslash X\notin N(u)\}=\{u\in S\mid \{s\}\notin N(u)\}=\{t\}\in N(t)$, and the second case implies that $\{u\in S\mid S\backslash X\notin N(u)\}=\{u\in S\mid \emptyset\notin N(u)\}=\{t\}\in N(t)$, as desired.
    \item[-] $N'(s')$ has $(b)$: assume that $s'\in X$, then $X=\{s'\}$ or $X=\{s',t'\}$. The first case entails that $\{u\in S'\mid S'\backslash X\notin N'(u)\}=\{u\in S'\mid \{t'\}\notin N'(u)\}=\{s'\}\in N'(s')$, and the second case entails that $\{u\in S'\mid S'\backslash X\notin N'(u)\}=\{u\in S'\mid \emptyset\notin N'(u)\}=\{s'\}\in N'(s')$, as desired.
\end{itemize}

Next, we show $(\M,s)$ and $(\M',s')$ cannot be distinguished by $\mathcal{L}(\nabla,\bullet)$. That is, for all $\phi\in\mathcal{L}(\nabla,\bullet)$, we have $\M,s\vDash \phi$ iff $\M',s'\vDash\phi$. The proof proceeds by induction on $\phi$, where the nontrivial cases are $\nabla \phi$ and $\bullet\phi$. The proof for the case $\bullet\phi$ is shown as in Remark~\ref{remark}. For the case $\nabla\phi$, we have the following equivalences.
\[
\begin{array}{ll}
     & \M,s\vDash\nabla\phi \\
    \iff & \phi^\M\notin N(s)\text{ and }S\backslash\phi^\M\notin N(s)\\
    \iff & \phi^\M\neq \emptyset\text{ and }\phi^\M\neq \{s,t\}\text{ and }\phi^\M\neq \{s\}\text{ and }\phi^\M\neq \{t\}\\
    \iff & \text{false}\\
\end{array}
\]
\[
\begin{array}{ll}
     & \M',s'\vDash\nabla\phi \\
    \iff & \phi^{\M'}\notin N'(s')\text{ and }S'\backslash\phi^{\M'}\notin N'(s')\\
    \iff & \phi^{\M'}\notin \{\{s',t'\},\{s'\}\} \text{ and }S\backslash\phi^{\M'}\notin\{\{s',t'\},\{s'\}\}\\
    \iff & \phi^{\M'}\neq \{s',t'\}\text{ and }\phi^{\M'}\neq\{s'\}\text{ and }\phi^{\M'}\neq\emptyset\text{ and }\phi^{\M'}\neq\{t'\}\\
    \iff & \text{false}\\
\end{array}
\]

In either case, the penultimate line of the equivalences states that $\phi$ cannot be interpreted on the related model: its denotation is not one of all
possible subsets of the domain. We therefore conclude that $\M,s\vDash\nabla\phi$ iff $\M',s'\vDash\nabla\phi$.

Finally, $(\M,s)$ and $(\M',s')$ can be distinguished by $\mathcal{L}(\Diamond)$. To see this, note that $p^\M=\emptyset \in N(s)$, thus $\M,s\vDash\Box p$; however, $p^{\M'}=\emptyset\notin N'(s')$, and thus $\M',s'\nvDash\Box p$.
\end{proof}

\begin{proposition}\label{prop.nablabullet-diamond-ct}
$\mathcal{L}(\nabla,\bullet)\equiv\mathcal{L}(\Box)[\mathbb{C}]$, where $\mathbb{C}\in\{\mathbb{C}_c,\mathbb{C}_t\}$.
%$\mathcal{L}(\nabla,\bullet)$ is equally expressive as $\mathcal{L}(\Diamond)$ over the class of neighborhood models satisfying $(c)$ or $(t)$.
\end{proposition}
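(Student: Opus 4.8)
The plan is to establish the two inclusions $\mathcal{L}(\nabla,\bullet)\preceq\mathcal{L}(\Box)[\mathbb{C}]$ and $\mathcal{L}(\Box)\preceq\mathcal{L}(\nabla,\bullet)[\mathbb{C}]$ separately, each by exhibiting a compositional (inductive) translation and then checking only the modal clause semantically.

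For the direction $\mathcal{L}(\nabla,\bullet)\preceq\mathcal{L}(\Box)$, I would first observe that it in fact holds over the class of all models, hence in particular over $\mathbb{C}_c$ and $\mathbb{C}_t$. Reading off the truth conditions, $\M,s\vDash\nabla\phi$ iff $\phi^\M\notin N(s)$ and $S\backslash\phi^\M\notin N(s)$, which is exactly $\M,s\vDash\neg\Box\phi\wedge\neg\Box\neg\phi$; likewise $\M,s\vDash\bullet\phi$ iff $\M,s\vDash\phi$ and $\phi^\M\notin N(s)$, i.e.\ $\M,s\vDash\phi\wedge\neg\Box\phi$. So the translation $\tau\colon\mathcal{L}(\nabla,\bullet)\to\mathcal{L}(\Box)$ that commutes with the Boolean connectives and sets $\tau(\nabla\phi)=\neg\Box\tau(\phi)\wedge\neg\Box\neg\tau(\phi)$ and $\tau(\bullet\phi)=\tau(\phi)\wedge\neg\Box\tau(\phi)$ preserves truth at every pointed model, by a routine induction on $\phi$.

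For the converse direction $\mathcal{L}(\Box)\preceq\mathcal{L}(\nabla,\bullet)$ I would treat the two classes separately, since the relevant defining property differs. Over $\mathbb{C}_c$, closure under complements gives $\phi^\M\in N(s)$ iff $S\backslash\phi^\M\in N(s)$, so the two conjuncts in the clause for $\nabla$ coincide and $\nabla\phi$ collapses to $\neg\Box\phi$; hence $\Box\phi\leftrightarrow\Delta\phi\ (=\neg\nabla\phi)$ is valid on every $(c)$-model. Over $\mathbb{C}_t$, factivity ($X\in N(s)$ implies $s\in X$) makes $\Box\phi\to\phi$ valid, and since $\phi\wedge\neg\bullet\phi$ is equivalent to $\phi\wedge\Box\phi$ on all models, it collapses to $\Box\phi$ on $(t)$-models; thus $\Box\phi\leftrightarrow\phi\wedge\circ\phi\ (=\phi\wedge\neg\bullet\phi)$ is valid there. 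In each case I would define the reverse translation by commuting with the Booleans and replacing $\Box\phi$ by the displayed $\mathcal{L}(\nabla,\bullet)$-equivalent, concluding again by induction.

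I expect no serious obstacle; the only point needing care is that the converse direction rests on two different properties, so the uniform claim for $\mathbb{C}\in\{\mathbb{C}_c,\mathbb{C}_t\}$ must be assembled from two distinct equivalences rather than a single translation. I would also note that, more economically, the proposition is immediate from Corollary~\ref{coro.exp-nabla-bullet-nablabullet} together with the known fact that $\mathcal{L}(\nabla)$ and $\mathcal{L}(\Box)$ are equally expressive over $\mathbb{C}_c$ and over $\mathbb{C}_t$~\cite[Prop.~5, Prop.~6]{FanvD:neighborhood}, since equal expressivity is transitive.
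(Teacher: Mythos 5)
Your proposal is correct, and its closing remark is in fact exactly the paper's own proof: the paper derives the proposition in one line from $\mathcal{L}(\nabla,\bullet)\equiv\mathcal{L}(\nabla)[\mathbb{C}]$ (Coro.~\ref{coro.exp-nabla-bullet-nablabullet}) together with $\mathcal{L}(\nabla)\equiv\mathcal{L}(\Box)[\mathbb{C}]$ from~\cite[Prop.~5, Prop.~6]{FanvD:neighborhood}, using transitivity of $\equiv$, for $\mathbb{C}\in\{\mathbb{C}_c,\mathbb{C}_t\}$. The main body of your argument, however, is a genuinely different and self-contained route: you exhibit explicit truth-preserving translations, namely $\nabla\phi\mapsto\neg\Box\tau(\phi)\wedge\neg\Box\neg\tau(\phi)$ and $\bullet\phi\mapsto\tau(\phi)\wedge\neg\Box\tau(\phi)$ (valid over \emph{all} neighborhood models, so $\mathcal{L}(\nabla,\bullet)\preceq\mathcal{L}(\Box)$ holds unrestrictedly), and in the converse direction $\Box\phi\mapsto\Delta\phi$ on $(c)$-models and $\Box\phi\mapsto\phi\wedge\circ\phi$ on $(t)$-models; all four schematic equivalences check out against the semantics, and your observation that the two classes require two distinct reverse translations is the right point of care. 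What each approach buys: the paper's proof is shorter and reuses machinery already in place (the corollary itself rests on the validity of $\bullet\phi\leftrightarrow\phi\wedge\nabla\phi$ over $(c)$- and $(t)$-models, proved in Props.~\ref{prop.exp-nabla-bullet-c} and~\ref{prop.exp-nabla-bullet-t}), whereas your direct argument avoids any appeal to~\cite{FanvD:neighborhood} and makes the defining $\mathcal{L}(\nabla,\bullet)$-formulas for $\Box$ explicit, which is independently informative (e.g.\ it isolates exactly which property of $(c)$ and of $(t)$ drives the collapse). One small expository caveat: in the inductive step of each translation you should note that the displayed equivalences are schematic, i.e.\ hold for arbitrary arguments, so they may be instantiated with translated subformulas whose truth sets agree by the induction hypothesis; with that said, nothing in your proof fails.
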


\begin{proof}
Straightforward from $\mathcal{L}(\nabla,\bullet)\equiv\mathbb{L}(\nabla)[\mathbb{C}]$ (see Coro.~\ref{coro.exp-nabla-bullet-nablabullet}) and $\mathcal{L}(\nabla)\equiv\mathcal{L}(\Box)[\mathbb{C}]$~\cite[Prop.~5, Prop.~6]{FanvD:neighborhood}, where $\mathbb{C}\in\{\mathbb{C}_c,\mathbb{C}_t\}$.
%This is because $\mathcal{L}(\nabla,\bullet)$ has $\mathcal{L}(\nabla)$ as a fragment, and $\mathcal{L}(\nabla)$ is equally expressive as $\mathcal{L}(\Diamond)$ over the class of neighborhood models in question~\cite[Prop.~5, Prop.~6]{FanvD:neighborhood}.
\end{proof}

\weg{We conclude this section with the expressive result over the class of $(d)$-models, which is quite involved. For this, we need to adopt a notion called $\mathcal{N}_=$-bisimulation proposed in~\cite[Def.~11]{DBLP:conf/ijcai/ArecesF09}.

\begin{definition}[$\mathcal{N}_=$-bisimulation] Let $\M=\lr{S,N,V}$ and $\M'=\lr{S',N',V'}$ be two differentiated models. An {\em $\mathcal{N}_=$-bisimulation} between $\M$ and $\M'$ is a nonempty relation $Z\subseteq S\times S'$ such that if $wZw'$, then the following conditions hold:
\begin{itemize}
    \item[(Prop)] for all $p\in\BP$, $w\in V(p)$ iff $w'\in V'(p)$;
    \item[(Zig)] for all $T\in N(w)$, there exists $T'\in N'(w')$ such that
    \begin{itemize}
        \item[(Zig 1)] for all $x'\in T'$ there exists $x\in T$ such that $xZx'$, and
        \item[(Zig 2)] for all $X'\notin N'(w')$ differentiable in $\M'$ such that $T'\subsetneq X'\subseteq S'$, there are $x'\in X'\backslash T'$, $x\in S\backslash T$ such that $xZx'$
    \end{itemize}
    \item[(Zag)] for all $T\in N(w)$, there exists $T'\in N'(w')$ such that
    \begin{itemize}
        \item[(Zig 1)] for all $x'\in T'$ there exists $x\in T$ such that $xZx'$, and
        \item[(Zig 2)] for all $X'\notin N'(w')$ differentiable in $\M'$ such that $T'\subsetneq X'\subseteq S'$, there are $x'\in X'\backslash T'$, $x\in S\backslash T$ such that $xZx'$
    \end{itemize}
\end{itemize}
\end{definition}

\begin{proposition}\label{prop.exp-nablabullet-diamond-d}
$\mathcal{L}(\nabla,\bullet)$ is less expressive than $\mathcal{L}(\Diamond)$ over the class of neighborhood models satisfying $(d)$.
\end{proposition}

\begin{proof}
Consider the following models:
$$
\xymatrix{\{s\}&&\{t\}\\
s: p\ar[u]\ar[r]&\{s,t\}&t:p\ar[u]\ar[l]\\
&\M&}
\qquad
\xymatrix{\{s'\}&&\{t'\}\\
s': p\ar[u]\ar[r]&\{s',t'\}&t':\neg p\ar[u]\ar[l]\\
&\M'&}$$

One may easily verify that $\M$ and $\M'$ both have $(d)$.

Also, $(\M,s)$ and $(\M',s')$ cannot be distinguished by $\mathcal{L}(\nabla,\bullet)$. That is, for all $\phi\in\mathcal{L}(\nabla,\bullet)$, we have $\M,s\vDash\phi$ iff $\M',s'\vDash\phi$. The proof proceeds by induction on $\phi$, where the nontrivial cases are $\nabla\phi$ and $\bullet\phi$. The proof for the case $\bullet\phi$ is shown as in Remark~\ref{remark}, whereas the proof for the case $\nabla\phi$ is shown as in Prop.~\ref{prop.exp-nablabullet-diamond-nsb}.

It remains only to show that $(\M,s)$ and $(\M',s')$ can be distinguished by $\mathcal{L}(\Diamond)$. Here we give a non-constructive proof, that is, there must be an $\mathcal{L}(\Diamond)$-formula $\phi$ is true at one pointed model but false at the other one, although we do {\em not} know which formula it is. The proof is as follows.

Note that $\M$ and $\M'$ are both monotone, that is, having the property $(s)$. Also, both models are finite. \cite[Prop.~6]{DBLP:conf/ijcai/ArecesF09} has shown that bisimilarity on monotone models implies $\mathcal{L}(\Diamond)$-equivalence, and the converse holds on finite models.\footnote{The bisimulation notion is called `$\mathcal{N}_{\subseteq}$-bisimulation' in \cite[Def.~4]{DBLP:conf/ijcai/ArecesF09}. Also see~\cite[Def.~4.10]{Hansen:2003} for the definition of bisimulation. Note that the subtle  difference between the $\mathcal{N}_{\subseteq}$-bisimulation in~\cite[Def.~4]{DBLP:conf/ijcai/ArecesF09} and the bisimulation in~\cite[Def.~4.10]{Hansen:2003}: the former is defined for the modality $[\subseteq]$ on arbitrary neighborhood models, whereas the latter is defined for $\Box$ on monotone models. But as on monotone models, $[\subseteq]$ and $\Box$ are equivalent. Thus the two bisimulations are essentially the same. In what follows, we follow the presentation in~\cite[Def.~4.10]{Hansen:2003}, with some change to our notation, as follows.

Let $\M=\lr{S,N,V}$ and $\M'=\lr{S',N',V'}$ be monotone models. A nonempty binary relation $Z\subseteq S\times S'$ is a {\em bisimulation} between $\M$ and $\M'$ if
\begin{itemize}
    \item[(var)] If $wZw'$, then $w\in V(p)$ iff $w'\in V'(p)$;
    \item[(forth)] If $wZw'$ and $X\in N(w)$, then there is an $X'$ such that $X'\in N'(w')$ and for all $x'\in X'$ there exists $x\in X$ such that $xZx'$;
    \item[(back)] If $wZw'$ and $X'\in N'(w')$, then there is an $X$ such that $X\in N(w)$ and for all $x\in X$ there exists $x'\in X'$ such that $xZx'$.
\end{itemize}
If $w\in S$ and $w'\in S'$, we say that $(\M,w)$ and $(\M',w')$ are {\em bisimilar}, notation: $(\M,w)\bis(\M',w')$, if there is a bisimulation $Z$ between $\M$ and $\M'$ such that $wZw'$.
}

Now suppose, for reductio, that $(\M,s)\bis(\M',s')$, then there is a bisimulation $Z$ between $\M$ and $\M'$ such that $sZs'$. Then by (forth) and $\{s,t\}\in N(s)$, there is an $X'\in N'(s')$ such that for all $x'\in X'$ there exists $x\in \{s,t\}$ such that $xZx'$.

\weg{$$
\hspace{-1cm}
\xymatrix@L-10pt@C-18pt@R-10pt{\emptyset&\{s\}&\{t\}&\{s,t\}&&&\{s',t'\}&\{t'\}&\{s'\}&\emptyset\\
&s:\neg p\ar[ul]\ar[u]&t:\neg p\ar[u]\ar[ur]& &&&&s':\neg p\ar[ul]\ar[u]& t':\neg p\ar[u]\ar[ur]&\\
&\M&&&&&&\M'&&}$$

One may easily verify that $\M$ and $\M'$ both have $(d)$.

Moreover, $(\M,s)$ and $(\M',s')$ cannot be distinguished by $\mathcal{L}(\nabla,\bullet)$. That is, for all $\phi\in \mathcal{L}(\nabla,\bullet)$, we have that $\M,s\vDash\phi$ iff $\M',s'\vDash\phi$. We proceeds by induction on $\phi$, where the nontrivial cases are $\nabla\phi$ and $\bullet\phi$. 

For the case $\nabla\phi$, we have the following equivalences.
\[
\begin{array}{ll}
     & \M,s\vDash \nabla\phi \\
    \iff & \phi^\M\notin N(s)\text{ and }S\backslash\phi^\M\notin N(s)\\
    \iff & \phi^\M\notin \{\emptyset,\{s\}\}\text{ and }S\backslash\phi^\M\notin \{\emptyset,\{s\}\}\\
    \iff & \phi^\M\neq \emptyset\text{ and }\phi^\M\neq \{s\}\text{ and }\phi^\M\neq \{s,t\}\text{ and }\phi^\M\neq \{t\}\\
    \iff &\text{false}\\
\end{array}
\]
\[
\begin{array}{ll}
     & \M',s'\vDash \nabla\phi \\
    \iff & \phi^{\M'}\notin N'(s')\text{ and }S'\backslash\phi^{\M'}\notin N'(s')\\
    \iff & \phi^{\M'}\notin \{\{s',t'\},\{t'\}\}\text{ and }S'\backslash\phi^{\M'}\notin \{\{s',t'\},\{t'\}\}\\
    \iff & \phi^{\M'}\neq \{s',t'\}\text{ and }\phi^{\M'}\neq \{t'\}\text{ and }\phi^{\M'}\neq \emptyset\text{ and }\phi^{\M'}\neq \{s'\}\\
    \iff &\text{false}\\
\end{array}
\]

In either case, the penultimate line of the proof merely states that $\phi$ cannot be interpreted on the related model: its denotation is not one of all possible subsets of the domain. Note that the above proof does not use the induction hypothesis. Therefore, $\M,s\vDash\nabla\phi$ iff $\M',s'\vDash\nabla\phi$.

For the case $\bullet\phi$, we have the following equivalences.
\[
\begin{array}{ll}
     & \M,s\vDash\bullet\phi \\
    \iff & \M,s\vDash\phi\text{ and }\phi^\M\notin N(s)\\
    \iff & \M,s\vDash\phi\text{ and }(\{s\}\notin N(s)\text{ or }\{s,t\}\notin N(s))\\
    \iff & \M,s\vDash \phi\\
   \stackrel{\text{IH}}{\iff} & \M',s'\vDash\phi\\
   \iff & \M',s'\vDash\phi\text{ and }(\{s'\}\notin N'(s')\text{ or }\{s',t'\}\notin N'(s'))\\
   \iff & \M',s'\vDash\phi\text{ and }\phi^{\M'}\notin N'(s')\\
   \iff & \M',s'\vDash\bullet\phi\\
\end{array}
\]
However, $(\M,s)$ and $(\M',s')$ can be distinguished by $\mathcal{L}(\Diamond)$. On one hand, $\M,s\vDash\Box p$, since $p^\M=\emptyset\in N(s)$; on the other hand, $\M',s'\nvDash \Box p$, since $p^{\M'}=\emptyset\notin N'(s')$.}
\end{proof}}

\weg{\begin{proof}
Consider the following models, where $S=\{s\}$ and $S'=\{s'\}$:
$$\xymatrix{&\{s\}&&&\emptyset\\
\M&s:p\ar[u]&&\M'&s':p\ar[u]\\
}$$
Firstly, both $\M$ and $\M'$ have $(d)$ and $(b)$. The proof for $(d)$ is straightforward. For $(b)$, suppose that $s\in X$, then $X=\{s\}$, and thus $\{u\in S\mid S\backslash X\notin N(u)\}=\{s\}\in N(s)$. Suppose that $s'\in X$, then $X=\{s'\}$, and thus $\{u\in S'\mid S'\backslash X\notin N'(s')\}=\emptyset\in N'(s')$.

Secondly, $(\M,s)$ and $(\M',s')$ cannot be distinguished by $\mathcal{L}(\nabla,\bullet)$. That is, for all $\phi\in\mathcal{L}(\nabla,\bullet)$, we have that $\M,s\vDash\phi$ iff $\M',s'\vDash\phi$. The nontrivial cases are $\nabla\phi$ and $\bullet\phi$. Note that either $\M,s\vDash\phi$ or $\M,s\nvDash\phi$. If $\M,s\vDash\phi$, then $\phi^\M=\{s\}\in N(s)$; if $\M,s\nvDash\phi$, then $S\backslash\phi^\M=\{s\}\in N(s)$. Thus $\M,s\nvDash\nabla\phi$ and $\M,s\nvDash\bullet\phi$.

%Since either $\M,s\vDash\phi$ or $\M,s\nvDash\phi$, we obtain $\phi^\M=\{s\}$ or $S\backslash\phi^\M=\{s\}$. Either case implies $\phi^\M\in N(s)$ or $S\backslash\phi^\M\in N(s)$.
\end{proof}}

We do not know whether $\mathcal{L}(\nabla,\bullet)$ is less expressive than $\mathcal{L}(\Box)$ over the class of $(d)$-models. We conjecture the answer is positive. We leave it for future work.

We summarize the results in this section as follows.
\[
\begin{array}{ll}
    \mathcal{L}(\nabla)\asymp\mathcal{L}(\bullet)[\mathbb{C}],\text{ where }\mathbb{C}\in\{\mathbb{C}_\text{all},\mathbb{C}_n,\mathbb{C}_r,\mathbb{C}_i,\mathbb{C}_s,\mathbb{C}_d,\mathbb{C}_b,\mathbb{C}_4,\mathbb{C}_5\}& (\text{Coro.}~\ref{coro.exp-nabla-bullet-nablabullet}) \\
    \mathcal{L}(\nabla)\equiv\mathcal{L}(\bullet)[\mathbb{C}],\text{ where }\mathbb{C}\in\{\mathbb{C}_c,\mathbb{C}_t\}& (\text{Coro.}~\ref{coro.exp-nabla-bullet-nablabullet})\\
    \mathcal{L}(\nabla)\prec\mathcal{L}(\nabla,\bullet)[\mathbb{C}],\text{ where }\mathbb{C}\in\{\mathbb{C}_\text{all},\mathbb{C}_n,\mathbb{C}_r,\mathbb{C}_i,\mathbb{C}_s,\mathbb{C}_d,\mathbb{C}_b,\mathbb{C}_4,\mathbb{C}_5\} &(\text{Coro.}~\ref{coro.exp-nabla-bullet-nablabullet})\\
    \mathcal{L}(\bullet)\prec\mathcal{L}(\nabla,\bullet)[\mathbb{C}],\text{ where }\mathbb{C}\in\{\mathbb{C}_\text{all},\mathbb{C}_n,\mathbb{C}_r,\mathbb{C}_i,\mathbb{C}_s,\mathbb{C}_d,\mathbb{C}_b,\mathbb{C}_4,\mathbb{C}_5\} &(\text{Coro.}~\ref{coro.exp-nabla-bullet-nablabullet})\\
    \mathcal{L}(\nabla)\equiv\mathcal{L}(\nabla,\bullet)[\mathbb{C}],\text{ where }\mathbb{C}\in\{\mathbb{C}_c,\mathbb{C}_t\} &(\text{Coro.}~\ref{coro.exp-nabla-bullet-nablabullet})\\
    \mathcal{L}(\bullet)\equiv\mathcal{L}(\nabla,\bullet)[\mathbb{C}],\text{ where }\mathbb{C}\in\{\mathbb{C}_c,\mathbb{C}_t\} &(\text{Coro.}~\ref{coro.exp-nabla-bullet-nablabullet})\\
    \mathcal{L}(\nabla,\bullet)\prec\mathcal{L}(\Box)[\mathbb{C}],\text{ where }\mathbb{C}\in\{\mathbb{C}_\text{all},\mathbb{C}_n,\mathbb{C}_r,\mathbb{C}_i,\mathbb{C}_s,\mathbb{C}_b,\mathbb{C}_4,\mathbb{C}_5\} &(\text{Props.}~\ref{prop.exp-nablabullet-diamond-ri45},\ref{prop.exp-nablabullet-diamond-nsb})\\
    \mathcal{L}(\nabla,\bullet)\equiv\mathcal{L}(\Box)[\mathbb{C}],\text{ where }\mathbb{C}\in \{\mathbb{C}_c,\mathbb{C}_t\} &(\text{Coro.}~\ref{prop.nablabullet-diamond-ct})\\
\end{array}
\]

\section{Frame Definability}\label{sec.framedefinability}

\weg{\begin{proposition}
The frame properties $(n)$, $(r)$, $(i)$, $(s)$, $(c)$, $(d)$, $(t)$, $(b)$, $(4)$, and $(5)$ are undefinable in $\mathcal{L}(\nabla,\bullet)$.
\end{proposition}}
%This part investigates the frame definability of $\mathcal{L}(\nabla,\bullet)$.
We have shown in the previous section that $\mathcal{L}(\nabla,\bullet)$ is more expressive than $\mathcal{L}(\nabla)$ and $\mathcal{L}(\bullet)$ (at the level of models). It may then be natural to ask whether a similar situation holds at the level of frames. Recall that it is shown in~\cite[Prop.~7]{FanvD:neighborhood} that all frame properties in Def.~\ref{def.properties}, in particular $(n)$, are undefinable in $\mathcal{L}(\nabla)$. In what follows, we shall show that all frame properties in question except for $(n)$ are undefinable in $\mathcal{L}(\nabla,\bullet)$, thus $\mathcal{L}(\nabla,\bullet)$ is also more expressive than $\mathcal{L}(\nabla)$ and $\mathcal{L}(\bullet)$ at the level of frames. First, we need some related notion.
\begin{definition}
Let $\Gamma$ be a set of $\mathcal{L}(\nabla,\bullet)$-formulas, and $P$ a neighborhood property. We say that $\Gamma$ defines $P$, if for all frames $\mathcal{F}$, $\F$ has $P$ if and only if $\F\vDash\Gamma$. If $\Gamma$ is a singleton, say $\{\phi\}$, we will write $\F\vDash\phi$ rather than $\F\vDash\{\phi\}$. We say that $P$ is definable in $\mathcal{L}(\nabla,\bullet)$, if there exists a set of $\mathcal{L}(\nabla,\bullet)$-formulas that defines it.
\end{definition}

\begin{proposition}\label{prop.n-definable}
The frame property $(n)$ is definable in $\mathcal{L}(\nabla,\bullet)$.
\end{proposition}

\begin{proof}
\cite{Fan:2020neighborhood} has shown that $(n)$ is defined in $\mathcal{L}(\bullet)$, by $\circ\top$. Therefore, $(n)$ is also definable in $\mathcal{L}(\nabla,\bullet)$, by $\circ\top$.
\end{proof}

\weg{\begin{proposition}
The frame property $(n)$ is undefinable in $\mathcal{L}(\nabla,\bullet)$.
\end{proposition}

\begin{proof}
Consider the following frames $\F=\lr{S,N}$ and $\F'=\lr{S',N'}$:
\[\xymatrix@L-10pt@C-6pt@R-10pt{\{t\}&&\{s\}\\
s\ar[u]\ar[r]&\{s,t\}&t\ar[u]\ar[l]\\
&\mathcal{F}&}
\qquad
\xymatrix@L-10pt@C-6pt@R-10pt{\{s',t'\}&&\emptyset\\
s'\ar[r]\ar[u]&\{t'\}&t'\ar[l]\ar[u]\\
&\mathcal{F}'&\\}
\]

One may easily check that $\F$ has $(n)$ but $\F'$ does not. In what follows, we show that $\F\vDash\psi$ iff $\F'\vDash\psi$ for all $\psi\in\mathcal{L}(\nabla,\bullet)$.

Suppose that $\F\nvDash\psi$. Then there exists $\M=\lr{\F,V}$ and $x\in S$ such that $\M,x\nvDash\psi$. Define a valuation $V'$ on $\F'$ such that $s'\in V'(p)$ iff $s\in V(p)$, and $t'\in V'(p)$ iff $t\in V(p)$. By induction hypothesis, we prove that for all $\phi\in\mathcal{L}(\nabla,\bullet)$, $\M,s\vDash\phi$ iff $\M',s'\vDash\phi$, and $\M,t\vDash\phi$ iff $\M',t'\vDash\phi$, where $\M'=\lr{\F',V'}$. The nontrivial cases are $\nabla\phi$ and $\bullet\phi$. 

For the case $\nabla\phi$, we have the following equivalences.
\[
    \begin{array}{ll}
         & \M,s\vDash\nabla\phi \\
       \iff  & \phi^\M\notin N(s)\text{ and }S\backslash\phi^\M\notin N(s)\\
       \iff & \phi^\M\notin \{\{t\},\{s,t\}\}\text{ and }S\backslash\phi^\M\notin \{\{t\},\{s,t\}\}\\
       \iff & \phi^\M\neq \{t\}\text{ and }\phi^\M\neq \{s,t\}\text{ and }\phi^\M\neq \{s\}\text{ and }\phi^\M\neq \emptyset\\
    \end{array}
    \]
\[
    \begin{array}{ll}
         & \M,t\vDash\nabla\phi \\
       \iff  & \phi^\M\notin N(t)\text{ and }S\backslash\phi^\M\notin N(t)\\
       \iff & \phi^\M\notin \{\{s\},\{s,t\}\}\text{ and }S\backslash\phi^\M\notin \{\{s\},\{s,t\}\}\\
       \iff & \phi^\M\neq \{s\}\text{ and }\phi^\M\neq \{s,t\}\text{ and }\phi^\M\neq \{t\}\text{ and }\phi^\M\neq \emptyset\\
    \end{array}
    \]
 \[
    \begin{array}{ll}
         & \M',s'\vDash\nabla\phi \\
       \iff  & \phi^{\M'}\notin N'(s')\text{ and }S'\backslash\phi^{\M'}\notin N'(s')\\
       \iff & \phi^{\M'}\notin \{\{t'\},\{s',t'\}\}\text{ and }S\backslash\phi^\M\notin \{\{t'\},\{s',t'\}\}\\
       \iff & \phi^{\M'}\neq \{t'\}\text{ and }\phi^{\M'}\neq \{s',t'\}\text{ and }\phi^{\M'}\neq \{s'\}\text{ and }\phi^{\M'}\neq \emptyset\\
    \end{array}
    \]  
    \[
    \begin{array}{ll}
         & \M',t'\vDash\nabla\phi \\
       \iff  & \phi^{\M'}\notin N'(t')\text{ and }S'\backslash\phi^{\M'}\notin N'(t')\\
       \iff & \phi^{\M'}\notin \{\emptyset,\{t'\}\}\text{ and }S\backslash\phi^{\M'}\notin \{\emptyset,\{t'\}\}\\
       \iff & \phi^{\M'}\neq \emptyset\text{ and }\phi^{\M'}\neq \{t'\}\text{ and }\phi^{\M'}\neq \{s',t'\}\text{ and }\phi^{\M'}\neq \{s'\}\\
    \end{array}
    \]  

In each case, the last line of the above equivalences is impossible, since $\phi$ must be interpreted on the related models. Therefore, $\nabla\phi$ is false at all four points.

For the case $\bullet\phi$, we have the following equivalences.
\[
\begin{array}{ll}
     & \M,s\vDash\bullet\phi \\
    \iff & \M,s\vDash\phi\text{ and }\phi^\M\notin N(s)\\
    \iff & \M,s\vDash\phi\text{ and }(\text{either }\{s\}\notin N(s)\text{ or }\{s,t\}\notin N(s))\\
    \iff & \M,s\vDash\phi\\
    \stackrel{\text{IH}}{\iff} & \M',s'\vDash\phi\\
    \iff & \M',s'\vDash\phi\text{ and }(\text{either }\{s'\}\notin N'(s')\text{ or }\{s',t'\}\notin N'(s'))\\
    \iff & \M',s'\vDash\phi\text{ and }\phi^{\M'}\notin N'(s')\\
    \iff & \M',s'\vDash\bullet\phi\\
\end{array}
\]
\[
\begin{array}{ll}
     & \M,t\vDash\bullet\phi \\
    \iff & \M,t\vDash\phi\text{ and }\phi^\M\notin N(t)\\
    \iff & \M,t\vDash\phi\text{ and }(\text{either }\{t\}\notin N(t)\text{ or }\{s,t\}\notin N(t))\\
    \iff & \M,t\vDash\phi\\
    \stackrel{\text{IH}}{\iff} & \M',t'\vDash\phi\\
    \iff & \M',t'\vDash\phi\text{ and }(\text{either }\{t'\}\notin N'(t')\text{ or }\{s',t'\}\notin N'(t'))\\
    \iff & \M',t'\vDash\phi\text{ and }\phi^{\M'}\notin N'(t')\\
    \iff & \M',t'\vDash\bullet\phi\\
\end{array}
\]  

We have now shown that $\M,s\vDash\phi$ iff $\M',s'\vDash\phi$ and $\M,t\vDash\phi$ iff $\M',t'\vDash\phi$ for all $\phi\in\mathcal{L}(\nabla,\bullet)$. From $\M,x\nvDash\psi$ for $x\in\{s,t\}$, there must be an $x'\in \{s',t'\}$ such that $\M',x'\nvDash\psi$. Therefore, $\F'\nvDash\psi$. The converse is similar. So far we have proved that $\F\vDash\psi$ iff $\F'\vDash\psi$ for all $\psi\in\mathcal{L}(\nabla,\bullet)$.

If $(n)$ were to be defined by a set of $\mathcal{L}(\nabla,\bullet)$-formulas, say $\Gamma$, then as $\F$ has $(n)$, $\F\vDash\Gamma$, we should have also $\F'\vDash\Gamma$, thus $\F'$ has $(n)$: a contradiction.
\end{proof}}

\begin{proposition}
The frame properties $(r)$, $(i)$, $(c)$, $(d)$, $(t)$ and $(b)$ are undefinable in $\mathcal{L}(\nabla,\bullet)$.
\end{proposition}

\begin{proof}
Consider the following frames $\mathcal{F}_1=\lr{S_1,N_1}$, $\mathcal{F}_2=\lr{S_2,N_2}$, and $\F_3=\lr{S_3,N_3}$\footnote{This come from~\cite[Prop.~7]{FanvD:neighborhood}.}:
$$
%\hspace{-1cm}
\xymatrix@L-10pt@C-6pt@R-10pt{\{s_1\}\\
s_1\ar[u]\\
\empty\\
\mathcal{F}_1}
\qquad
\xymatrix@L-10pt@C-6pt@R-10pt{\{s_2\}\\
s_2\ar[u]\ar[d]\\
\emptyset\\
\mathcal{F}_2}
\qquad
\xymatrix@L-10pt@C-6pt@R-10pt{&\{s_3,t_3\}&\\
s_3\ar[ur]\ar[dr]\ar[r]&\{s_3\}&t_3\ar[ul]\ar[dl]\ar[l]\\
&\{t_3\}&\\
&\mathcal{F}_3&}
$$
It has been observed in~\cite[Prop.~7]{FanvD:neighborhood} that $\F_1$ satisfies $(d)$ and $(t)$ but $\F_2$ does not. Also, it is straightforward to check that $\mathcal{F}_2$ satisfies $(c)$ but $\mathcal{F}_1$ does not. Moreover, $\F_2$ satisfies $(r)$, $(i)$ and $(b)$, whereas $\F_3$ does not. To see $\F_3$ does not satisfy $(b)$, note that $s_3\in\{s_3\}$ but $\{u\in S_3\mid \{t_3\}\notin N_3(u)\}=\emptyset\notin N_3(s_3)$. In what follows, we show that for all $\phi\in\mathcal{L}(\nabla,\bullet)$, $\F_1\vDash\phi$ iff $\F_2\vDash\phi$ iff $\F_3\vDash\phi$.

Suppose that $\F_1\nvDash\phi$. Then there exists $\M_1=\lr{\F_1,V_1}$ such that $\M_1,s_1\nvDash\phi$. Define a valuation $V_2$ on $\F_2$ as $s_2\in V_2(p)$ iff $s_1\in V_1(p)$ for all $p\in\BP$. By induction on $\phi$, we show that $(\ast)$: $\M_1,s_1\vDash\phi$ iff $\M_2,s_2\vDash\phi$, where $\M_2=\lr{\F_2,V_2}$. The nontrivial cases are $\nabla\phi$ and $\bullet\phi$. The case $\nabla\phi$ can be shown as in~\cite[Prop.~7]{FanvD:neighborhood}. For the case $\bullet\phi$, notice that $\M_1,s_1\vDash\bullet\phi$ iff ($\M_1,s_1\vDash\phi$ and $\phi^{\M_1}\notin N_1(s_1)$) iff ($\M_1,s_1\vDash\phi$ and $\phi^{\M_1}\neq \{s_1\}$), where the last one is a contradiction, and thus $\M_1,s_1\nvDash\bullet\phi$; a similar argument gives us $\M_2,s_2\nvDash\bullet\phi$. We have thus proved $(\ast)$. This entails that $\M_2,s_2\nvDash\phi$, and thus $\F_2\nvDash\phi$. The converse is similar. Therefore, $\F_1\vDash\phi$ iff $\F_2\vDash\phi$.
%For the case $\nabla\phi$, note that $\M_1,s_1\vDash\nabla\phi$ iff ($\phi^{\M_1}\notin N_1(s_1)$ and $S_1\backslash\phi^{\M_1}\notin N_1(s_1)$) iff ($\phi^{\M_1}\neq \{s_1\}$ and $\phi^{\M_1}\neq \emptyset$) iff false, and $\M_2,s_2\vDash\nabla\phi$ iff ($\phi^{\M_2}\notin N_2(s_2)$ and $S_2\backslash\phi^{\M_2}\notin N_2(s_2)$) iff ($\phi^{\M_2}\neq \{s_2\}$ and $\phi^{\M_2}\neq \emptyset$) iff false). We have thus proved $(\ast)$. This entails that $\M_2,s_2\nvDash\phi$, and thus $\F_2\nvDash\phi$. The converse is similar. Therefore, $\F_1\vDash\phi$ iff $\F_2\vDash\phi$.

It remains only to show that $\F_2\vDash\phi$ iff $\F_3\vDash\phi$. Assume that $\F_2\nvDash\phi$. Then there exists $\M_2=\lr{\F_2,V_2}$ such that $\M_2,s_2\nvDash\phi$. Define a valuation $V_3$ on $\F_3$ such that $s_3\in V_3(p)$ iff $s_2\in V_2(p)$ for all $p\in \BP$. By induction on $\phi\in\mathcal{L}(\nabla,\bullet)$, we show that $(\ast\ast)$: $\M_2,s_2\vDash\phi$ iff $\M_3,s_3\vDash\phi$, where $\M_3=\lr{\F_3,V_3}$. The nontrivial cases are $\nabla\phi$ and $\bullet\phi$. Again, the case $\nabla\phi$ can be shown as in~\cite[Prop.~7]{FanvD:neighborhood}. For the case $\bullet\phi$, just note that $\M_3,s_3\vDash\bullet\phi$ iff ($\M_3,s_3\vDash\phi$ and $\phi^{\M_3}\notin N_3(s_3)$) iff ($\M_3,s_3\vDash\phi$ and $\phi^{\M_3}\neq \{s_3\}$ and $\phi^{\M_3}\neq \{t_3\}$ and $\phi^{\M_3}\neq \{s_3,t_3\}$) iff false. Thus $(\ast\ast)$ holds. This implies that $\M_3,s_3\nvDash\phi$, and then $\F_3\nvDash\phi$. The converse is analogous. Therefore, $\F_2\vDash\phi$ iff $\F_3\vDash\phi$.

If $(r)$ were to be defined by a set of $\mathcal{L}(\nabla,\bullet)$-formulas, say $\Sigma$, then as $\F_2$ satisfies $(r)$, we have $\F_2\vDash \Sigma$. Then we should also have $\F_3\vDash\Sigma$, which means that $\F_3$ has $(r)$: a contradiction. Therefore, $(r)$ is undefinable in $\mathcal{L}(\nabla,\bullet)$. Similarly, we can show other frame properties in question are undefinable in $\mathcal{L}(\nabla,\bullet)$.
%Similarly, we can show that $\F_2\vDash\phi$ iff $\F_3\vDash\phi$.
\end{proof}

\begin{proposition}\label{prop.undefinable-4}
The frame properties $(s)$ and $(4)$ are undefinable in $\mathcal{L}(\nabla,\bullet)$.
\end{proposition}

\begin{proof}
Consider the following frames $\mathcal{F}=\lr{S,N}$ and $\mathcal{F}'=\lr{S',N'}$, where $S=\{s,t\}$ and $S'=\{s',t'\}$:
$$\xymatrix@L-10pt@C-6pt@R-10pt{&\{s,t\}&\\
s\ar[ur]\ar[dr]&&t\ar[ul]\ar[dl]\\
&\{s\}&\\
&\mathcal{F}&}
\qquad
\xymatrix@L-10pt@C-6pt@R-10pt{&\{s',t'\}&\emptyset\\
s'\ar[ur]\ar[dr]&&t'\ar[ul]\ar[dl]\ar[u]\\
&\{s'\}&\\
&\mathcal{F}&}$$

Firstly, one may easily see that $\mathcal{F}$ has $(s)$. Also, $\mathcal{F}$ has $(4)$.  Suppose that $X\in N(s)$, to show that $\{u\in S\mid X\in N(u)\}\in N(s)$. By supposition, $X=\{s\}$ or $X=\{s,t\}$. Either case implies that $\{u\in S\mid X\in N(u)\}=\{s,t\}\in N(s)$. Thus $N(s)$ has $(4)$. A similar argument applies to showing that $N(t)$ has $(4)$.

Secondly, $\F'$ does not have $(s)$, since $\emptyset\in N'(t')$ and $\emptyset\subseteq \{t'\}$ but $\{t'\}\notin N'(t')$. Moreover, $\F'$ does not have $(4)$. This is because, for instance, $\emptyset\in N'(t')$ but $\{u\in S'\mid \emptyset\in N'(u)\}=\{t'\}\notin N'(t')$.

Thirdly, for all $\psi\in\mathcal{L}(\nabla,\bullet)$, we have that $\F\vDash\psi$ iff $\F'\vDash\psi$. Suppose that $\F\nvDash \psi$. Then there exists $\M=\lr{\F,V}$ and $x\in S$ such that $\M,x\nvDash \psi$. Define $V'$ to be a valuation on $\F'$ such that $s\in V(p)$ iff $s'\in V'(p)$, and $t\in V(p)$ iff $t'\in V'(p)$. In what follows, we show $(\ast)$: for all $\phi\in\mathcal{L}(\nabla,\bullet)$, $\M,s\vDash\phi$ iff $\M',s'\vDash\phi$, and $\M,t\vDash\phi$ iff $\M',t'\vDash\phi$, where $\M'=\lr{\F',N'}$. We proceed by induction on $\phi$, where the nontrivial cases are $\nabla\phi$ and $\bullet\phi$. 

For the case $\nabla\phi$, we have the following equivalences.
\[
\begin{array}{ll}
     & \M,s\vDash\nabla\phi \\
   \iff  & \phi^\M\notin N(s)\text{ and }S\backslash\phi^\M\notin N(s)\\
   \iff & \phi^\M\notin \{\{s\},\{s,t\}\}\text{ and }S\backslash\phi^\M\notin\{\{s\},\{s,t\}\}\\
   \iff & \phi^\M\neq \{s\}\text{ and }\phi^\M\neq \{s,t\}\text{ and }\phi^\M\neq \{t\}\text{ and }\phi^\M\neq \emptyset\\
\end{array}
\]
\[
\begin{array}{ll}
     & \M',s'\vDash\nabla\phi \\
   \iff  & \phi^{\M'}\notin N'(s')\text{ and }S'\backslash\phi^{\M'}\notin N'(s')\\
   \iff & \phi^{\M'}\notin \{\{s'\},\{s',t'\}\}\text{ and }S'\backslash\phi^{\M'}\notin\{\{s'\},\{s',t'\}\}\\
   \iff & \phi^{\M}\neq \{s'\}\text{ and }\phi^{\M'}\neq \{s',t'\}\text{ and }\phi^{\M'}\neq \{t'\}\text{ and }\phi^{\M'}\neq \emptyset\\
\end{array}
\]

In each case, the last line of the above proofs states that $\phi$ cannot be interpreted on the related models, which is impossible. Thus $\M,s\nvDash\nabla\phi$ and $\M',s'\nvDash\nabla\phi$. Analogously, we can show that $\M,t\nvDash\nabla\phi$ and $\M',t'\nvDash\nabla\phi$.

For the case $\bullet\phi$, we have the following equivalences.
\[
\begin{array}{ll}
     & \M,s\vDash\bullet\phi \\
    \iff  & \M,s\vDash\phi\text{ and }\phi^\M\notin N(s)\\
    \iff & \M,s\vDash\phi\text{ and }\phi^\M\neq \{s\}\text{ and }\phi^\M\neq \{s,t\}\\
    \iff & \text{false}\\
\end{array}\]
\[
\begin{array}{ll}
     & \M',s'\vDash\bullet\phi \\
    \iff  & \M',s'\vDash\phi\text{ and }\phi^{\M'}\notin N'(s')\\
    \iff & \M',s'\vDash\phi\text{ and }\phi^{\M'}\neq \{s'\}\text{ and }\phi^{\M'}\neq \{s',t'\}\\
    \iff & \text{false}\\
\end{array}
\]

This shows that $\M,s\vDash\bullet\phi$ iff $\M',s'\vDash\bullet\phi$. Therefore, $\M,s\vDash\phi$ iff $\M',s'\vDash\phi$ for all $\phi\in\mathcal{L}(\nabla,\bullet)$.
\[
\begin{array}{ll}
     & \M,t\vDash\bullet\phi \\
    \iff & \M,t\vDash\phi\text{ and }\phi^\M\notin N(t)\\
    \iff & \M,t\vDash\phi\text{ and }\phi^\M\neq \{s\}\text{ and }\phi^\M\neq \{s,t\}\\
    \iff & \M,t\vDash\phi\text{ and }\M,s\nvDash\phi\\
   \stackrel{\text{IH}}{\iff} & \M',t'\vDash\phi\text{ and }\M',s'\nvDash\phi\\
   \iff & \M',t'\vDash\phi\text{ and }\phi^{\M'}\neq \{s'\}\text{ and }\phi^{\M'}\neq \{s',t'\}\text{ and }\phi^{\M'}\neq \emptyset\\
   \iff & \M',t'\vDash\phi\text{ and }\phi^{\M'}\notin N'(t')\\
   \iff & \M',t'\vDash\bullet\phi\\
    \end{array}
\]

This gives us that $\M,t\vDash \phi$ iff $\M',t'\vDash\phi$ for all $\phi\in\mathcal{L}(\nabla,\bullet)$. We have now completed the proof of $(\ast)$.

If $(s)$ were to be defined by a set of $\mathcal{L}(\nabla,\bullet)$-formulas, say $\Gamma$, then as $\F$ has $(s)$, we would have $\F\vDash\Gamma$, thus we should also have $\F'\vDash\Gamma$, that is, $\F'$ has $(s)$: a contradiction. Therefore, $(s)$ is not definable in $\mathcal{L}(\nabla,\bullet)$. Similarly, we can obtain the undefinability of $(4)$ in $\mathcal{L}(\nabla,\bullet)$.
\end{proof}

%Note that Prop.~\ref{prop.undefinable-4} also shows that $(n)$ is undefinable in $\mathcal{L}(\nabla,\bullet)$, since $\F$ has $(n)$ but $\F'$ fails.

\begin{proposition}
The frame property $(5)$ is undefinable in $\mathcal{L}(\nabla,\bullet)$.
\end{proposition}

\begin{proof}
Consider the following frames $\mathcal{F}=\lr{S,N}$ and $\mathcal{F}'=\lr{S',N'}$, where $S=\{s,t\}$ and $S'=\{s',t'\}$:
$$\xymatrix@L-10pt@C-6pt@R-10pt{&\{s,t\}&\\
s\ar[ur]\ar[dr]&&t\ar[ul]\ar[dl]\\
&\{t\}&\\
&\mathcal{F}&}
\qquad
\xymatrix@L-10pt@C-6pt@R-10pt{\{t'\}&&\emptyset\\
s'\ar[u]\ar[r]&\{s',t'\}&t'\ar[u]\ar[d]\ar[l]\ar[ull]\\
&&\{s'\}\\
&\mathcal{F}'&}$$

Firstly, $\F$ has $(5)$. Suppose that $X\notin N(s)$, to prove that $\{u\in S\mid X\notin N(u)\}\in N(s)$. By supposition, $X=\emptyset$ or $X=\{s\}$. Either case implies that $\{u\in S\mid X\notin N(u)\}=\{s,t\}\in N(s)$. Thus $N(s)$ has $(5)$. A similar argument shows that $N(t)$ has $(5)$.

Secondly, $\F'$ does not $(5)$. For instance, $\emptyset\notin N'(s')$ and $\{u\in S'\mid \emptyset\notin N'(u)\}=\{s'\}\notin N'(s')$.

Thirdly, for all $\psi\in\mathcal{L}(\nabla,\bullet)$, we have that $\F\vDash\psi$ iff $\F'\vDash\psi$. Suppose that $\F\nvDash \psi$. Then there exists $\M=\lr{\F,V}$ and $x\in S$ such that $\M,x\nvDash \psi$. Define $V'$ to be a valuation on $\F'$ such that $s\in V(p)$ iff $s'\in V'(p)$, and $t\in V(p)$ iff $t'\in V'(p)$. In what follows, we show $(\ast\ast)$: for all $\phi\in\mathcal{L}(\nabla,\bullet)$, $\M,s\vDash\phi$ iff $\M',s'\vDash\phi$, and $\M,t\vDash\phi$ iff $\M',t'\vDash\phi$, where $\M'=\lr{\F',N'}$. We proceed by induction on $\phi$, where the nontrivial cases are $\nabla\phi$ and $\bullet\phi$. 

%We have seen from the proof of Prop.~\ref{prop.undefinable-4} that $\M,s\nvDash\nabla\phi$, $\M,t\nvDash\nabla\phi$, $\M,t\nvDash\bullet\phi$, and $\M,s\vDash\bullet\phi$ iff $\M,s\vDash\phi$. Moreover, we have the following equivalences.
For the case $\nabla\phi$, we have the following equivalences.
\[
\begin{array}{ll}
     & \M,s\vDash\nabla\phi \\
   \iff  & \phi^\M\notin N(s)\text{ and }S\backslash\phi^\M\notin N(s)\\
   \iff & \phi^\M\notin \{\{t\},\{s,t\}\}\text{ and }S\backslash\phi^\M\notin\{\{t\},\{s,t\}\}\\
   \iff & \phi^\M\neq \{t\}\text{ and }\phi^\M\neq \{s,t\}\text{ and }\phi^\M\neq \{s\}\text{ and }\phi^\M\neq \emptyset\\
\end{array}
\]
\[
\begin{array}{ll}
     & \M',s'\vDash\nabla\phi \\
    \iff & \phi^{\M'}\notin N'(s')\text{ and }S'\backslash\phi^{\M'}\notin N'(s')\\
    \iff & \phi^{\M'}\notin \{\{t'\},\{s',t'\}\}\text{ and }S'\backslash\phi^{\M'}\notin \{\{t'\},\{s',t'\}\}\\
    \iff & \phi^{\M'}\neq \{t'\}\text{ and }\phi^{\M'}\neq \{s',t'\}\text{ and }\phi^{\M'}\neq \{s'\}\text{ and }\phi^{\M'}\neq \emptyset\\
\end{array}
\]

In each case, the last line of the above proofs states that $\phi$ cannot be interpreted on the related models, which is impossible. Thus $\M,s\nvDash\nabla\phi$ and $\M',s'\nvDash\nabla\phi$. Analogously, we can show that $\M,t\nvDash\nabla\phi$ and $\M',t'\nvDash\nabla\phi$.

%The last line of the above proof states that $\phi$ cannot be interpreted on $\M'$, which is impossible. Thus $\M',s'\nvDash\nabla\phi$. Similarly, we can show that $\M',t'\nvDash\nabla\phi$.

For the case $\bullet\phi$, we have the following equivalences.
\[
\begin{array}{ll}
     & \M,t\vDash\bullet\phi \\
    \iff & \M,t\vDash\phi\text{ and }\phi^\M\notin N(t)\\
    \iff & \M,t\vDash\phi\text{ and }\phi^{\M}\neq \{t\}\text{ and }\phi^{\M}\neq \{s,t\}\\
    \iff & \text{false}\\
\end{array}
\]
\[
\begin{array}{ll}
     & \M',t'\vDash\bullet\phi \\
    \iff & \M',t'\vDash\phi\text{ and }\phi^{\M'}\notin N'(t')\\
    \iff & \M',t'\vDash\phi\text{ and }\phi^{\M'}\neq \emptyset\text{ and }\phi^{\M'}\neq \{s'\}\text{ and }\phi^{\M'}\neq \{t'\}\text{ and }\phi^{\M'}\neq \{s',t'\}\\
    \iff & \text{false}\\
\end{array}
\]

Thus $\M,t\vDash\bullet\phi$ iff $\M',t'\vDash\bullet\phi$. Therefore, $\M,t\vDash\phi$ iff $\M',t'\vDash\phi$ for all $\phi\in\mathcal{L}(\nabla,\bullet)$.
\[
\begin{array}{ll}
     & \M,s\vDash\bullet\phi \\
    \iff  & \M,s\vDash\phi\text{ and }\phi^\M\notin N(s)\\
    \iff & \M,s\vDash\phi\text{ and }\phi^\M\neq \{t\}\text{ and }\phi^\M\neq \{s,t\}\\
    \iff & \M,s\vDash\phi\text{ and }\M,t\nvDash\phi\\
    \stackrel{\text{IH}}{\iff} & \M',s'\vDash\phi\text{ and }\M',t'\nvDash\phi\\
    \iff & \M',s'\vDash\phi\text{ and }\phi^{\M'}\neq \{t'\}\text{ and }\phi^{\M'}\neq \{s',t'\}\\
    \iff & \M',s'\vDash\phi\text{ and }\phi^{\M'}\notin N'(s')\\
    \iff & \M',s'\vDash\bullet\phi\\
\end{array}
\]

Therefore, $\M,s\vDash\phi$ iff $\M',s'\vDash\phi$ for all $\phi\in\mathcal{L}(\nabla,\bullet)$. This completes the proof of $(\ast\ast)$.

\weg{It remains only to show that $\M',s'\vDash\bullet\phi$ iff $\M,s'\vDash\phi$, and $\M',t'\nvDash\bullet\phi$. For the latter, note that $\M',t'\vDash\bullet\phi$ would imply that $\phi^{\M'}\notin N'(t')$, which then entails that $\phi$ cannot be interpreted on $\M'$: a contradiction. For the former, we have the following equivalences.
\[
\begin{array}{ll}
     & \M',s'\vDash\bullet\phi \\
    \iff & \M',s'\vDash\phi\text{ and }\phi^{\M'}\notin N'(s')\\
    \iff & \M',s'\vDash\phi\text{ and }(\text{either }\{s'\}\notin N'(s')\text{ or }\{s',t'\}\notin N'(s'))\\
    \iff & \M',s'\vDash\phi\\
\end{array}
\]
}
If $(5)$ were to be defined by a set of $\mathcal{L}(\nabla,\bullet)$-formulas, say $\Gamma$, then as $\F$ has $(5)$, we would have $\F\vDash\Gamma$, thus we should also have $\F'\vDash\Gamma$, that is, $\F'$ has $(5)$: a contradiction. Therefore, $(5)$ is not definable in $\mathcal{L}(\nabla,\bullet)$.
\end{proof}

\section{Axiomatizations}\label{sec.axiomatizations}

In this section, we axiomatize $\mathcal{L}(\nabla,\bullet)$ over various classes of neighborhood frames.

\subsection{Classical logic}
\subsubsection{Proof system and Soundness}

\begin{definition}
The classical logic of $\mathcal{L}(\nabla,\bullet)$, denoted ${\bf E^{\nabla\bullet}}$, consists of the following axioms and inference rules:
\[
\begin{array}{ll}
    \text{TAUT} & \text{all instances of tautologies}\\
    \text{E1} & \nabla\phi\lra \nabla\neg\phi \\
    \text{E2} & \bullet\phi\to\phi\\
    \text{E3} & \nabla\phi\to\bullet\phi\vee\bullet\neg\phi\\
    \text{MP} & \dfrac{\phi,\phi\to\psi}{\psi}\\
    \text{RE}\nabla& \dfrac{\phi\lra\psi}{\nabla\phi\lra\nabla\psi}\\
    \text{RE}\bullet & \dfrac{\phi\lra\psi}{\bullet\phi\lra\bullet\psi}\\
\end{array}
\]
\end{definition}

Intuitively, $\text{E1}$ says that one is (first-order) ignorant whether a proposition holds if and only if one is ignorant whether its negation holds; $\text{E2}$ says that one is (Fitchean) ignorant of the fact that $\phi$ only if it is the case that $\phi$; $\text{E3}$ describes the relationship between Fitchean ignorance and first-order ignorance: if one is ignorant whether $\phi$, then either one is ignorant of the fact that $\phi$ or one is ignorant of the fact that $\phi$ is not the case; $\text{RE}\nabla$ and $\text{RE}\bullet$ concerns the replacement of equivalences for first-order ignorance and Fitchean ignorance, respectively.

It is straightforward by axiom $\text{E2}$ that $\bullet\nabla\phi\to\nabla\phi$ is provable in ${\bf E^{\nabla\bullet}}$, which says that under any condition, Rumsfeld ignorance implies first-order ignorance.

The following result states how to derive Fitchean ignorance, which means that if one is {\em ignorant whether} a {\em true} proposition holds, then one is {\em ignorant of} the proposition. It will be used in several places below (for instance, Lemma~\ref{lem.truthlemma-e},  Lemma~\ref{lem.truthlemma-ec}, and Prop.~\ref{prop.impliescirctop}).
\begin{proposition}\label{prop.circimpliesdelta}
$\vdash\nabla\phi\land\phi\to\bullet\phi$. Equivalently,
$\vdash\circ\phi\land\phi\to\Delta\phi$.
\end{proposition}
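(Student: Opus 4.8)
The plan is to prove the first formula $\vdash\nabla\phi\land\phi\to\bullet\phi$ directly from the axioms, and then obtain the equivalent formulation by a purely propositional manipulation of the defined duals. Recall from the semantics that $\bullet\phi$ unpacks as $\phi$ together with $\phi^\M\notin N(s)$, so to conclude $\bullet\phi$ from $\nabla\phi\land\phi$ I need to extract, using the axioms, both the truth of $\phi$ (which is already a conjunct) and the ``non-ignorance-of-negation'' information that rules out the possibility that it is $\bullet\neg\phi$ rather than $\bullet\phi$ that holds.

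The key step is to use axiom $\text{E3}$, $\nabla\phi\to\bullet\phi\vee\bullet\neg\phi$. First I would take $\nabla\phi$ as a hypothesis and apply $\text{E3}$ to obtain $\bullet\phi\vee\bullet\neg\phi$. The second disjunct is the one to eliminate: by axiom $\text{E2}$ applied to $\neg\phi$, we have $\bullet\neg\phi\to\neg\phi$, so under the additional hypothesis $\phi$ the disjunct $\bullet\neg\phi$ is propositionally inconsistent. Hence from $\nabla\phi\land\phi$ and the disjunction $\bullet\phi\vee\bullet\neg\phi$, a tautological instance (via $\text{TAUT}$ and $\text{MP}$) forces the first disjunct $\bullet\phi$. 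This establishes $\vdash\nabla\phi\land\phi\to\bullet\phi$.

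For the equivalent form $\vdash\circ\phi\land\phi\to\Delta\phi$, I would argue by contraposition at the level of the abbreviations. Recall $\Delta\phi$ abbreviates $\neg\nabla\phi$ and $\circ\phi$ abbreviates $\neg\bullet\phi$. Contraposing the first formula gives $\vdash\neg\bullet\phi\to\neg(\nabla\phi\land\phi)$, i.e. $\vdash\circ\phi\to(\neg\nabla\phi\vee\neg\phi)$. Rearranging propositionally, this is $\vdash\circ\phi\land\phi\to\neg\nabla\phi$, which is exactly $\vdash\circ\phi\land\phi\to\Delta\phi$ once we substitute the abbreviation $\Delta\phi=\neg\nabla\phi$. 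Each of these rewritings is a single application of $\text{TAUT}$ with $\text{MP}$, so no new modal reasoning is needed beyond the first part.

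I do not expect any serious obstacle here: the whole argument is a short propositional derivation sitting on top of $\text{E2}$ and $\text{E3}$. The only point requiring a little care is keeping the defined symbols $\circ$ and $\Delta$ correctly expanded (so that the ``equivalently'' claim really is the contrapositive of the first, rather than an independent statement), and making sure $\text{E2}$ is invoked at the instance $\neg\phi$ rather than $\phi$ when eliminating the unwanted disjunct. Everything else is routine tautological reasoning closed under $\text{MP}$.
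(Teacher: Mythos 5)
Your proposal is correct and follows essentially the same route as the paper: axiom $\text{E3}$ yields $\bullet\phi\vee\bullet\neg\phi$ from $\nabla\phi$, axiom $\text{E2}$ instantiated at $\neg\phi$ eliminates the second disjunct under the hypothesis $\phi$, and the rest is tautological reasoning with $\text{MP}$. Your explicit contraposition argument for the ``equivalently'' formulation is a detail the paper leaves implicit, but it is exactly the intended propositional duality via the abbreviations $\Delta\phi={\neg\nabla}\phi$ and $\circ\phi={\neg\bullet}\phi$.
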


\begin{proof}
We have the following proof sequence:
\[
\begin{array}{lll}
    (i) & \nabla\phi\to\bullet\phi\vee\bullet\neg\phi & \text{E3} \\
    (ii) & \bullet\neg\phi\to\neg\phi & \text{E2}\\
    (iii)& \nabla\phi\to\bullet\phi\vee\neg\phi& (i),(ii)\\
    (iv) & \nabla\phi\land\phi\to\bullet\phi& (iii)\\
    %(iv) & \circ\phi\land\phi\to\Delta\phi& (iii),\text{Def.~}\circ,\text{Def.~}\Delta\\
\end{array}
\]
\end{proof}

As a corollary, we have $\vdash \nabla\nabla\phi\land\nabla\phi\to\bullet\nabla\phi$, and thus $\vdash\nabla\nabla\phi\land\nabla\phi\to\bullet\nabla\phi$. This means, in Fine~\cite{Fine:2018} terms, second-order ignorance plus first-order ignorance implies Rumsfeld ignorance. On one hand, this is not noticed in Fine~\cite{Fine:2018}; on the other hand, this plus the transitivity entails that second-order ignorance implies Rumsfeld ignorance, a result in the paper in question.

The following result indicates how to derive a proposition from Fitchean ignorance.
\begin{proposition}\label{prop.impliesphi}
$\vdash\bullet(\circ \phi\vee\psi\to \phi)\to \phi$
\end{proposition}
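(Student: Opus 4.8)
The goal is to prove $\vdash\bullet(\circ \phi\vee\psi\to \phi)\to \phi$. Let me abbreviate $\chi := \circ\phi\vee\psi\to\phi$, so I must show $\vdash \bullet\chi\to\phi$.

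My plan is to unpack the Fitchean ignorance axiom E2, which gives $\vdash\bullet\chi\to\chi$. Thus $\vdash\bullet\chi\to(\circ\phi\vee\psi\to\phi)$, and since $\psi$ can be dropped from the disjunction in the antecedent by weakening (i.e. $\circ\phi\to(\circ\phi\vee\psi)$ propositionally), I get $\vdash\bullet\chi\to(\circ\phi\to\phi)$. So the crux reduces to deriving $\phi$ from $\bullet\chi$ together with $\circ\phi\to\phi$. The natural move is to argue by cases on $\circ\phi$. If $\circ\phi$ holds, then $\phi$ follows immediately from $\circ\phi\to\phi$. So the real work is the case $\neg\circ\phi$, i.e. $\bullet\phi$; but $\bullet\phi\to\phi$ is just axiom E2 again, so $\phi$ follows there too. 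Hence in both cases $\phi$ is derivable, and a propositional combination yields $\vdash\bullet\chi\to\phi$.

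Let me lay out the intended proof sequence more carefully. First, instantiate E2 at $\chi$ to obtain $(1)\ \bullet\chi\to\chi$. Second, note the tautology $(2)\ \chi\to(\circ\phi\to\phi)$, which is just $(\circ\phi\vee\psi\to\phi)\to(\circ\phi\to\phi)$, valid propositionally because $\circ\phi$ entails $\circ\phi\vee\psi$. From $(1)$ and $(2)$ by transitivity get $(3)\ \bullet\chi\to(\circ\phi\to\phi)$. Third, recall $\circ\phi$ abbreviates $\neg\bullet\phi$, so E2 ($\bullet\phi\to\phi$) is precisely $(4)\ \neg\circ\phi\to\phi$. Now combine $(3)$ and $(4)$: assuming $\bullet\chi$, if $\circ\phi$ then $(3)$ gives $\phi$, and if $\neg\circ\phi$ then $(4)$ gives $\phi$; this case split is a single tautological consequence, yielding $(5)\ \bullet\chi\to\phi$, which is the claim.

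I do not expect any genuine obstacle here — the proof is a short propositional manipulation once the right abbreviation is fixed, and it uses only E2 (twice, via the definition of $\circ$) together with tautologies and MP. The only point requiring a moment's care is recognizing that $\circ\phi$ is definitionally $\neg\bullet\phi$, so that the "hard" case $\neg\circ\phi$ is exactly the hypothesis $\bullet\phi$ to which E2 applies directly; this is what makes the auxiliary disjunct $\psi$ harmless and the whole derivation go through. No instance of E1, E3, or the replacement rules $\text{RE}\nabla,\text{RE}\bullet$ is needed.
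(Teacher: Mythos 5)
Your proof is correct and follows essentially the same route as the paper's: both derive $\bullet(\circ\phi\vee\psi\to\phi)\to(\circ\phi\to\phi)$ from E2 and the tautology dropping $\psi$, then combine this with the second instance $\bullet\phi\to\phi$ of E2 (read as $\neg\circ\phi\to\phi$) by a propositional case split on $\circ\phi$. Your five-step sequence matches the paper's proof line for line, so there is nothing to add.
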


\begin{proof}
We have the following proof sequence.
\[
\begin{array}{lll}
    (1) & \bullet(\circ \phi\vee\psi\to \phi)\to (\circ\phi\vee\psi\to \phi) & \text{E2} \\
    (2) & (\circ\phi\vee\psi\to \phi) \to (\circ\phi\to \phi) & \text{TAUT}\\
    (3) & \bullet(\circ \phi\vee\psi\to \phi)\to (\circ\phi\to\phi) & (1),(2)\\
    (4) & \bullet \phi\to \phi &\text{E2}\\
    (5) & \bullet(\circ \phi\vee\psi\to \phi)\to \phi & (3),(4)\\
\end{array}
\]
\end{proof}

\begin{proposition}\label{prop.e-soundness}
${\bf E^{\nabla\bullet}}$ is sound with respect to the class of all (neighborhood) frames.
\end{proposition}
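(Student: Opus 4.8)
The plan is to establish soundness in the standard way: verify that each axiom is valid on the class of all neighborhood frames and that each inference rule preserves validity. Since validity on a frame means truth at every point of every model based on that frame, it suffices to check that each axiom is true at an arbitrary point $(\M,s)$ of an arbitrary model $\M=\lr{S,N,V}$, unfolding the semantic clauses for $\nabla$ and $\bullet$ given in the semantics table. The rule \textbf{MP} preserves validity by the truth condition for $\to$ (really for $\neg$ and $\land$), and \textbf{TAUT} is immediate since propositional tautologies hold at every point.

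First I would handle the two replacement rules \textbf{RE}$\nabla$ and \textbf{RE}$\bullet$. The key observation is that if $\phi\lra\psi$ is valid over all frames, then $\phi^\M=\psi^\M$ in every model $\M$ (the truth sets coincide). Since the clauses for $\nabla\phi$ and $\bullet\phi$ depend on $\M$ only through the truth set $\phi^\M$ (and, for $\bullet$, also the local truth value $\M,s\vDash\phi$, which is itself determined by whether $s\in\phi^\M$), equal truth sets immediately yield $\M,s\vDash\nabla\phi$ iff $\M,s\vDash\nabla\psi$ and similarly for $\bullet$. This gives validity of the conclusions, so both rules preserve validity.

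Next I would check the three axioms. For \textbf{E2}, $\bullet\phi\to\phi$, unfold $\M,s\vDash\bullet\phi$ as ``$\M,s\vDash\phi$ and $\phi^\M\notin N(s)$''; the first conjunct already gives $\M,s\vDash\phi$, so the implication holds trivially. For \textbf{E1}, $\nabla\phi\lra\nabla\neg\phi$, note that $(\neg\phi)^\M=S\backslash\phi^\M$, so the condition ``$\phi^\M\notin N(s)$ and $S\backslash\phi^\M\notin N(s)$'' defining $\M,s\vDash\nabla\phi$ is literally symmetric under swapping $\phi^\M$ with its complement, hence equivalent to the condition defining $\M,s\vDash\nabla\neg\phi$. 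For \textbf{E3}, $\nabla\phi\to\bullet\phi\vee\bullet\neg\phi$, assume $\M,s\vDash\nabla\phi$, so $\phi^\M\notin N(s)$ and $S\backslash\phi^\M\notin N(s)$. Since $\M,s\vDash\phi$ or $\M,s\vDash\neg\phi$, in the first case $\phi^\M\notin N(s)$ gives $\M,s\vDash\bullet\phi$, and in the second case $(\neg\phi)^\M=S\backslash\phi^\M\notin N(s)$ gives $\M,s\vDash\bullet\neg\phi$; either way the disjunction holds.

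I do not anticipate a genuine obstacle here, as the whole argument is a routine semantic verification; the only point requiring minor care is being explicit that $(\neg\phi)^\M=S\backslash\phi^\M$ and that the $\bullet$ clause's local conjunct is captured by $s\in\phi^\M$, so that the replacement rules go through cleanly. The validity of each item holds at an arbitrary pointed model with no frame condition invoked, which is exactly why soundness is with respect to \emph{all} neighborhood frames.
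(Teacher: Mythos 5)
Your proof is correct and takes essentially the same approach as the paper: the paper treats all items except $\text{E3}$ as routine and validates $\text{E3}$ by exactly your case split on whether $\phi$ holds at $s$ (citing this argument from the first half of the proof of Prop.~\ref{prop.exp-bullet-nabla-c}, which holds on arbitrary models). Your explicit checks of $\text{E1}$, $\text{E2}$, $\text{MP}$ and the replacement rules via $\phi^\M=\psi^\M$ are precisely the steps the paper leaves implicit.
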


\begin{proof}
It suffices to show the validity of axiom $\text{E3}$. This has been shown in the proof of Prop.~\ref{prop.exp-bullet-nabla-c}.
%Let $\M=\lr{S,N,V}$ be any model and $s\in S$. Suppose that $\M,s\vDash\nabla\phi$. Then $\phi^\M\notin N(s)$ and $S\backslash\phi^\M\notin N(s)$, that is, $(\neg\phi)^\M\notin N(s)$. We have either $\M,s\vDash\phi$ or $\M,s\vDash\neg\phi$. If $\M,s\vDash\phi$, since $\phi^\M\notin N(s)$, we infer that $\M,s\vDash\bullet\phi$; if $\M,s\vDash\neg\phi$, since $(\neg\phi)^\M\notin N(s)$, we derive that $\M,s\vDash{\bullet\neg}\phi$. Therefore, $\M,s\vDash\bullet\phi\vee{\bullet\neg}\phi$. Since $(\M,s)$ is arbitrary, this establishes the validity of axiom $\text{E3}$.
\end{proof}

\subsubsection{Completeness}

\begin{definition}\label{def.e-cm}
The canonical model for ${\bf E^{\nabla\bullet}}$ is $\M^c=\lr{S^c,N^c,V^c}$, where
\begin{itemize}
    \item[-] $S^c=\{s\mid s\text{ is a maximal consistent set for }{\bf E^{\nabla\bullet}}\}$,
    \item[-] $N^c(s)=\{|\phi|\mid \circ\phi\land\Delta\phi\in s\}$,
    \item[-] $V^c(p)=\{s\in S^c\mid p\in s\}$.
\end{itemize}
\end{definition}

\begin{lemma}\label{lem.truthlemma-e}
For all $\phi\in\mathcal{L}(\nabla,\bullet)$, for all $s\in S^c$, we have
$$\M^c,s\vDash \phi\iff \phi\in s.$$
That is, $|\phi|=\phi^{\M^c}$.
\end{lemma}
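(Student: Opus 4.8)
The plan is to prove the Truth Lemma by induction on the structure of $\phi$, with the Boolean cases being routine. The crux lies in the two modal cases, $\nabla\phi$ and $\bullet\phi$, and everything hinges on understanding the canonical neighborhood function $N^c(s)=\{|\phi|\mid \circ\phi\land\Delta\phi\in s\}$, where $|\phi|$ denotes the proof set $\{t\in S^c\mid \phi\in t\}$. The first thing I would establish, before the induction proper, is the coherence of this definition: since $N^c$ is defined via proof sets $|\phi|$, I need to check that membership of $|\phi|$ in $N^c(s)$ depends only on the set $|\phi|$ and not on the particular syntactic representative $\phi$. This is where the rules $\text{RE}\nabla$ and $\text{RE}\bullet$ (equivalently, replacement of provable equivalents) do the work: if $|\phi|=|\psi|$ then $\vdash\phi\lra\psi$ by a standard Lindenbaum argument, hence $\circ\phi\land\Delta\phi\in s$ iff $\circ\psi\land\Delta\psi\in s$. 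I would state this as a preliminary observation.

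For the inductive step, I would first translate the semantic clauses into statements about $N^c$ using the induction hypothesis $|\phi|=\phi^{\M^c}$. Consider the case $\bullet\phi$. By semantics, $\M^c,s\vDash\bullet\phi$ iff $\M^c,s\vDash\phi$ and $\phi^{\M^c}\notin N^c(s)$, which by IH becomes $\phi\in s$ and $|\phi|\notin N^c(s)$. I must show this is equivalent to $\bullet\phi\in s$. For the forward direction, I would argue contrapositively or directly: assuming $\bullet\phi\notin s$, i.e.\ $\circ\phi\in s$ (using that $\circ$ abbreviates $\neg\bullet$), I need to produce the failure of the right-hand side. The key algebraic fact is that $|\phi|\in N^c(s)$ exactly when $\circ\phi\land\Delta\phi\in s$; so I want to relate $\bullet\phi\in s$ to the conjunction $\circ\phi\land\Delta\phi$. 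Here Proposition~\ref{prop.circimpliesdelta}, namely $\vdash\circ\phi\land\phi\to\Delta\phi$, is the essential tool: it lets me convert information about $\circ\phi$ and $\phi$ into information about $\Delta\phi$, thereby controlling whether $|\phi|\in N^c(s)$. I would carefully chase the maximal-consistency closure conditions, using that $s$ is deductively closed and negation-complete, to show $\bullet\phi\in s$ iff ($\phi\in s$ and $\circ\phi\land\Delta\phi\notin s$), which is iff ($\phi\in s$ and $|\phi|\notin N^c(s)$).

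For the case $\nabla\phi$, the semantics give $\M^c,s\vDash\nabla\phi$ iff $\phi^{\M^c}\notin N^c(s)$ and $S^c\setminus\phi^{\M^c}\notin N^c(s)$, i.e.\ by IH, $|\phi|\notin N^c(s)$ and $|\neg\phi|\notin N^c(s)$ (using $S^c\setminus|\phi|=|\neg\phi|$, a standard proof-set identity). I must match this against $\nabla\phi\in s$. Unfolding the definition of $N^c$, this means $\circ\phi\land\Delta\phi\notin s$ and $\circ\neg\phi\land\Delta\neg\phi\notin s$. Using axiom $\text{E1}$ ($\nabla\phi\lra\nabla\neg\phi$, hence $\Delta\phi\lra\Delta\neg\phi$) to collapse the two $\Delta$ conditions, together with the interplay between $\circ\phi$, $\circ\neg\phi$ and $\Delta\phi$, I would reduce both sides to the assertion $\Delta\phi\notin s$, i.e.\ $\nabla\phi\in s$. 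This is the step where axioms $\text{E1}$, $\text{E3}$, and Proposition~\ref{prop.circimpliesdelta} must be orchestrated together, and I expect the bookkeeping of which conjuncts force which membership to be the main obstacle: the difficulty is that a single neighborhood function $N^c$ must simultaneously witness the truth conditions of two distinct modalities $\nabla$ and $\bullet$, so I must verify that the definition of $N^c(s)$ in terms of $\circ\phi\land\Delta\phi$ is correctly calibrated to both clauses at once. Once both modal cases are verified, the induction is complete and the final identity $|\phi|=\phi^{\M^c}$ follows immediately.
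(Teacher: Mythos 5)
Your proposal is correct and takes essentially the same route as the paper's proof: induction on $\phi$, with both modal cases settled by reducing membership $|\phi|\in N^c(s)$ to $\circ\phi\land\Delta\phi\in s$, using axiom $\text{E1}$ for the $\nabla\phi$ case, axiom $\text{E2}$ plus Proposition~\ref{prop.circimpliesdelta} for the $\bullet\phi$ case, and the standard identity $S^c\setminus|\phi|=|\neg\phi|$ (your explicit well-definedness check for $N^c$ is a point the paper leaves implicit here, though it performs the analogous check for the monotone system). One minor inaccuracy that is not a gap: in the $\nabla\phi$ case axiom $\text{E3}$ is not needed directly --- your reduction of both sides to $\Delta\phi\notin s$ goes through with $\text{E1}$ and $\text{E2}$ alone (if $\Delta\phi\in s$ while both conjunctions fail, then $\bullet\phi,\bullet\neg\phi\in s$, contradicting consistency via $\text{E2}$), and $\text{E3}$ enters the truth lemma only through the derivation of Proposition~\ref{prop.circimpliesdelta}.
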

\begin{proof}
By induction on $\phi$. The nontrivial cases are $\nabla\phi$ and $\bullet\phi$.

For case $\nabla\phi$:

First, suppose that $\nabla\phi\in s$, to show that $\M^c,s\vDash\nabla\phi$. By supposition, we have $\Delta\phi\notin s$. Then by definition of $N^c$, we infer that $|\phi|\notin N^c(s)$. By supposition again and axiom \text{E1}, we have $\nabla\neg\phi\in s$, and thus $\Delta\neg\phi\notin s$, and hence $|\neg\phi|\notin N^c(s)$, that is, $S^c\backslash|\phi|\notin N^c(s)$. By induction hypothesis, we have $\phi^{\M^c}\notin N^c(s)$ and $S^c\backslash\phi^{\M^c}\notin N^c(s)$. Therefore, $\M^c,s\vDash\nabla\phi$.

Conversely, assume that $\nabla\phi\notin s$ (that is, $\nabla\neg\phi\notin s$), to show that $\M^c,s\nvDash\nabla\phi$. By assumption, $\Delta\phi\in s$ and $\Delta\neg\phi\in s$. Since $s\in S^c$, we have either $\phi\in s$ or $\neg\phi\in s$. If $\phi\in s$, then by axiom \text{E2}, $\circ\neg\phi\in s$, and then $|\neg\phi|\in N^c(s)$, viz. $S^c\backslash |\phi|\in N^c(s)$, which by induction hypothesis implies that $S^c\backslash\phi^{\M^c}\in N^c(s)$. If $\neg\phi\in s$, then again by axiom $\text{E2}$, $\circ\phi\in s$, thus $|\phi|\in N^c(s)$, which by induction hypothesis entails that $\phi^{\M^c}\in N^c(s)$. We have now shown that either $\phi^{\M^c}\in N^c(s)$ or $S^c\backslash\phi^{\M^c}\in N^c(s)$, and we therefore conclude that $\M^c,s\nvDash\nabla\phi$.

For case $\bullet\phi$:

First, suppose that $\bullet\phi\in s$, to show that $\M^c,s\vDash\bullet\phi$. By supposition and axiom \text{E2}, we have $\phi\in s$. By induction hypothesis, $\M^c,s\vDash\phi$. By supposition and definition of $N^c$, we infer that $|\phi|\notin N^c(s)$, which by induction means that $\phi^{\M^c}\notin N^c(s)$. Therefore, $\M^c,s\vDash\bullet\phi$.

Conversely, assume that $\bullet\phi\notin s$, to demonstrate that $\M^c,s\nvDash\bullet\phi$. By assumption, $\circ\phi\in s$. If $\M^c,s\nvDash\phi$, it is obvious that $\M^c,s\nvDash\bullet\phi$. Otherwise, by induction hypothesis, we have $\phi\in s$, then $\circ\phi\land\phi\in s$. By Prop.~\ref{prop.circimpliesdelta}, $\Delta\phi\in s$, and thus $|\phi|\in N^c(s)$, by induction we obtain $\phi^{\M^c}\in N^c(s)$, and therefore we have also $\M^c,s\nvDash\bullet\phi$.
\end{proof}

It is then a standard exercise to show the following.
\begin{theorem}\label{thm.e-comp}
${\bf E^{\nabla\bullet}}$ is sound and strongly complete with respect to the class of all neighborhood frames.
\end{theorem}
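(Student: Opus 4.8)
The plan is to establish Theorem~\ref{thm.e-comp} by the standard canonical-model method, reducing everything to the Truth Lemma (Lemma~\ref{lem.truthlemma-e}) that has just been proved. Soundness is already in hand from Proposition~\ref{prop.e-soundness}, so the entire task is strong completeness with respect to the class of all neighborhood frames. I would phrase this contrapositively: it suffices to show that every ${\bf E^{\nabla\bullet}}$-consistent set of formulas is satisfiable in some neighborhood model. Since the canonical model $\M^c$ of Definition~\ref{def.e-cm} is built on an arbitrary neighborhood frame (no frame conditions are imposed on $N^c$), satisfiability there immediately gives satisfiability over the class of \emph{all} frames, which is exactly what the theorem claims.

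The key steps, in order, are as follows. First I would invoke Lindenbaum's Lemma: any consistent set $\Gamma$ extends to a maximal consistent set $s\in S^c$, which requires only that ${\bf E^{\nabla\bullet}}$ contains all tautologies (TAUT) and is closed under modus ponens (MP), both of which hold. This step also silently uses that the rules $\text{RE}\nabla$ and $\text{RE}\bullet$ preserve the consistency machinery, but the only genuine input is that the logic is a classical propositional logic extended by axioms and congruence rules. Second, I would check that the canonical model is well defined, the one subtle point being that $N^c$ is well defined as a function on truth sets: if $|\phi|=|\psi|$ then $\vdash\phi\lra\psi$, so by $\text{RE}\nabla$ and $\text{RE}\bullet$ we get $\circ\phi\land\Delta\phi\in s \iff \circ\psi\land\Delta\psi\in s$, so membership of a set in $N^c(s)$ does not depend on the chosen representative formula. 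Third, and this is the crux already discharged, the Truth Lemma gives $\phi^{\M^c}=|\phi|$ for every $\phi$. Finally, given a consistent $\Gamma$, take its maximal consistent extension $s$; by the Truth Lemma $\M^c,s\vDash\phi$ for every $\phi\in\Gamma$, so $\Gamma$ is satisfiable, completing the argument.

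The phrase ``standard exercise'' in the paper signals that no step here is expected to be hard, and indeed the genuinely delicate work—matching the non-normal truth conditions for $\nabla$ and $\bullet$ against the canonical neighborhood function, and invoking Proposition~\ref{prop.circimpliesdelta} at the right moment—has already been absorbed into Lemma~\ref{lem.truthlemma-e}. The one place I would be slightly careful is the well-definedness of $N^c$ described above, since $N^c$ is defined via formulas but must ultimately be a function on subsets of $S^c$; this is routine but is the only spot where the congruence rules $\text{RE}\nabla$ and $\text{RE}\bullet$ are essential and where a careless write-up could leave a gap. Everything else is the boilerplate assembly of Lindenbaum plus Truth Lemma yields completeness, and because the canonical frame carries no constraints, strong completeness over the class of all frames drops out with no extra frame-condition verification.

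Concretely, I would write: Soundness is Proposition~\ref{prop.e-soundness}. For completeness, suppose $\Gamma$ is ${\bf E^{\nabla\bullet}}$-consistent. By Lindenbaum's Lemma extend $\Gamma$ to a maximal consistent set $s\in S^c$. Noting that $N^c$ is well defined (as $\text{RE}\nabla$ and $\text{RE}\bullet$ guarantee that $\circ\phi\land\Delta\phi\in s$ depends only on $|\phi|$), Lemma~\ref{lem.truthlemma-e} yields $\M^c,s\vDash\phi$ for all $\phi\in s$, hence for all $\phi\in\Gamma$. Thus $\Gamma$ is satisfiable on the neighborhood model $\M^c$, and since $\M^c$ is based on a frame with no special properties, $\Gamma$ is satisfiable over the class of all neighborhood frames. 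This establishes strong completeness.
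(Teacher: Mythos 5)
Your proposal is correct and is exactly the argument the paper leaves implicit behind its phrase ``standard exercise'': soundness is Proposition~\ref{prop.e-soundness}, and strong completeness follows by Lindenbaum extension of a consistent set to a point of the canonical model $\M^c$ of Definition~\ref{def.e-cm}, with the Truth Lemma (Lemma~\ref{lem.truthlemma-e}) doing all the real work and no frame conditions to verify. Your extra remark on the well-definedness of $N^c$ via $\text{RE}\nabla$ and $\text{RE}\bullet$ is a sensible precaution the paper does not spell out, but it introduces no deviation from the paper's route.
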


\subsection{Extensions}

\subsubsection{${\bf E_c^{\nabla\bullet}}$}

Define ${\bf E_c^{\nabla\bullet}}$ be the smallest extension of ${\bf E^{\nabla\bullet}}$ with the following axiom, denoted $\text{E4}$:
$$\bullet\phi\to \nabla\phi.$$

Intuitively, $\text{E4}$ says that Fitchean ignorance implies first-order ignorance.

From $\text{E4}$ we can easily prove $\Delta\phi\to\circ\phi$. This turns the canonical model for ${\bf E^{\nabla\bullet}}$ (Def.~\ref{def.e-cm}) into the following simpler one.

\begin{definition}
The canonical model for ${\bf E_c^{\nabla\bullet}}$ is a triple $\M^c=\lr{S^c,N^c,V^c}$, where
\begin{itemize}
\item $S^c=\{s\mid s\text{ is a maximal consistent set for }{\bf E_c^{\nabla\bullet}}\}$
\item $N^c(s)=\{|\phi|\mid \Delta\phi\in s\}$
\item $V^c(p)=\{s\in S^c\mid p\in s\}$.
\end{itemize}
\end{definition}

\begin{lemma}\label{lem.truthlemma-ec}
For all $\phi\in \mathcal{L}(\nabla,\bullet)$, for all $s\in S^c$, we have
$$\M^c,s\vDash\phi\iff \phi\in s.$$
That is, $|\phi|=\phi^{\M^c}$.
\end{lemma}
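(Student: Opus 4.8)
The plan is to prove the truth lemma for ${\bf E_c^{\nabla\bullet}}$ by induction on the structure of $\phi$, exactly mirroring the proof of Lemma~\ref{lem.truthlemma-e}, but taking advantage of the simpler canonical neighborhood function $N^c(s)=\{|\phi|\mid \Delta\phi\in s\}$. The Boolean cases (propositional variables, negation, conjunction) are routine and handled by the standard maximal-consistent-set properties, so I would state them briefly and concentrate on the two modal cases $\nabla\phi$ and $\bullet\phi$.

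For the case $\nabla\phi$, I would argue as follows. Suppose $\nabla\phi\in s$; then $\Delta\phi\notin s$, so by definition of $N^c$ we get $|\phi|\notin N^c(s)$. Using axiom $\text{E1}$ ($\nabla\phi\lra\nabla\neg\phi$) we also have $\nabla\neg\phi\in s$, hence $\Delta\neg\phi\notin s$, giving $|\neg\phi|=S^c\backslash|\phi|\notin N^c(s)$. By the induction hypothesis $|\phi|=\phi^{\M^c}$, we conclude $\phi^{\M^c}\notin N^c(s)$ and $S^c\backslash\phi^{\M^c}\notin N^c(s)$, i.e.\ $\M^c,s\vDash\nabla\phi$. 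For the converse, suppose $\nabla\phi\notin s$; then $\Delta\phi\in s$, so directly $|\phi|\in N^c(s)$, which by the induction hypothesis gives $\phi^{\M^c}\in N^c(s)$, whence $\M^c,s\nvDash\nabla\phi$. Notice that here, unlike in Lemma~\ref{lem.truthlemma-e}, I do \emph{not} need to split into cases on whether $\phi\in s$ or $\neg\phi\in s$ and invoke Proposition~\ref{prop.circimpliesdelta} --- the definition $N^c(s)=\{|\phi|\mid\Delta\phi\in s\}$ delivers membership immediately. This is precisely the simplification that axiom $\text{E4}$ buys us.

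For the case $\bullet\phi$, suppose $\bullet\phi\in s$. By axiom $\text{E2}$ we get $\phi\in s$, so by the induction hypothesis $\M^c,s\vDash\phi$. From $\bullet\phi\in s$ we have $\circ\phi\notin s$; to conclude $|\phi|\notin N^c(s)$ I would use axiom $\text{E4}$, which yields $\Delta\phi\to\circ\phi$ (noted in the text just before the definition), contrapositively $\neg\circ\phi\to\nabla\phi$, and hence $\Delta\phi\notin s$ since $\circ\phi\notin s$. Thus $|\phi|\notin N^c(s)$, and by the induction hypothesis $\phi^{\M^c}\notin N^c(s)$, giving $\M^c,s\vDash\bullet\phi$. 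Conversely, if $\bullet\phi\notin s$ then $\circ\phi\in s$; if $\M^c,s\nvDash\phi$ we are done immediately, and otherwise $\phi\in s$ by the induction hypothesis, so $\circ\phi\land\phi\in s$, whence $\Delta\phi\in s$ by Proposition~\ref{prop.circimpliesdelta}, giving $|\phi|\in N^c(s)$ and so $\phi^{\M^c}\in N^c(s)$, i.e.\ $\M^c,s\nvDash\bullet\phi$.

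The only genuinely load-bearing point --- and the one I would flag as the main thing to get right rather than a true obstacle --- is the interaction between the new axiom $\text{E4}$ and the redefined $N^c$: I must verify that the derived theorem $\Delta\phi\to\circ\phi$ is correctly used so that membership $|\phi|\in N^c(s)$ (defined via $\Delta\phi\in s$) aligns with both the $\nabla$ and $\bullet$ truth conditions. Once that equivalence is in hand, both modal cases collapse to short arguments, and the Boolean cases follow routinely. After establishing the truth lemma, a standard Lindenbaum/canonical-model argument (as remarked after Lemma~\ref{lem.truthlemma-e}) will yield soundness and strong completeness of ${\bf E_c^{\nabla\bullet}}$ over the appropriate class of frames.
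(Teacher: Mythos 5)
Your proposal is correct and follows essentially the same route as the paper's proof: the paper likewise handles only the modal cases, disposes of $\nabla\phi$ by the argument you give (it simply cites \cite[Lemma~1]{FanvD:neighborhood} for it), and proves the $\bullet\phi$ case exactly as you do --- forward via axiom $\text{E2}$ together with $\text{E4}$ (your use of the derived $\Delta\phi\to\circ\phi$ is just $\text{E4}$ in contrapositive form, so $\bullet\phi\in s$ yields $\nabla\phi\in s$ and hence $|\phi|\notin N^c(s)$), and backward via $\circ\phi\land\phi\in s$ and Proposition~\ref{prop.circimpliesdelta}. Your flagged point about $\text{E4}$ aligning the $\Delta$-based definition of $N^c$ with both truth conditions is precisely the simplification the paper exploits.
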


\begin{proof}
By induction on $\phi$. The nontrivial cases are $\nabla\phi$ and $\bullet\phi$. The case $\nabla\phi$ has been shown in~\cite[Lemma~1]{FanvD:neighborhood}. It suffices to show the case $\bullet\phi$.

Suppose that $\bullet\phi\in s$. Then by axiom $\text{E2}$, we have $\phi\in s$; by axiom $\text{E4}$, we derive that $\nabla\phi\in s$, and thus $\Delta\phi\notin s$. This follows that $|\phi|\notin N^c(s)$. By induction hypothesis, $\M^c,s\vDash\phi$ and $\phi^{\M^c}\notin N^c(s)$. Therefore, $\M^c,s\vDash\bullet\phi$.

Conversely, assume that $\bullet\phi\notin s$, to show that $\M^c,s\nvDash\bullet\phi$, which by induction hypothesis amounts to showing that $\phi\notin s$ or $|\phi|\in N^c(s)$. For this, suppose that $\phi\in s$, this plus the assumption implies that $\circ\phi\land\phi\in s$. Then by Prop.~\ref{prop.circimpliesdelta}, $\Delta\phi\in s$, and therefore $|\phi|\in N^c(s)$.
\end{proof}

\begin{proposition}
$\M^c$ possesses the property $(c)$.
\end{proposition}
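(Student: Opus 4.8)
The goal is to show that the canonical model $\M^c$ for ${\bf E_c^{\nabla\bullet}}$ satisfies property $(c)$, namely that for every $s\in S^c$, if $X\in N^c(s)$ then $S^c\backslash X\in N^c(s)$. The plan is to unfold the definition of $N^c(s)=\{|\phi|\mid \Delta\phi\in s\}$ and verify the closure under complements directly at the level of maximal consistent sets, using the already-established Truth Lemma (Lemma~\ref{lem.truthlemma-ec}) to identify $|\phi|$ with $\phi^{\M^c}$ and hence with $S^c\backslash|\neg\phi|$.

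First I would take an arbitrary $X\in N^c(s)$. By definition of $N^c$, there is a formula $\phi$ with $\Delta\phi\in s$ and $X=|\phi|$. The key observation is that the complement $S^c\backslash X = S^c\backslash|\phi|$ is exactly $|\neg\phi|$: this follows because each $s\in S^c$ is maximal consistent, so $\phi\in s$ iff $\neg\phi\notin s$, giving $S^c\backslash|\phi|=|\neg\phi|$. Therefore it suffices to show that $|\neg\phi|\in N^c(s)$, which by definition of $N^c$ amounts to showing $\Delta\neg\phi\in s$. Since $\Delta\phi$ abbreviates ${\neg\nabla}\phi$, the axiom $\text{E1}$ ($\nabla\phi\lra\nabla\neg\phi$) immediately yields $\Delta\phi\lra\Delta\neg\phi$ as a provable equivalence; hence from $\Delta\phi\in s$ and maximal consistency we conclude $\Delta\neg\phi\in s$, and thus $S^c\backslash X=|\neg\phi|\in N^c(s)$, as required.

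I do not anticipate a genuine obstacle here, since $(c)$ is essentially the semantic counterpart of axiom $\text{E1}$ and the argument is a short manipulation at the syntactic level. The only point that requires a little care is the set-theoretic identity $S^c\backslash|\phi|=|\neg\phi|$, but this is a standard consequence of maximal consistency and need only be stated rather than belaboured. Note in particular that the proof uses only $\text{E1}$ (and the maximality of the sets in $S^c$), not the new axiom $\text{E4}$; the role of $\text{E4}$ was confined to simplifying the canonical neighborhood function into the form $N^c(s)=\{|\phi|\mid \Delta\phi\in s\}$, upon which the present verification rests.
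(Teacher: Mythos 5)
Your proof is correct and is essentially the argument the paper delegates to its citation of \cite[Thm.~2]{FanvD:neighborhood}: the identity $S^c\backslash|\phi|=|\neg\phi|$ from maximal consistency, plus the contraposed axiom $\text{E1}$ giving $\vdash\Delta\phi\lra\Delta\neg\phi$, is exactly how that result is established for the canonical neighborhood function $N^c(s)=\{|\phi|\mid \Delta\phi\in s\}$. You also correctly observe that $\text{E4}$'s only role is to justify this simplified form of $N^c$, so nothing further is needed.
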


\begin{proof}
Refer to~\cite[Thm.~2]{FanvD:neighborhood}.
\end{proof}

Now it is a standard exercise to show the following.
\begin{theorem}
${\bf E_c^{\nabla\bullet}}$ is sound and strongly complete with respect to the class of $(c)$-frames.
\end{theorem}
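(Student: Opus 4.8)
The plan is to establish soundness and strong completeness of ${\bf E_c^{\nabla\bullet}}$ with respect to the class of $(c)$-frames by the standard canonical-model method, leveraging the machinery already built in the excerpt. Soundness reduces to two tasks: first, every axiom and rule of ${\bf E^{\nabla\bullet}}$ is already sound over all frames (Prop.~\ref{prop.e-soundness}), and the classes of $(c)$-frames are a subclass, so those carry over immediately; second, I would verify that the new axiom $\text{E4}$, namely $\bullet\phi\to\nabla\phi$, is valid on every $(c)$-frame. For this I would unfold the semantics: if $\M,s\vDash\bullet\phi$ then $\phi^\M\notin N(s)$, and I must show $\phi^\M\notin N(s)$ and $S\backslash\phi^\M\notin N(s)$. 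The first conjunct is immediate; for the second, I would use property $(c)$ (closure under complements) contrapositively: if $S\backslash\phi^\M\in N(s)$ then $\phi^\M\in N(s)$, contradicting $\bullet\phi$. This gives validity of $\text{E4}$ on $(c)$-frames, completing soundness.

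For completeness, the essential structural facts have already been assembled. The truth lemma (Lemma~\ref{lem.truthlemma-ec}) establishes that $|\phi|=\phi^{\M^c}$ in the canonical model for ${\bf E_c^{\nabla\bullet}}$, and the preceding proposition (citing \cite[Thm.~2]{FanvD:neighborhood}) establishes that $\M^c$ possesses property $(c)$. With these in hand, strong completeness follows by the usual argument: given any ${\bf E_c^{\nabla\bullet}}$-consistent set $\Gamma$, I would extend it to a maximal consistent set $s\in S^c$ via a Lindenbaum-style construction (using MP and the fact that the logic is closed under the stated rules). Then by the truth lemma every $\phi\in\Gamma\subseteq s$ satisfies $\M^c,s\vDash\phi$, so $\Gamma$ is satisfiable in the canonical model, which by the $(c)$-property proposition is itself a $(c)$-model. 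Hence $\Gamma$ is satisfiable over the class of $(c)$-frames, which is exactly strong completeness.

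The remark ``it is a standard exercise'' is justified because all the genuinely novel content—the correct definition of $N^c(s)$ adapted to the simpler $\Delta\phi$ clause, the verification that the $\bullet$-case of the truth lemma goes through using $\text{E2}$, $\text{E4}$, and Prop.~\ref{prop.circimpliesdelta}, and the $(c)$-closure of $\M^c$—has been discharged in the three results immediately preceding the theorem. What remains is purely the Lindenbaum lemma and the packaging step, both of which are routine for any classical (RE-closed) modal logic under neighborhood semantics.

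The main obstacle, had it not already been handled, would have been the $(c)$-property of the canonical model and the smooth passage of the $\bullet\phi$ truth-lemma case; but since both are settled above, I anticipate no genuine difficulty and would present completeness in a few lines. The only point requiring mild care is ensuring that the Lindenbaum construction respects the rule $\text{RE}\nabla$ and $\text{RE}\bullet$ appropriately—but since these are derivation rules used to generate the theorems rather than closure conditions on maximal consistent sets, the standard construction applies without modification, and consistency is preserved at each step in the usual way.
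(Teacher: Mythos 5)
Your proposal is correct and follows essentially the same route as the paper: soundness via the validity of $\text{E4}$ on $(c)$-frames (which is exactly the argument in the proof of Prop.~\ref{prop.exp-nabla-bullet-c}), and strong completeness via the simplified canonical model, the truth lemma (Lemma~\ref{lem.truthlemma-ec}), the $(c)$-property of $\M^c$, and a standard Lindenbaum construction. This is precisely what the paper intends by calling the remaining step ``a standard exercise.''
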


In the neighborhood context $(c)$, there is some relationship between Rumsfeld ignorance, second-order ignorance and first-order ignorance. The following is immediate from the axiom $\text{E4}$.

\begin{proposition}
$\bullet\nabla\phi\to\nabla\nabla\phi$ is provable in ${\bf E^{\nabla\bullet}_c}$.
\end{proposition}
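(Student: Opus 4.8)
The plan is to observe that this is nothing more than a substitution instance of the defining axiom of the system. Recall that $\text{E4}$ is presented as a schema over all of $\mathcal{L}(\nabla,\bullet)$, namely $\bullet\psi\to\nabla\psi$ for every formula $\psi$. Since $\nabla\phi\in\mathcal{L}(\nabla,\bullet)$ whenever $\phi\in\mathcal{L}(\nabla,\bullet)$, I may simply instantiate the schema at $\psi:=\nabla\phi$.

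Concretely, the single step I would carry out is to read off the instance
\[
\bullet\nabla\phi\to\nabla\nabla\phi,
\]
which is obtained from $\text{E4}$ by taking $\psi$ to be $\nabla\phi$. No inference rules beyond recognizing the axiom as a schema are needed, so the derivation is a one-line proof. Recalling from Sec.~\ref{sec.syntaxandsemantics} that $\bullet\nabla\phi$ expresses Rumsfeld ignorance and $\nabla\nabla\phi$ expresses second-order ignorance, this records that in the context $(c)$ Rumsfeld ignorance implies second-order ignorance.

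The only point deserving a moment's care — and the closest thing to an ``obstacle'' — is to confirm that the axioms of ${\bf E^{\nabla\bullet}_c}$ are indeed axiom schemas rather than single formulas, so that the substitution of the compound formula $\nabla\phi$ for the schematic letter is legitimate. This is immediate from the way the proof systems are set up (TAUT, $\text{E1}$--$\text{E4}$ are all stated schematically, in parallel to the replacement rules $\text{RE}\nabla$ and $\text{RE}\bullet$), so there is no real difficulty to overcome; the proposition is genuinely immediate, exactly as the surrounding text announces.
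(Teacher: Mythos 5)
Your proof is correct and matches the paper exactly: the paper likewise notes the proposition is immediate from axiom $\text{E4}$, instantiated at $\nabla\phi$. Your extra remark that $\text{E4}$ is a schema, so substituting the compound formula $\nabla\phi$ is legitimate, is sound and needs no further justification.
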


This says that under the condition $(c)$, Rumsfeld ignorance implies second-order ignorance.

Combined with an instance of the axiom $\text{E2}$ ($\bullet\nabla\phi\to\nabla\phi$) and $\vdash\nabla\nabla\phi\land\nabla\phi\to\bullet\nabla\phi$ (see the remark after Prop.~\ref{prop.circimpliesdelta}), it follows that within the neighborhood context $(c)$, Rumsfeld ignorance amounts to second-order ignorance plus first-order ignorance, and thus Rumsfeld ignorance is definable in terms of first-order ignorance. 

\subsubsection{${\bf EN^{\nabla\bullet}}$}\label{sec.en}

%Let ${\bf EN^{\nabla\bullet}}={\bf E^{\nabla\bullet}}+\circ\top+\Delta\top$. Then we have

Let ${\bf EN^{\nabla\bullet}}={\bf E^{\nabla\bullet}}+\circ\top$. From $\circ\top$ and Prop.~\ref{prop.circimpliesdelta} it follows that $\Delta\top$ is derivable in ${\bf EN^{\nabla\bullet}}$.
%One may ask why we do not add the formula $\Delta\top$ as an extra axiom. This is because this formula is derivable from axiom $\text{K1}$ and Prop.~\ref{prop.circimpliesdelta}. 
\begin{theorem}
${\bf EN^{\nabla\bullet}}$ is sound and strongly complete with respect to the class of all $(n)$-frames.
\end{theorem}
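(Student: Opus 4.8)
The plan is to mimic the completeness argument already carried out for ${\bf E^{\nabla\bullet}}$ in Theorem~\ref{thm.e-comp}, adding only the verification that the extra axiom $\circ\top$ forces the property $(n)$ both on the validating frames (for soundness) and on the canonical model (for completeness).

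For soundness, it suffices, by Prop.~\ref{prop.e-soundness}, to check that $\circ\top$ is valid on every $(n)$-frame, since all the remaining axioms and rules of ${\bf E^{\nabla\bullet}}$ are already sound over the class of all frames, hence over the subclass of $(n)$-frames. This check is immediate from the semantics: for any model $\M=\lr{S,N,V}$ and any $s$, we have $\M,s\vDash\circ\top$ iff ($\M,s\vDash\top$ implies $\top^\M\in N(s)$), i.e.\ iff $S\in N(s)$. Thus $\F\vDash\circ\top$ precisely when $\F$ has $(n)$, and in particular $\circ\top$ is valid on every $(n)$-frame.

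For completeness I would reuse verbatim the canonical model of Def.~\ref{def.e-cm}, now built from the maximal consistent sets of ${\bf EN^{\nabla\bullet}}$, with $N^c(s)=\{|\phi|\mid \circ\phi\land\Delta\phi\in s\}$. Since ${\bf EN^{\nabla\bullet}}$ extends ${\bf E^{\nabla\bullet}}$, all the ingredients used in the proof of the Truth Lemma (Lemma~\ref{lem.truthlemma-e})—namely axioms E1, E2 and Prop.~\ref{prop.circimpliesdelta}—remain available, so that proof carries over unchanged to yield $\M^c,s\vDash\phi \iff \phi\in s$ and hence $|\phi|=\phi^{\M^c}$ for all $\phi\in\mathcal{L}(\nabla,\bullet)$. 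The only additional task is to show that $\M^c$ actually lies in the class of $(n)$-frames.

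The key new step, and the one place where $\circ\top$ is used, is establishing that $S^c\in N^c(s)$ for every $s\in S^c$. Take $\phi=\top$: clearly $|\top|=S^c$. Since $\circ\top$ is an axiom it belongs to every maximal consistent set, so $\circ\top\in s$; and as noted just before the statement, $\Delta\top$ is derivable in ${\bf EN^{\nabla\bullet}}$ (from $\circ\top$ together with Prop.~\ref{prop.circimpliesdelta}), so $\Delta\top\in s$ as well. Hence $\circ\top\land\Delta\top\in s$, and by the definition of $N^c$ we get $S^c=|\top|\in N^c(s)$, i.e.\ $\M^c$ has $(n)$. Strong completeness over the class of $(n)$-frames then follows in the standard way from the Truth Lemma and the membership of $\M^c$ in that class. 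I do not expect a genuine obstacle here: the argument is essentially routine, and the only point requiring care is confirming that the canonical $N^c$ of Def.~\ref{def.e-cm} need not be altered—the witness $\top$ already supplies the unit in each neighborhood, so no strengthening of the canonical construction (as was needed for the $(c)$-case) is required.
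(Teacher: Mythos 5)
Your proposal is correct and follows essentially the same route as the paper: soundness via Prop.~\ref{prop.e-soundness} plus the validity of $\circ\top$ on $(n)$-frames, and completeness by reusing the canonical model of Def.~\ref{def.e-cm} unchanged and deriving $\circ\top\land\Delta\top\in s$ (using Prop.~\ref{prop.circimpliesdelta}) to place $S^c=|\top|$ in $N^c(s)$. The only cosmetic difference is that you verify the validity of $\circ\top$ directly from the semantics where the paper cites Prop.~\ref{prop.n-definable}.
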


\begin{proof}
For soundness, by Prop.~\ref{prop.e-soundness}, it remains only to show the validity of $\circ\top$ over the class of $(n)$-frames. The validity of $\circ\top$ can be found in Prop.~\ref{prop.n-definable}.

For completeness, define the canonical model $\M^c$ w.r.t. ${\bf EN^{\nabla\bullet}}$ as in Def.~\ref{def.e-cm}. By Thm.~\ref{thm.e-comp}, it suffices to show that $\M^c$ possesses $(n)$. By the construction of ${\bf EN^{\nabla\bullet}}$, for all $s\in S^c$, we have $\circ\top\land\Delta\top\in s$, and thus $|\top|\in N^c(s)$, that is, $S^c\in N^c(s)$, as desired.
\end{proof}

\weg{\subsubsection{${\bf EC^{\nabla\bullet}}$}

Let ${\bf EC^{\nabla\bullet}}={\bf E^{\nabla\bullet}}+\Delta\text{C}+\circ\text{C}$, where
\[\begin{array}{ll}
\text{C1} & \nabla(\phi\land\psi)\to \nabla\phi\vee\nabla\psi\\
%\Delta\phi\land\Delta\psi\to\Delta(\phi\land\psi) \\
\text{C2} & \bullet(\phi\land\psi)\to \bullet\phi\vee\bullet\psi\\
%\circ\phi\land\circ\psi\to\circ(\phi\land\psi)\\
\end{array}\]
}

\subsubsection{Monotone logic}

Let ${\bf M^{\nabla\bullet}}$ be the extension of ${\bf E^{\nabla\bullet}}$ plus the following extra axioms:
%$\bullet(\phi\to\psi)\land\bullet(\neg\phi\to\chi)\to\nabla\phi$.
%$\Delta\phi\to\Delta(\phi\vee\psi)\vee\Delta(\neg\phi\vee\chi)$
\[\begin{array}{ll}
    \text{M1} & \nabla(\phi\vee\psi)\land\nabla(\neg\phi\vee\chi)\to\nabla\phi \\
    \text{M2} & \bullet(\phi\vee\psi)\land\bullet(\neg\phi\vee\chi)\to\nabla\phi\\
    \text{M3} & \bullet(\phi\vee\psi)\land\nabla(\neg\phi\vee\chi)\to\nabla\phi\\
     \text{M4} & \circ\phi\land\phi\to\Delta(\phi\vee\psi)\land\circ(\phi\vee\psi)\\
\end{array}\]

Prop.~\ref{prop.impliesdeltatop} and Prop.~\ref{prop.impliescirctop} tells us how to derive $\Delta\top$ and $\circ\top$ in ${\bf M^{\nabla\bullet}}$, respectively. They will used in Section~\ref{sec.updating}.
\begin{proposition}\label{prop.impliesdeltatop}
$\Delta \phi\to\Delta \top$ is provable in ${\bf M^{\nabla\bullet}}$.
\end{proposition}

\begin{proof}
We have the following proof sequence in ${\bf M^{\nabla\bullet}}$.
\[
\begin{array}{lll}
    (1) & \nabla(\phi\vee \top)\land \nabla (\neg\phi\vee\top)\to \nabla\phi & \text{M1} \\
    (2) & \top\lra \phi\vee\top & \text{TAUT}\\
    (3) & \nabla\top\lra \nabla(\phi\vee\top) & (2),\text{RE}\nabla\\
    (4) & \top \lra \neg\phi\vee\top & \text{TAUT}\\
    (5) & \nabla\top\lra \nabla(\neg\phi\vee\top) & (4),\text{RE}\nabla\\
    (6) & \nabla\top\to\nabla\phi & (1),(3),(5)\\
    (7) & \Delta \phi\to \Delta\top & (6),\text{Def.~}\Delta\\
\end{array}
\]
\end{proof}

In the above proof, $\nabla\top\to\nabla\phi$ says that if one is ignorant about whether $\top$ holds, then one is ignorant about whether everything holds.

\begin{proposition}\label{prop.impliescirctop}
$\phi\land\circ\phi\to\circ\top$ is provable in ${\bf M^{\nabla\bullet}}$.
\end{proposition}

\begin{proof}
We have the following proof sequence in ${\bf M^{\nabla\bullet}}$.
\[
\begin{array}{lll}
    (1) & (\phi\vee\top)\lra \top & \text{TAUT}\\
    (2) & \bullet(\phi\vee\top)\lra \bullet\top & (1),\text{RE}\bullet\\
    (3) & (\neg\phi\vee\top)\lra \top & \text{TAUT}\\
    (4) & \bullet(\neg\phi\vee\top)\lra \bullet\top & (3),\text{RE}\bullet\\
    (5) & \bullet(\phi\vee\top)\land\bullet(\neg\phi\vee\top)\to \nabla\phi & \text{M2}\\
    (6) & \bullet\top\to \nabla\phi & (2),(4),(5)\\
    (7) & \Delta\phi\to\circ\top & (6),\text{Def.~}\Delta,\text{Def.~}\circ\\
    (8) & \phi\land\circ\phi\to\Delta\phi & \text{Prop.~}\ref{prop.circimpliesdelta} \\
    (9) & \phi\land\circ\phi\to\circ\top & (7),(8)\\
\end{array}
\]
\weg{\[
\begin{array}{lll}
    (1) & \phi\land\circ\phi\to \circ(\phi\vee\top) & \text{M4} \\
    (2) & \phi\vee\top \lra \top & \text{TAUT}\\
    (3) & \bullet(\phi\vee\top)\lra \bullet\top & (2),\text{RE}\bullet\\
    (4) & {\neg\bullet}(\phi\vee\top)\lra {\neg\bullet}\top & (3)\\
    (5) & \circ(\phi\vee\top)\lra \circ\top & (4),\text{Def.~}\circ\\
    (6) & \phi\land\circ\phi\to\circ\top & (1),(5)\\
\end{array}
\]}
\end{proof}

\begin{proposition}
${\bf M^{\nabla\bullet}}$ is sound with respect to the class of all $(s)$-frames.
\end{proposition}

\begin{proof}
By soundness of ${\bf E^{\nabla\bullet}}$ (Prop.~\ref{prop.e-soundness}), it suffices to show the validity of the extra axioms. Let $\M=\lr{S,N,V}$ be an arbitrary $(s)$-model and $s\in S$.

For M1: suppose that $\M,s\vDash \nabla(\phi\vee\psi)\land\nabla(\neg\phi\vee\chi)$. Then $(\phi\vee\psi)^\M\notin N(s)$ and $(\neg\phi\vee\chi)^\M\notin N(s)$, that is, $\phi^\M\cup \psi^\M\notin N(s)$ and $(\neg\phi)^\M\cup \chi^\M\notin N(s)$. Since $\phi^\M\subseteq \phi^\M\cup \psi^\M$ and $N(s)$ is closed under supersets, we must have $\phi^\M\notin N(s)$. Similarly, we can show that $(\neg\phi)^\M\notin N(s)$. Therefore, $\M,s\vDash\nabla\phi$, as desired.
Similarly, we can show the validity of M2 and M3.
\weg{For M2: suppose that $\M,s\vDash\bullet(\phi\vee\psi)\land\bullet(\neg\phi\vee\chi)$. Then $(\phi\vee\psi)^\M\notin N(s)$ and $(\neg\phi\vee\chi)^\M\notin N(s)$, that is, $\phi^\M\cup\psi^\M\notin N(s)$ and $(\neg\phi)^\M\cup\chi^\M\notin N(s)$. Since $N(s)$ is closed under supersets, we obtain $\phi^\M\notin N(s)$ and $(\neg\phi)^\M\notin N(s)$. Therefore, $\M,s\vDash\nabla\phi$.

For M3: assume that $\M,s\vDash\bullet(\phi\vee\psi)\land\nabla(\neg\phi\vee\chi)$. Then $(\phi\vee\psi)^\M\notin N(s)$ and $(\neg\phi\vee\chi)\notin N(s)$, that is, $\phi^\M\cup\psi^\M\notin N(s)$ and $(\neg\phi)^\M\cup \chi^\M\notin N(s)$. By the property $(s)$, we infer that $\phi^\M\notin N(s)$ and $(\neg\phi)^\M\notin N(s)$. Therefore, $\M,s\vDash\nabla\phi$.}

For M4: assume that $\M,s\vDash\circ\phi\land\phi$. Then $\phi^\M\in N(s)$. Since $\phi^\M\subseteq \phi^\M\cup\psi^\M=(\phi\vee\psi)^\M$, by the property $(s)$, we have $(\phi\vee\psi)^\M\in N(s)$, and therefore $\M,s\vDash\Delta(\phi\vee\psi)\land\circ(\phi\vee\psi)$.
\end{proof}

\begin{definition}\label{def.cm-M}
Let $\Lambda$ be an extension of ${\bf M^{\nabla\bullet}}$. A triple $\M^\Lambda=\lr{S^\Lambda,N^\Lambda,V^\Lambda}$ is a {\em canonical neighborhood model} for $\Lambda$ if
\begin{itemize}
    \item $S^\Lambda=\{s\mid s\text{ is a maximal consistent set for }\Lambda\}$,
    \item $|\phi|\in N^\Lambda(s)$ iff $\Delta(\phi\vee\psi)\land\circ(\phi\vee\psi)\in s$ for all $\psi$,
    \item $V^\Lambda(p)=|p|=\{s\in S^\Lambda\mid p\in s\}$.
\end{itemize}
\end{definition}

We need to show that $N^\Lambda$ is well defined.
\begin{proposition}
Let $s\in S^\Lambda$ as defined in Def.~\ref{def.cm-M}. If $|\phi|=|\phi'|$, then $\Delta(\phi\vee\psi)\land\circ(\phi\vee\psi)\in s$ for all $\psi$ iff $\Delta(\phi'\vee\psi)\land\circ(\phi'\vee\psi)\in s$ for all $\psi$.
\end{proposition}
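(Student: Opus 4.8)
The plan is to reduce the claim to the standard fact that in any canonical model, set-equality of denotations coincides with provable equivalence: for all formulas $\chi,\chi'$ we have $|\chi|=|\chi'|$ if and only if $\vdash_\Lambda\chi\lra\chi'$. The left-to-right direction is the only nontrivial half and is obtained by contraposition via Lindenbaum's Lemma: if $\nvdash_\Lambda\chi\to\chi'$, then $\{\chi,\neg\chi'\}$ is $\Lambda$-consistent and extends to some $s\in S^\Lambda$ witnessing $s\in|\chi|\setminus|\chi'|$, and symmetrically for the other implication; the right-to-left direction is immediate since maximal consistent sets are closed under provable consequence. Applying this to our hypothesis $|\phi|=|\phi'|$ yields $\vdash_\Lambda\phi\lra\phi'$.

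From $\vdash_\Lambda\phi\lra\phi'$ I would propagate the equivalence through the relevant connectives. First, by $\text{TAUT}$ and $\text{MP}$ one gets $\vdash_\Lambda(\phi\vee\psi)\lra(\phi'\vee\psi)$ for every $\psi$, since disjunction respects provable equivalence in one argument. Next I apply the replacement rules to the abbreviated operators. Writing $\alpha=\phi\vee\psi$ and $\beta=\phi'\vee\psi$, rule $\text{RE}\nabla$ gives $\vdash_\Lambda\nabla\alpha\lra\nabla\beta$, whence by propositional reasoning $\vdash_\Lambda\neg\nabla\alpha\lra\neg\nabla\beta$, i.e. $\vdash_\Lambda\Delta\alpha\lra\Delta\beta$; entirely analogously $\text{RE}\bullet$ yields $\vdash_\Lambda\circ\alpha\lra\circ\beta$. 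Conjoining these two provable equivalences gives, for each fixed $\psi$, $\vdash_\Lambda\bigl(\Delta(\phi\vee\psi)\land\circ(\phi\vee\psi)\bigr)\lra\bigl(\Delta(\phi'\vee\psi)\land\circ(\phi'\vee\psi)\bigr)$.

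To conclude, since $s$ is a maximal consistent set and hence deductively closed, membership of a formula in $s$ is invariant under provable equivalence; therefore $\Delta(\phi\vee\psi)\land\circ(\phi\vee\psi)\in s$ if and only if $\Delta(\phi'\vee\psi)\land\circ(\phi'\vee\psi)\in s$, for every single $\psi$. As this biconditional holds uniformly in $\psi$, the two universally quantified statements ``$\Delta(\phi\vee\psi)\land\circ(\phi\vee\psi)\in s$ for all $\psi$'' and ``$\Delta(\phi'\vee\psi)\land\circ(\phi'\vee\psi)\in s$ for all $\psi$'' are equivalent, which is exactly the well-definedness of $N^\Lambda$.

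I do not expect a genuine obstacle here, as the argument is routine once the right reduction is in place; the only step requiring care is the first one, namely justifying $|\phi|=|\phi'|\Rightarrow\ \vdash_\Lambda\phi\lra\phi'$ through Lindenbaum's Lemma, and the minor observation that $\Delta$ and $\circ$ are definitional abbreviations of $\neg\nabla$ and $\neg\bullet$, so that $\text{RE}\nabla$ and $\text{RE}\bullet$ transfer to them by contraposition. Everything else is propositional bookkeeping and closure of maximal consistent sets.
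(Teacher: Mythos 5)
Your proof is correct and follows essentially the same route as the paper's: from $|\phi|=|\phi'|$ infer $\vdash\phi\lra\phi'$, lift this through disjunction and then through $\text{RE}\nabla$ and $\text{RE}\bullet$ (using that $\Delta$ and $\circ$ abbreviate $\neg\nabla$ and $\neg\bullet$), and conclude by deductive closure of maximal consistent sets. The only difference is that you spell out, via Lindenbaum's Lemma, the step $|\phi|=|\phi'|\Rightarrow\ \vdash\phi\lra\phi'$ that the paper takes for granted --- a harmless (indeed welcome) elaboration, not a different approach.
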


\begin{proof}
Suppose that $|\phi|=|\phi'|$, then $\vdash\phi\lra\phi'$, and thus $\vdash\phi\vee\psi\lra \phi'\vee\psi$. By $\text{RE}\nabla$, $\text{RE}\bullet$, $\text{Def.~}\Delta$ and $\text{Def.~}\circ$, we infer that $\vdash\Delta(\phi\vee\psi)\lra \Delta(\phi'\vee\psi)$ and $\vdash\circ(\phi\vee\psi)\lra \circ(\phi'\vee\psi)$, and hence $\vdash\Delta(\phi\vee\psi)\land\circ(\phi\vee\psi)\lra \Delta(\phi'\vee\psi)\land\circ(\phi'\vee\psi)$. Therefore, $\Delta(\phi\vee\psi)\land\circ(\phi\vee\psi)\in s$ for all $\psi$ iff $\Delta(\phi'\vee\psi)\land\circ(\phi'\vee\psi)\in s$ for all $\psi$.
\end{proof}

%$|\phi|\in N^c(s)$ iff $\Delta\phi\land\circ(\phi\vee\psi)\in s$ for all $\psi$

\begin{lemma}\label{lem.truthlem-m}
Let $\M^\Lambda=\lr{S^\Lambda,N^\Lambda,V^\Lambda}$ be an arbitrary canonical neighborhood model for any system $\Lambda$ extending ${\bf M^{\nabla\bullet}}$. Then for all $s\in S^\Lambda$, for all $\phi\in\mathcal{L}(\nabla,\bullet)$, we have
$$\M^\Lambda,s\vDash\phi\iff \phi\in s.$$
That is, $\phi^{\M^\Lambda}=|\phi|$.
\end{lemma}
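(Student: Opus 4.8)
The plan is to prove the biconditional $\M^\Lambda,s\vDash\phi\iff\phi\in s$ by induction on $\phi$; the propositional connectives are routine (using that $s$ is a maximal consistent set), so the substance lies in the modal cases $\nabla\phi$ and $\bullet\phi$. In both I invoke the induction hypothesis in the form $\phi^{\M^\Lambda}=|\phi|$, which also gives $S^\Lambda\backslash\phi^{\M^\Lambda}=|\neg\phi|$, so the semantic clauses for $\nabla$ and $\bullet$ become assertions about whether $|\phi|$ and $|\neg\phi|$ lie in $N^\Lambda(s)$. The decisive observation about the neighborhood function is that membership $|\phi|\in N^\Lambda(s)$ demands $\Delta(\phi\vee\psi)\land\circ(\phi\vee\psi)\in s$ for \emph{every} $\psi$, so \emph{non}-membership is witnessed by a single $\psi$; taking the trivial witness $\psi=\phi$ and using that $\phi\vee\phi$ is provably equivalent to $\phi$ (with $\text{RE}\nabla$ and $\text{RE}\bullet$), non-membership $|\phi|\notin N^\Lambda(s)$ follows already from $\Delta\phi\land\circ\phi\notin s$.

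For the case $\bullet\phi$, the induction hypothesis turns $\M^\Lambda,s\vDash\bullet\phi$ into ``$\phi\in s$ and $|\phi|\notin N^\Lambda(s)$''. For the direction from $\bullet\phi\in s$, axiom $\text{E2}$ gives $\phi\in s$, while $\circ\phi\notin s$ together with the trivial-witness observation yields $|\phi|\notin N^\Lambda(s)$. For the converse I argue contrapositively: if $\bullet\phi\notin s$ then $\circ\phi\in s$, so $\phi\land\circ\phi\in s$, and the schema $\text{M4}$ delivers $\Delta(\phi\vee\psi)\land\circ(\phi\vee\psi)\in s$ for every $\psi$, i.e. $|\phi|\in N^\Lambda(s)$, contradicting the semantic assumption.

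For the case $\nabla\phi$, the induction hypothesis turns $\M^\Lambda,s\vDash\nabla\phi$ into the conjunction $|\phi|\notin N^\Lambda(s)$ and $|\neg\phi|\notin N^\Lambda(s)$. The direction from $\nabla\phi\in s$ is handled by the trivial-witness observation applied twice: $\nabla\phi\in s$ gives $\Delta\phi\notin s$ (whence $|\phi|\notin N^\Lambda(s)$), and axiom $\text{E1}$ gives $\nabla\neg\phi\in s$, hence $\Delta\neg\phi\notin s$ (whence $|\neg\phi|\notin N^\Lambda(s)$). The converse is the step I expect to require the most care. Assuming $\nabla\phi\notin s$, i.e. $\Delta\phi\in s$, I must show $|\phi|\in N^\Lambda(s)$ or $|\neg\phi|\in N^\Lambda(s)$. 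Supposing both fail, I obtain witnesses $\psi_0,\chi_0$ with $\Delta(\phi\vee\psi_0)\land\circ(\phi\vee\psi_0)\notin s$ and $\Delta(\neg\phi\vee\chi_0)\land\circ(\neg\phi\vee\chi_0)\notin s$, so $s$ contains one of $\nabla(\phi\vee\psi_0),\bullet(\phi\vee\psi_0)$ and likewise one of $\nabla(\neg\phi\vee\chi_0),\bullet(\neg\phi\vee\chi_0)$. The four resulting combinations are closed off by axioms $\text{M1}$, $\text{M2}$ and $\text{M3}$ (the mixed case with $\nabla(\phi\vee\psi_0)$ and $\bullet(\neg\phi\vee\chi_0)$ being reduced to $\text{M3}$ via $\text{E1}$ after swapping the roles of $\phi$ and $\neg\phi$), each yielding $\nabla\phi\in s$ and contradicting $\Delta\phi\in s$.

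The uniformity over all $\Lambda$ extending ${\bf M^{\nabla\bullet}}$ is automatic, since every axiom used ($\text{E1}$, $\text{E2}$, $\text{M1}$--$\text{M4}$, $\text{RE}\nabla$, $\text{RE}\bullet$) already belongs to ${\bf M^{\nabla\bullet}}$ and hence to $\Lambda$, and well-definedness of $N^\Lambda$ is the preceding proposition. The genuinely delicate point is the forward $\nabla$-direction, where the exhaustive analysis over the four ways the two non-memberships can be witnessed must line up exactly with the three monotonicity axioms $\text{M1}$--$\text{M3}$; this is precisely what those axioms were engineered to provide, while $\text{M4}$ does the analogous work for $\bullet$ and for establishing positive membership in $N^\Lambda(s)$.
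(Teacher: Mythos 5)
Your proof is correct and follows essentially the same route as the paper's: the same four-case analysis via $\text{M1}$--$\text{M3}$ (with the $\text{E1}$/RE swap in the mixed case) for the hard $\nabla$-direction, and $\text{E2}$ plus $\text{M4}$ for $\bullet$. The only cosmetic differences are your trivial witness $\psi=\phi$ where the paper instantiates $\psi=\bot$, and your contrapositive framing of the directions the paper proves directly.
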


\begin{proof}
By induction on $\phi$. The nontrivial cases are $\nabla\phi$ and $\bullet\phi$.

For case $\nabla\phi$:

Suppose that $\M^\Lambda,s\nvDash\nabla\phi$, to show that $\nabla\phi\notin s$. By supposition and induction hypothesis, $|\phi|\in N^\Lambda(s)$ or $S\backslash|\phi|\in N^\Lambda(s)$ (that is, $|\neg\phi|\in N^\Lambda(s)$). If $|\phi|\in N^\Lambda(s)$, then $\Delta(\phi\vee\psi)\in s$ for all $\psi$. By letting $\psi=\bot$, then $\Delta\phi\in s$, and thus $\nabla\phi\notin s$. If $|\neg\phi|\in N^\Lambda(s)$, with a similar argument we can show that $\Delta\neg\phi\in s$, that is, $\Delta\phi\in s$, and we also have $\nabla\phi\notin s$.

Conversely, assume that $\M^\Lambda,s\vDash\nabla\phi$, to prove that $\nabla\phi\in s$. By assumption and induction hypothesis, $|\phi|\notin N^\Lambda(s)$ and $S\backslash|\phi|\notin N^\Lambda(s)$, that is, $|\neg\phi|\notin N^\Lambda(s)$. Then $\Delta(\phi\vee\psi)\land\circ(\phi\vee\psi)\notin s$ for some $\psi$, and $\Delta(\neg\phi\vee\chi)\land\circ(\neg\phi\vee\chi)\notin  s$ for some $\chi$. We consider the following cases.
\begin{itemize}
    \item[-] $\Delta(\phi\vee\psi)\notin s$ and $\Delta(\neg\phi\vee\chi)\notin s$. That is, $\nabla(\phi\vee\psi)\in s$ and $\nabla(\neg\phi\vee\chi)\in s$. Then by axiom \text{M1}, we infer that $\nabla\phi\in s$.
    \item[-] $\Delta(\phi\vee\psi)\notin s$ and  $\circ(\neg\phi\vee\chi)\notin  s$. That is, $\nabla(\phi\vee\psi)\in s$ and $\bullet(\neg\phi\vee\chi)\in s$. By axiom $\text{M3}$, $\nabla\neg\phi\in s$, that is, $\nabla\phi\in s$.
    \item[-] $\circ(\phi\vee\psi)\notin s$ and $\Delta(\neg\phi\vee\chi)\notin s$. That is, $\bullet(\phi\vee\psi)\in s$ and $\nabla(\neg\phi\vee\chi)\in s$. Then by axiom $\text{M3}$, we derive that $\nabla\phi\in s$.
    \item[-] $\circ(\phi\vee\psi)\notin s$ and $\circ(\neg\phi\vee\chi)\notin s$. That is, $\bullet(\phi\vee\psi)\in s$ and $\bullet(\neg\phi\vee\chi)\in s$. By axiom $\text{M2}$, we obtain that $\nabla\phi\in s$.
\end{itemize}
Either case implies that $\nabla\phi\in s$, as desired.

For case $\bullet\phi$.

Suppose that $\bullet\phi\in s$, to show that $\M^\Lambda,s\vDash\bullet\phi$. By supposition and axiom $\text{E2}$, we obtain $\phi\in s$, which by induction hypothesis means that $\M^\Lambda,s\vDash\phi$. We have also $|\phi|\notin N^\Lambda(s)$: otherwise, by definition of $N^\Lambda$, we should have $\circ(\phi\vee\psi)\in s$ for all $\psi$, which then implies that $\circ\phi\in s$ (by letting $\psi=\bot$), a contradiction. Then by induction hypothesis, $\phi^{\M^\Lambda}\notin N^\Lambda(s)$. Therefore, $\M^\Lambda,s\vDash\bullet\phi$.

Conversely, assume that $\bullet\phi\notin s$ (that is, $\circ\phi\in s$), to prove that $\M^\Lambda,s\nvDash\bullet\phi$. For this, suppose that $\M^\Lambda,s\vDash\phi$, by induction hypothesis, we have $\phi\in s$, and then $\circ\phi\land\phi\in s$. By axiom $\text{M4}$, $\Delta(\phi\vee\psi)\land\circ(\phi\vee\psi)\in s$ for all $\psi$. By definition of $N^\Lambda$, we derive that $|\phi|\in N^\Lambda(s)$. Then by induction hypothesis again, we conclude that $\phi^{\M^\Lambda}\in N^\Lambda(s)$. Therefore, $\M^\Lambda,s\nvDash\bullet\phi$, as desired.
\end{proof}

Given an extension $\Lambda$ of ${\bf M^{\nabla\bullet}}$, the minimal canonical neighborhood model for $\Lambda$, denoted $\M^\Lambda_0=\lr{S^\Lambda,N^\Lambda_0,V^\Lambda}$, is defined such that $N^\Lambda_0(s)=\{|\phi|\mid \Delta(\phi\vee\psi)\land\circ(\phi\vee\psi)\in s\text{ for all }\psi\}$. Note that $\M^\Lambda_0$ is not necessarily supplemented. Therefore, we define a notion of supplementation, which comes from~\cite{Chellas1980}.
\begin{definition}
Let $\M=\lr{S,N,V}$ be a neighborhood model. The {\em supplementation} of $\M$, denoted $\M^+$, is a triple $\lr{S,N^+,V}$, in which for every $s\in S$, $N^+(s)$ is the superset closure of $N(s)$; namely, for each $s\in S$,
$$N^+(s)=\{X\subseteq S\mid Y\subseteq X\text{ for some }Y\in N(s)\}.$$
\end{definition}

One may easily show that $\M^+$ is supplemented, that is, $\M^+$ possesses $(s)$. Also, $N(s)\subseteq N^+(s)$. Moreover, the properties of being closed under intersections and containing the unit are closed under the supplementation.
\begin{proposition}\label{prop.properties-preserved}
Let $\M=\lr{S,N,V}$ be a neighborhood model and $\M^+$ be its supplementation. If $\M$ possesses $(i)$, then so does $\M^+$; if $\M$ possesses $(n)$, then so does $\M^+$.
\end{proposition}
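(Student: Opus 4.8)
The plan is to verify each of the two closure claims directly from the definition of the supplementation, using only the fact that $N(s)\subseteq N^+(s)$ together with the elementary behaviour of set intersection under inclusion. Both parts are essentially routine, so the work lies in pinning down the right witnesses for membership in $N^+(s)$.

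For the property $(n)$, I would argue as follows. Fix $s\in S$ and suppose $\M$ possesses $(n)$, so that $S\in N(s)$. Since $S\subseteq S$, taking $Y=S\in N(s)$ in the defining condition of $N^+(s)$ immediately witnesses $S\in N^+(s)$. As $s$ was arbitrary, $\M^+$ possesses $(n)$. (This already follows from the more general observation that $N(s)\subseteq N^+(s)$, which the excerpt notes.)

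For the property $(i)$, suppose $\M$ possesses $(i)$, fix $s\in S$, and take arbitrary $X,X'\in N^+(s)$. By definition of supplementation there are $Y,Y'\in N(s)$ with $Y\subseteq X$ and $Y'\subseteq X'$. Since $\M$ is closed under intersections, $Y\cap Y'\in N(s)$. The key step is the monotonicity of intersection under inclusion: from $Y\subseteq X$ and $Y'\subseteq X'$ we get $Y\cap Y'\subseteq X\cap X'$. Hence $Y\cap Y'$ is a member of $N(s)$ contained in $X\cap X'$, so by the definition of $N^+(s)$ we conclude $X\cap X'\in N^+(s)$, as required.

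There is no real obstacle here; the only point that needs a moment's care is the inclusion $Y\cap Y'\subseteq X\cap X'$, which is exactly what makes the witness produced by $(i)$ survive the passage to supersets. I would state that inclusion explicitly rather than leave it implicit, since it is the sole content of the argument for $(i)$.
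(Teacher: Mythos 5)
Your proof is correct, and it is exactly the standard argument: the paper itself states this proposition without proof (it is a routine fact about the supplementation, going back to Chellas), and the intended justification is precisely your two steps --- $S\in N(s)\subseteq N^+(s)$ for $(n)$, and for $(i)$ intersecting the witnesses $Y\subseteq X$, $Y'\subseteq X'$ to get $Y\cap Y'\in N(s)$ with $Y\cap Y'\subseteq X\cap X'$. Nothing is missing; your explicit mention of the inclusion $Y\cap Y'\subseteq X\cap X'$ is the right point to spell out.
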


In what follows, we will use $(\M^\Lambda_0)^+$ to denote the supplementation of $\M^\Lambda_0$, namely $(\M^\Lambda_0)^+=\lr{S^\Lambda,(N^\Lambda_0)^+,V^\Lambda}$, where $\Lambda$ extends ${\bf M^{\nabla\bullet}}$. By the definition of supplementation, $(\M^\Lambda_0)^+$ is an $(s)$-model. To show the completeness of ${\bf M^{\nabla\bullet}}$ over the class of $(s)$-frames, by Lemma~\ref{lem.truthlem-m}, it remains only to show that $(\M^\Lambda_0)^+$ is a canonical neighborhood model for $\Lambda$.

\begin{lemma}
Let $\Lambda$ extends ${\bf M^{\nabla\bullet}}$. For every $s\in S^\Lambda$, we have
$$|\phi|\in (N^\Lambda_0)^+(s)\iff \Delta(\phi\vee\psi)\land\circ(\phi\vee\psi)\in s\text{ for all }\psi.$$
\end{lemma}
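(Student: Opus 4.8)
The plan is to observe first that the right-hand condition is exactly the membership clause defining $N^\Lambda_0(s)$ (this is the content of the well-definedness proposition just proved), so that the claimed equivalence reduces to $|\phi|\in (N^\Lambda_0)^+(s)$ iff $|\phi|\in N^\Lambda_0(s)$; in words, passing to the supplementation does not enlarge the collection of \emph{formula truth sets} lying in the neighborhood of $s$. One inclusion is then free of charge: since $N^\Lambda_0(s)\subseteq (N^\Lambda_0)^+(s)$ by the definition of supplementation, if the right-hand condition holds then $|\phi|\in N^\Lambda_0(s)\subseteq (N^\Lambda_0)^+(s)$.

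For the converse I would unpack the definition of supplementation. If $|\phi|\in (N^\Lambda_0)^+(s)$, then $Y\subseteq|\phi|$ for some $Y\in N^\Lambda_0(s)$, and by the shape of $N^\Lambda_0$ we may write $Y=|\chi|$ for a formula $\chi$ satisfying $\Delta(\chi\vee\xi)\land\circ(\chi\vee\xi)\in s$ for every $\xi$. The inclusion $|\chi|\subseteq|\phi|$ is, by the standard fact about maximal consistent sets, equivalent to $\vdash\chi\to\phi$.

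The crux is then to transfer the condition from $\chi$ to $\phi$. I would fix an arbitrary $\psi$ and instantiate the universal premise at the specific witness $\xi:=\phi\vee\psi$, obtaining $\Delta(\chi\vee(\phi\vee\psi))\land\circ(\chi\vee(\phi\vee\psi))\in s$. Since $\vdash\chi\to\phi$ gives $\vdash\chi\to(\phi\vee\psi)$, and hence $\vdash(\chi\vee\phi\vee\psi)\lra(\phi\vee\psi)$, applying $\text{RE}\nabla$ and $\text{RE}\bullet$ (i.e.\ replacement of provable equivalents under $\Delta$ and $\circ$) yields $\Delta(\phi\vee\psi)\land\circ(\phi\vee\psi)\in s$. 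As $\psi$ was arbitrary, the right-hand condition holds, completing the direction.

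The only real subtlety — and the step I would flag as the main obstacle — is the choice of instantiation $\xi:=\phi\vee\psi$. A direct attempt to push the condition along $\chi\to\phi$ is hopeless, because neither $\Delta$ nor $\circ$ is monotone; the point of absorbing $\chi$ into the disjunct $\phi\vee\psi$ is precisely that $\chi\vee\phi\vee\psi$ and $\phi\vee\psi$ become \emph{provably equivalent}, so that the non-normal operators can be handled through $\text{RE}\nabla$ and $\text{RE}\bullet$ rather than through any monotonicity principle that is unavailable here.
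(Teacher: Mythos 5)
Your proof is correct and essentially identical to the paper's: both directions match, including the trivial right-to-left inclusion via $N^\Lambda_0(s)\subseteq (N^\Lambda_0)^+(s)$, and for the converse the paper uses exactly your key move of instantiating the universal condition for $\chi$ at $\phi\vee\psi$, deriving $\vdash\chi\vee\phi\vee\psi\lra\phi\vee\psi$ from $|\chi|\subseteq|\phi|$, and transferring membership via $\text{RE}\nabla$, $\text{RE}\bullet$ and the definitions of $\Delta$ and $\circ$. Your closing remark about why replacement of provable equivalents must substitute for the unavailable monotonicity of $\Delta$ and $\circ$ is an accurate gloss on what the paper's proof is doing.
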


\begin{proof}
Right-to-Left: Immediate by the definition of $N^\Lambda_0$ and the fact that $N^\Lambda_0(s)\subseteq (N^\Lambda_0)^+(s)$.

Left-to-Right: Suppose that $|\phi|\in (N^\Lambda_0)^+(s)$, to prove that $\Delta(\phi\vee\psi)\land\circ(\phi\vee\psi)\in s\text{ for all }\psi.$ By supposition, $X\subseteq |\phi|$ for some $X\in N^\Lambda_0(s)$. Then there must be a $\chi$ such that $X=|\chi|$, and thus $\Delta(\chi\vee\psi)\land\circ(\chi\vee\psi)\in s$ for all $\psi$, and hence $\Delta(\chi\vee\phi\vee\psi)\land\circ(\chi\vee\phi\vee\psi)\in s$. From $|\chi|\subseteq |\phi|$, it follows that $\vdash\chi\to\phi$, and then $\vdash\chi\vee\phi\vee\psi\lra \phi\vee\psi$, and thus $\vdash\Delta(\chi\vee\phi\vee\psi)\lra \Delta(\phi\vee\psi)$ and $\vdash \circ(\chi\vee\phi\vee\psi)\lra \circ(\phi\vee\psi)$ by $\text{RE}\nabla$, $\text{RE}\bullet$, $\text{Def.~}\Delta$ and $\text{Def.~}\circ$. Therefore, $\Delta(\phi\vee\psi)\land\circ(\phi\vee\psi)\in s$ for all $\psi$.
\end{proof}

Based on the previous analysis, we have the following.
\begin{theorem}\label{thm.m-comp}
${\bf M^{\nabla\bullet}}$ is sound and strongly complete with respect to the class of all $(s)$-frames.
\end{theorem}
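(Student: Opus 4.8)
The plan is to establish soundness and strong completeness of ${\bf M^{\nabla\bullet}}$ with respect to the class of $(s)$-frames. Soundness is already granted by the restatable proposition preceding this theorem (deferred to the appendix), so the real work is completeness. The strategy follows the canonical-model method that has been set up through Definition~\ref{def.cm-M} and Lemma~\ref{lem.truthlem-m}: rather than working directly with a canonical model that is supplemented, I would build the \emph{minimal} canonical neighborhood model $\M^\Lambda_0$, take its supplementation $(\M^\Lambda_0)^+$ to force the property $(s)$, and then argue that this supplemented structure is still a \emph{canonical neighborhood model} in the sense of Definition~\ref{def.cm-M}, so that the Truth Lemma~\ref{lem.truthlem-m} applies to it verbatim.

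Concretely, first I would invoke Lemma~\ref{lem.truthlem-m}, which holds for \emph{any} canonical neighborhood model of any system $\Lambda$ extending ${\bf M^{\nabla\bullet}}$, specialized to $\Lambda={\bf M^{\nabla\bullet}}$ itself. The obstacle this lemma sidesteps is that $\M^\Lambda_0$ need not satisfy $(s)$, so it is not yet a model in the target class. To repair this, I would pass to $(\M^\Lambda_0)^+$, which is an $(s)$-model by the definition of supplementation (as already noted in the text just before the theorem). The crucial bridging step is the immediately preceding lemma, which shows
$$|\phi|\in (N^\Lambda_0)^+(s)\iff \Delta(\phi\vee\psi)\land\circ(\phi\vee\psi)\in s\text{ for all }\psi.$$
This is exactly the defining condition of the neighborhood function in Definition~\ref{def.cm-M}, so $(\M^\Lambda_0)^+$ qualifies as a canonical neighborhood model for $\Lambda$. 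Consequently Lemma~\ref{lem.truthlem-m} applies to $(\M^\Lambda_0)^+$, giving $\phi^{(\M^\Lambda_0)^+}=|\phi|$ for every formula.

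With the Truth Lemma in hand for an $(s)$-model, completeness is routine: given a ${\bf M^{\nabla\bullet}}$-consistent formula $\phi$, by Lindenbaum's lemma extend it to a maximal consistent set $s\in S^\Lambda$ with $\phi\in s$; then by the Truth Lemma $(\M^\Lambda_0)^+,s\vDash\phi$, and since $(\M^\Lambda_0)^+$ is based on an $(s)$-frame, $\phi$ is satisfiable over the class of $(s)$-frames. Strong completeness follows in the same manner by extending any consistent \emph{set} of formulas to a maximal consistent set and reading off satisfiability pointwise. I would remark that the heavy lifting has already been distributed across the preceding results: the interaction axioms $\text{M1}$--$\text{M4}$ do their work inside Lemma~\ref{lem.truthlem-m} (the $\nabla\phi$ case uses $\text{M1}$--$\text{M3}$ and the $\bullet\phi$ case uses $\text{M4}$), while closure of the relevant conditions under supplementation is what makes the detour through $(\M^\Lambda_0)^+$ legitimate.

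The main obstacle, and the reason the proof is organized this way, is the tension between wanting a supplemented model (to land in the $(s)$-frame class) and needing the neighborhood function to retain its precise canonical characterization (so the Truth Lemma still fires). Proving directly that the minimal function $N^\Lambda_0$ is already supplemented would fail in general; the resolution is that supplementation preserves exactly the membership condition governing $N^\Lambda$, as the preceding lemma verifies. Hence the whole theorem reduces to assembling Lemma~\ref{lem.truthlem-m}, the supplementation lemma, and a standard Lindenbaum argument, with soundness supplied by the earlier restatable proposition.
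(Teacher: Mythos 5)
Your proposal is correct and follows essentially the same route as the paper: completeness via the minimal canonical model $\M^\Lambda_0$, passing to its supplementation $(\M^\Lambda_0)^+$ to secure $(s)$, and using the bridging lemma $|\phi|\in (N^\Lambda_0)^+(s)\iff \Delta(\phi\vee\psi)\land\circ(\phi\vee\psi)\in s$ for all $\psi$ to show that $(\M^\Lambda_0)^+$ is still a canonical neighborhood model in the sense of Definition~\ref{def.cm-M}, so that Lemma~\ref{lem.truthlem-m} applies, with soundness from the earlier proposition. Your identification of the tension between supplementation and preserving the canonical characterization of the neighborhood function is exactly the point the paper's arrangement of lemmas is designed to resolve.
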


We conclude this part with some results which will be used in Section~\ref{sec.updating}. The following result states that if one is ignorant of the fact that either $\phi$ holds or one is ignorant whether $\phi$ holds, then one is either ignorant of the fact that $\phi$ or ignorant whether $\phi$ holds.
%if it is accidental that either $\phi$ holds or $\phi$ is contingent, then $\phi$ is either accidental or contingent.
\begin{proposition}\label{prop.useful}
$\bullet(\phi\vee \nabla\phi)\to (\bullet \phi\vee\nabla \phi)$ is provable in ${\bf M^{\nabla\bullet}}$.\footnote{In fact, we can show a stronger result: $\bullet(\phi\vee\nabla\psi)\to\bullet\phi\vee\nabla\psi$ is provable in ${\bf M^{\nabla\bullet}}$. But we do not need such a strong result below.}
\end{proposition}

\begin{proof}
By Thm.~\ref{thm.m-comp}, it suffices to show the formula is valid over the class of $(s)$-frames. Let $\M=\lr{S,N,V}$ be an $(s)$-model and $s\in S$.

Suppose, for reductio, that $\M,s\vDash\bullet(\phi\vee\nabla \phi)$ and $\M,s\nvDash \bullet \phi\vee\nabla \phi$. From the former, it follows that $\M,s\vDash  \phi\vee\nabla \phi$ and $(\phi\vee\nabla  \phi)^\M\notin N(s)$; from the latter, it follows that $\M,s\nvDash \bullet \phi$ and $\M,s\nvDash \nabla \phi$. This implies that $\M,s\vDash \phi$, which plus $\M,s\nvDash\bullet \phi$ gives us $\phi^\M\in N(s)$. Since $\phi^\M\subseteq(\phi\vee\nabla \phi)^\M$, by $(s)$, we conclude that $(\phi\vee\nabla \phi)^\M\in N(s)$: a contradiction, as desired.
\end{proof}

%The following result says that if it is accidental that either the non-accident or non-contingency of $\phi$ implies $\phi$, then it is accidental that $\phi$.
The following result says that if one is ignorant of the fact that either non-ignorance of $\phi$ or non-ignorance whether $\phi$ holds implies that $\phi$, then one is ignorant of the fact that $\phi$.
\begin{proposition}\label{prop.useful-2}
$\bullet(\circ \phi\vee\Delta \phi\to \phi)\to \bullet \phi$ is provable in ${\bf M^{\nabla\bullet}}$.
\end{proposition}

\begin{proof}
By Thm.~\ref{thm.m-comp}, it remains only to prove that the formula is valid over the class of $(s)$-frames. Let $\M=\lr{S,N,V}$ be an $(s)$-model and $s\in S$.

Assume, for reductio, that $\M,s\vDash \bullet(\circ\phi\vee\Delta\phi\to \phi)$ and $\M,s\nvDash \bullet\phi$. The former implies $\M,s\vDash \circ\phi\vee\Delta\phi\to \phi$ and $(\circ\phi\vee\Delta\phi\to \phi)^\M\notin N(s)$; the latter entails that $\M,s\vDash \circ\phi$. Then $\M,s\vDash\phi$, and thus $\phi^\M\in N(s)$. One may easily verify that $\phi^\M\subseteq (\circ\phi\vee\Delta\phi\to \phi)^\M$. Then by $(s)$, we conclude that $(\circ\phi\vee\Delta\phi\to \phi)^\M\in N(s)$: a contradiction.
\end{proof}

\subsubsection{Regular logic}

Define ${\bf R^{\nabla\bullet}}:={\bf M^{\nabla\bullet}}+\text{R1}+\text{R2}$, where
\[\begin{array}{ll}
    \text{R1} & \Delta\phi\land\Delta\psi\to\Delta(\phi\land\psi) \\
    \text{R2} & \circ\phi\land\circ\psi\to\circ(\phi\land\psi)\\
\end{array}\]

\begin{proposition}\label{prop.soundness-r}
${\bf R^{\nabla\bullet}}$ is sound with respect to the class of quasi-filters.
\end{proposition}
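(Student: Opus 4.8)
The plan is to reduce the claim to checking only the two new axioms. Since ${\bf R^{\nabla\bullet}}$ extends ${\bf M^{\nabla\bullet}}$ by $\text{R1}$ and $\text{R2}$, and since every quasi-filter satisfies $(s)$ by definition, the soundness of ${\bf M^{\nabla\bullet}}$ over the class of $(s)$-frames (already established) guarantees that all axioms and rules inherited from ${\bf M^{\nabla\bullet}}$ are valid over quasi-filters, as quasi-filters form a subclass of the $(s)$-frames. Thus I would only need to verify that $\text{R1}$ and $\text{R2}$ are valid on an arbitrary model $\M=\lr{S,N,V}$ whose neighborhood function has both $(i)$ and $(s)$. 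Throughout I will use $(\phi\land\psi)^\M=\phi^\M\cap\psi^\M$, the complement identity $S\backslash(\phi\land\psi)^\M=(S\backslash\phi^\M)\cup(S\backslash\psi^\M)$, and the derived clauses $\M,s\vDash\Delta\chi$ iff $\chi^\M\in N(s)$ or $S\backslash\chi^\M\in N(s)$, and $\M,s\vDash\circ\chi$ iff ($\M,s\vDash\chi$ implies $\chi^\M\in N(s)$).

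First I would dispatch $\text{R2}$, which needs only $(i)$: assuming $\M,s\vDash\circ\phi\land\circ\psi$ and $\M,s\vDash\phi\land\psi$, the two $\circ$-hypotheses force $\phi^\M,\psi^\M\in N(s)$, whence $(i)$ gives $(\phi\land\psi)^\M=\phi^\M\cap\psi^\M\in N(s)$, i.e.\ $\M,s\vDash\circ(\phi\land\psi)$.

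The main work is $\text{R1}$, handled by a four-way case split on which disjunct of $\Delta\phi$ and of $\Delta\psi$ is witnessed. Assume $\M,s\vDash\Delta\phi\land\Delta\psi$. If $\phi^\M\in N(s)$ and $\psi^\M\in N(s)$, then $(i)$ yields $(\phi\land\psi)^\M\in N(s)$. In each of the three remaining cases at least one of $S\backslash\phi^\M$, $S\backslash\psi^\M$ belongs to $N(s)$, and each of these sets is contained in $S\backslash(\phi\land\psi)^\M$; supplementation $(s)$ then lifts membership up to $S\backslash(\phi\land\psi)^\M\in N(s)$. In either situation $\M,s\vDash\Delta(\phi\land\psi)$, establishing the validity of $\text{R1}$.

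I do not anticipate a genuine obstacle: the one place to be careful is in $\text{R1}$, where one must resist trying to apply $(i)$ in the mixed and doubly-negated cases and instead observe that the complement of the conjunction is a superset handled uniformly by $(s)$. With $\text{R1}$ and $\text{R2}$ valid, soundness of ${\bf R^{\nabla\bullet}}$ over the class of quasi-filters follows at once.
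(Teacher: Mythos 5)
Your proof is correct and follows essentially the same route as the paper: inherit the soundness of ${\bf M^{\nabla\bullet}}$ (quasi-filters being a subclass of the $(s)$-frames) and then check only the two new axioms $\text{R1}$ and $\text{R2}$. The paper merely cites earlier work for the validity of these two axioms, whereas you verify them directly, and your verification is sound --- $(i)$ handles $\text{R2}$ and the doubly-positive case of $\text{R1}$, while $(s)$ correctly lifts $S\backslash\phi^\M$ or $S\backslash\psi^\M$ to the superset $S\backslash(\phi\land\psi)^\M$ in the remaining cases.
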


\begin{proof}
By soundness of ${\bf M^{\nabla\bullet}}$, it remains to prove the validity of R1 and R2. The validity of R1 has been shown in~\cite[Prop.~3(iv)]{fan2019family}, and the validity of R2 has been shown in~\cite[Thm.~5.2]{Fan:2020neighborhood}.
\end{proof}

\begin{proposition}\label{prop.extends-r}
Let $\Lambda$ extends ${\bf R^{\nabla\bullet}}$. Then the minimal canonical model $\M^\Lambda_0$ has the property $(i)$. As a corollary, its supplementation is a quasi-filter.
\end{proposition}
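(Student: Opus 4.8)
The plan is to verify property $(i)$ directly from the definition of the minimal canonical model $\M^\Lambda_0=\lr{S^\Lambda,N^\Lambda_0,V^\Lambda}$, where $N^\Lambda_0(s)=\{|\phi|\mid \Delta(\phi\vee\psi)\land\circ(\phi\vee\psi)\in s\text{ for all }\psi\}$, exploiting the new axioms \text{R1} and \text{R2} together with a propositional distribution law. Fix $s\in S^\Lambda$ and suppose $X,Y\in N^\Lambda_0(s)$. By the definition of $N^\Lambda_0$ I can choose formulas $\phi,\phi'$ with $X=|\phi|$ and $Y=|\phi'|$ such that $\Delta(\phi\vee\psi)\land\circ(\phi\vee\psi)\in s$ for all $\psi$ and $\Delta(\phi'\vee\psi)\land\circ(\phi'\vee\psi)\in s$ for all $\psi$. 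Since $|\phi|\cap|\phi'|=|\phi\land\phi'|$ in any canonical model built from maximal consistent sets, it suffices to show $|\phi\land\phi'|\in N^\Lambda_0(s)$, that is, that $\Delta((\phi\land\phi')\vee\psi)\land\circ((\phi\land\phi')\vee\psi)\in s$ for every $\psi$.

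The key observation is the tautology $(\phi\land\phi')\vee\psi\lra(\phi\vee\psi)\land(\phi'\vee\psi)$. Fix an arbitrary $\psi$. From the two hypotheses instantiated at this same $\psi$ I obtain $\Delta(\phi\vee\psi),\Delta(\phi'\vee\psi)\in s$ and $\circ(\phi\vee\psi),\circ(\phi'\vee\psi)\in s$. Applying \text{R1} yields $\Delta((\phi\vee\psi)\land(\phi'\vee\psi))\in s$, and applying \text{R2} yields $\circ((\phi\vee\psi)\land(\phi'\vee\psi))\in s$. Rewriting the inner conjunction back to $(\phi\land\phi')\vee\psi$ via the displayed tautology, using $\text{RE}\nabla$ (and hence the derived replacement for $\Delta={\neg\nabla}$) and $\text{RE}\bullet$ (and hence the one for $\circ={\neg\bullet}$), gives $\Delta((\phi\land\phi')\vee\psi)\in s$ and $\circ((\phi\land\phi')\vee\psi)\in s$. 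As $\psi$ was arbitrary, the defining condition for $|\phi\land\phi'|\in N^\Lambda_0(s)$ is met, so $X\cap Y=|\phi\land\phi'|\in N^\Lambda_0(s)$. Thus $N^\Lambda_0(s)$ is closed under intersections, and since $s$ was arbitrary, $\M^\Lambda_0$ has $(i)$.

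For the corollary, recall that every supplementation is supplemented by construction, so $(\M^\Lambda_0)^+$ has $(s)$; and by Prop.~\ref{prop.properties-preserved} the property $(i)$ is preserved under supplementation, so $(\M^\Lambda_0)^+$ also has $(i)$. A model possessing both $(i)$ and $(s)$ is precisely a quasi-filter, which is the claim.

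I expect the only point requiring care to be the bookkeeping around the ``for all $\psi$'' quantifier in the definition of $N^\Lambda_0$: one must instantiate both membership hypotheses at the \emph{same} $\psi$ before combining them with \text{R1} and \text{R2}, and the replacement must be carried out on $\Delta$ and $\circ$ rather than directly on $\nabla$ and $\bullet$. The latter is legitimate because $\Delta$ and $\circ$ abbreviate ${\neg\nabla}$ and ${\neg\bullet}$ and the rules $\text{RE}\nabla,\text{RE}\bullet$ propagate through negation. No genuine obstacle arises, since the distribution law reduces the whole closure condition to the two regularity axioms.
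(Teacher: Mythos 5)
Your proof is correct and follows essentially the same route as the paper's: pick witnesses $\phi,\chi$ for $X,Y$, instantiate both ``for all $\psi$'' conditions at the same $\psi$, apply \text{R1} and \text{R2}, and transfer the result to $(\phi\land\chi)\vee\psi$ via the distribution tautology and the replacement rules $\text{RE}\nabla$, $\text{RE}\bullet$. The only difference is that you spell out the steps the paper compresses into ``by axioms R1 and R2'' (and likewise make explicit the appeal to Prop.~\ref{prop.properties-preserved} for the corollary), which is exactly the intended argument.
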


\begin{proof}
Suppose that $X,Y\in N^\Lambda_0(s)$, to show that $X\cap Y\in N^\Lambda_0(s)$. By supposition, there exist $\phi$ and $\chi$ such that $X=|\phi|$ and $Y=|\chi|$, and then $\Delta(\phi\vee\psi)\land\circ(\phi\vee\psi)\in s$ for all $\psi$, and $\Delta(\chi\vee\psi)\land\circ(\chi\vee\psi)\in s$ for all $\psi$. By axioms R1 and R2, we can obtain that $\Delta((\phi\land\chi)\vee\psi)\land\circ((\phi\land\chi)\vee\psi)\in s$ for all $\psi$, which implies that $|\phi\land\chi|\in N^\Lambda_0(s)$, that is, $X\cap Y\in N^\Lambda_0(s)$.
\end{proof}

\begin{theorem}\label{thm.r-comp}
${\bf R^{\nabla\bullet}}$ is sound and strongly complete with respect to the class of quasi-filters.
\end{theorem}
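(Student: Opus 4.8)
The plan is to treat soundness and completeness separately, reusing the canonical-model apparatus already developed for ${\bf M^{\nabla\bullet}}$. Soundness is immediate: it is precisely the content of Prop.~\ref{prop.soundness-r}, which establishes the validity of R1 and R2 over quasi-filters on top of the soundness of ${\bf M^{\nabla\bullet}}$ over $(s)$-frames. So no new soundness work is needed here.

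For completeness I would follow the template of Thm.~\ref{thm.m-comp}, the only new requirement being that the underlying canonical model be closed under intersections as well as supplemented. Setting $\Lambda = {\bf R^{\nabla\bullet}}$, I would consider the supplementation $(\M^\Lambda_0)^+$ of the minimal canonical model $\M^\Lambda_0$. First, since $\Lambda$ extends ${\bf R^{\nabla\bullet}}$, Prop.~\ref{prop.extends-r} tells us that $\M^\Lambda_0$ possesses property $(i)$. Second, by construction the supplementation $(\M^\Lambda_0)^+$ possesses $(s)$, and by Prop.~\ref{prop.properties-preserved} property $(i)$ is preserved under supplementation, so $(\M^\Lambda_0)^+$ possesses $(i)$ too. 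Hence $(\M^\Lambda_0)^+$ is a quasi-filter.

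It then remains only to check that the truth lemma applies to this model. By the lemma immediately preceding Thm.~\ref{thm.m-comp}, $(\M^\Lambda_0)^+$ is a canonical neighborhood model for $\Lambda$ in the sense of Def.~\ref{def.cm-M}; therefore Lemma~\ref{lem.truthlem-m} yields $(\M^\Lambda_0)^+, s \vDash \phi \iff \phi \in s$ for every $s \in S^\Lambda$ and every $\phi \in \mathcal{L}(\nabla,\bullet)$. A standard Lindenbaum argument then finishes the proof: any ${\bf R^{\nabla\bullet}}$-consistent formula extends to some $s \in S^\Lambda$ and is thus satisfied at $(\M^\Lambda_0)^+, s$ on a quasi-filter, so every formula valid over the class of quasi-filters is a theorem of ${\bf R^{\nabla\bullet}}$.

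I do not expect a genuine obstacle, since all the substantive work has already been carried out elsewhere: the truth lemma (Lemma~\ref{lem.truthlem-m}), the intersection-closure of the minimal canonical model (Prop.~\ref{prop.extends-r}), and the preservation of $(i)$ under supplementation (Prop.~\ref{prop.properties-preserved}). The single point needing care is to confirm that passing to the supplementation does not destroy closure under intersections---this is exactly the role of Prop.~\ref{prop.properties-preserved}---so that the final model is a bona fide quasi-filter and not merely a supplemented model. Given that, the theorem is essentially an assembly of the preceding results in the manner of Thm.~\ref{thm.m-comp}.
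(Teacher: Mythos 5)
Your proposal is correct and matches the paper's own (implicit) argument exactly: soundness via Prop.~\ref{prop.soundness-r}, and completeness by taking the supplementation $(\M^\Lambda_0)^+$ of the minimal canonical model, using Prop.~\ref{prop.extends-r} for property $(i)$, Prop.~\ref{prop.properties-preserved} to preserve $(i)$ under supplementation, the lemma before Thm.~\ref{thm.m-comp} to see that $(\M^\Lambda_0)^+$ is canonical, and Lemma~\ref{lem.truthlem-m} for the truth lemma. You also correctly identify the one delicate point --- that supplementation must not destroy intersection-closure --- which is precisely the role the paper assigns to Prop.~\ref{prop.properties-preserved}.
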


\subsubsection{${\bf K^{\nabla\bullet}}$}

Define ${\bf K^{\nabla\bullet}}:={\bf R^{\nabla\bullet}}+\circ\top$.

Again, like the case of ${\bf EN^{\nabla\bullet}}$(Sec.~\ref{sec.en}), $\Delta\top$ is derivable from
$\circ\top$ and Prop.~\ref{prop.circimpliesdelta}. This hints us that the inference rule $\text{R1}$ in~\cite[Def.~12]{Fan:2019} is actually dispensable. (Fact 13 therein is derivable from axiom A1 and axiom A6. Then by R2, we have $\vdash\phi$ implies $\vdash \circ\phi\land\phi$, and then $\vdash \Delta\phi$. Thus we derive R1 there.)

\begin{theorem}\label{thm.comp-k-nabla-bullet}
${\bf K^{\nabla\bullet}}$ is sound and strongly complete with respect to the class of filters.
\end{theorem}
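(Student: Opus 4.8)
The plan is to mirror the completeness arguments already established for ${\bf M^{\nabla\bullet}}$ (Theorem~\ref{thm.m-comp}) and ${\bf R^{\nabla\bullet}}$ (Theorem~\ref{thm.r-comp}), adding only the treatment of the extra axiom $\circ\top$, which is precisely what forces the property $(n)$. Since a filter is exactly a quasi-filter that additionally satisfies $(n)$, both directions reduce to locating this single new ingredient on top of the already-handled $(i)$ and $(s)$. For soundness, Proposition~\ref{prop.soundness-r} gives that ${\bf R^{\nabla\bullet}}$ is sound with respect to quasi-filters, and every filter is a quasi-filter, so all axioms and rules inherited from ${\bf R^{\nabla\bullet}}$ are valid on filters; it then remains only to check $\circ\top$, whose validity on $(n)$-frames is already recorded in the proof of Proposition~\ref{prop.n-definable}, and every filter has $(n)$.

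For completeness I would reuse the canonical-model construction for extensions of ${\bf M^{\nabla\bullet}}$. Take $\Lambda={\bf K^{\nabla\bullet}}$ and form the minimal canonical model $\M^\Lambda_0=\lr{S^\Lambda,N^\Lambda_0,V^\Lambda}$. Because $\Lambda$ extends ${\bf R^{\nabla\bullet}}$, Proposition~\ref{prop.extends-r} already yields that $N^\Lambda_0$ has $(i)$. The genuinely new step is to show $N^\Lambda_0$ has $(n)$, i.e. $S^\Lambda=|\top|\in N^\Lambda_0(s)$ for every $s\in S^\Lambda$. From $\circ\top\in\Lambda$ we get $\circ\top\in s$, and since $\top\in s$, Proposition~\ref{prop.circimpliesdelta} (in the form $\vdash\circ\phi\land\phi\to\Delta\phi$) gives $\Delta\top\in s$. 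For an arbitrary $\psi$, the tautology $\top\vee\psi\lra\top$ together with $\text{RE}\nabla$, $\text{RE}\bullet$ and the definitions of $\Delta$ and $\circ$ yields $\vdash\Delta(\top\vee\psi)\lra\Delta\top$ and $\vdash\circ(\top\vee\psi)\lra\circ\top$, so $\Delta(\top\vee\psi)\land\circ(\top\vee\psi)\in s$ for all $\psi$; by the defining clause of $N^\Lambda_0$ this is exactly $|\top|\in N^\Lambda_0(s)$. Hence $\M^\Lambda_0$ has $(n)$.

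Next I would pass to the supplementation $(\M^\Lambda_0)^+$. By the definition of supplementation it has $(s)$, and by Proposition~\ref{prop.properties-preserved} it inherits both $(i)$ and $(n)$ from $\M^\Lambda_0$; therefore $(\M^\Lambda_0)^+$ is a filter. The lemma preceding Theorem~\ref{thm.m-comp} shows that $(\M^\Lambda_0)^+$ is again a canonical neighborhood model for $\Lambda$ in the sense of Definition~\ref{def.cm-M}, so the truth lemma (Lemma~\ref{lem.truthlem-m}) applies and gives $\phi^{(\M^\Lambda_0)^+}=|\phi|$ for every $\phi$. Strong completeness then follows in the standard way: any $\Lambda$-consistent set of formulas extends by Lindenbaum's lemma to some $s\in S^\Lambda$, and the truth lemma makes $(\M^\Lambda_0)^+,s$ satisfy that set on a filter.

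The main obstacle, such as it is, lies entirely in the verification that the minimal canonical model satisfies $(n)$ while simultaneously retaining $(i)$, and that both properties survive the supplementation; the remainder is assembly of results already proved. In carrying out the $(n)$ step I would take particular care that the derivation of $\Delta\top$ and the reduction of $\Delta(\top\vee\psi)\land\circ(\top\vee\psi)$ to $\Delta\top\land\circ\top$ are discharged uniformly for \emph{all} $\psi$, since the membership condition defining $N^\Lambda_0(s)$ quantifies universally over $\psi$.
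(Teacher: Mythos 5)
Your proposal is correct and follows essentially the same route as the paper's own proof: soundness via Proposition~\ref{prop.soundness-r} plus the validity of $\circ\top$ on $(n)$-frames from Proposition~\ref{prop.n-definable}, and completeness by showing the minimal canonical model for ${\bf K^{\nabla\bullet}}$ satisfies $(n)$ (deriving $\Delta\top$ from $\circ\top$ via Proposition~\ref{prop.circimpliesdelta}, then reducing $\Delta(\top\vee\psi)\land\circ(\top\vee\psi)$ to $\Delta\top\land\circ\top$ uniformly in $\psi$ using $\vdash(\top\vee\psi)\lra\top$ with $\text{RE}\nabla$ and $\text{RE}\bullet$), before passing to the supplementation, which is a filter by Propositions~\ref{prop.extends-r} and~\ref{prop.properties-preserved}. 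Your explicit appeal to the lemma preceding Theorem~\ref{thm.m-comp} to justify that the supplementation remains a canonical model, and hence that Lemma~\ref{lem.truthlem-m} applies, is exactly the assembly the paper leaves implicit.
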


\begin{proof}
For soundness, by Prop.~\ref{prop.soundness-r}, it suffices to show the validity of $\circ\top$ over the class of filters. This follows immediately from Prop.~\ref{prop.n-definable}.

For completeness, define $\M^\Lambda_0$ as before w.r.t. ${\bf K^{\nabla\bullet}}$. By Prop.~\ref{prop.extends-r} and Prop.~\ref{prop.properties-preserved}, it remains only to show that $N^\Lambda_0(s)$ possesses $(n)$. By $\circ\top$ and the derivable formula $\Delta\top$, we have $\vdash \circ\top$ and $\vdash \Delta\top$. Then by $\vdash (\top\vee\psi)\lra \top$,  $\text{RE}\nabla$, $\text{RE}\bullet$, $\text{Def.~}\Delta$ and $\text{Def.~}\circ$, we infer that for all $s\in S^\Lambda$, $\Delta(\top\vee\psi)\land\circ(\top\vee\psi)\in s$ for all $\psi$, and thus $|\top|\in N^\Lambda(s)$, that is, $S\in N^\Lambda_0(s)$, as desired.
\end{proof}

Inspired by the definition of $N^\Lambda$, one may define the canonical relation for the extensions of ${\bf K^{\nabla\bullet}}$ as follows:

%$sR^Nt$ iff for all $\phi$, if $\Delta(\phi\vee\psi)\land\circ(\phi\vee\chi)\in s$ for all $\psi$ and $\chi$, then $\phi\in t$.
$sR^Nt$ iff for all $\phi$, if $\Delta(\phi\vee\psi)\land\circ(\phi\vee\psi)\in s$ for all $\psi$, then $\phi\in t$.

Recall that the original definition of canonical relation given in~\cite[Def.~18]{Fan:2019} is as follows:

$sR^Kt$ iff there exists $\delta$ such that $(a)$ $\bullet\delta\in s$, and $(b)$ for all $\phi$, if $\Delta\phi\land\circ(\neg\delta\to\phi)\in s$, then $\phi\in t$.

One may ask what the relationship between $R^N$ and $R^K$ is. As we shall see, they are equal to each other. Before this, we need some preparation.

\begin{proposition}\label{prop.preparation}
$\vdash \bullet\delta\land\Delta\phi\land\circ(\neg\delta\to\phi)\to\Delta(\phi\vee\psi)\land\circ(\phi\vee\chi)$
\end{proposition}

\begin{proof}
%One may show that this formula is valid over the class of filters. Then use the completeness of ${\bf K^{\nabla\bullet}}$ (Thm.~\ref{thm.comp-k-nabla-bullet}).
By Thm.~\ref{thm.comp-k-nabla-bullet}, it remains only to show that this formula is valid over the class of filters.

Let $\M=\lr{S,N,V}$ be a filter and $s\in S$. Suppose that $\M,s\vDash \bullet\delta\land\Delta\phi\land\circ(\neg\delta\to\phi)$, to show $\M,s\vDash \Delta(\phi\vee\psi)\land\circ(\phi\vee\chi)$. By $\M,s\vDash \bullet\delta$, we have $\M,s\vDash\delta$ and $\delta^\M\notin N(s)$. By $\M,s\vDash\Delta\phi$, we infer that $\phi^\M\in N(s)$ or $S\backslash\phi^\M\in N(s)$. Since $\M,s\vDash\delta$, we derive that $\M,s\vdash\neg\delta\to\phi$. Then by $\M,s\vDash\circ(\neg\delta\to\phi)$, we get $(\neg\delta\to\phi)^\M\in N(s)$, that is, $\delta^\M\cup \phi^\M\in N(s)$. If $S\backslash\phi^\M\in N(s)$, then as $N(s)$ has the property $(i)$, $(\delta^\M\cup \phi^\M)\cap(S\backslash\phi^\M)\in N(s)$, viz. $\delta^\M\cap (S\backslash\phi^\M)\in N(s)$. Since $N(s)$ possesses the property $(s)$ and $\delta^\M\cap(S\backslash\phi^\M)\subseteq \delta^\M$, it follows that $\delta^\M\in N(s)$: a contradiction. This entails that $S\backslash\phi^\M\notin N(s)$, and thus $\phi^\M\in N(s)$. Note that $\phi^\M\subseteq \phi^\M\cup\psi^\M=(\phi\vee\psi)^\M$ and $\phi^\M\subseteq \phi^\M\cup\chi^\M=(\phi\vee\chi)^\M$. Using $(s)$ again, we conclude that $(\phi\vee\psi)^\M\in N(s)$  and $(\phi\vee\chi)^\M\in N(s)$, and therefore $\M,s\vDash\Delta(\phi\vee\psi)\land \circ(\phi\vee\chi)$, as desired.
\end{proof}

\begin{proposition}
Let $\Lambda$ be an extension of ${\bf K^{\nabla\bullet}}$. Then for all $s,t\in S^\Lambda$, $sR^Nt$ iff $sR^Kt$.%\footnote{This is claimed without proofs in~\cite[fn.2]{Fan:2021}.}
\end{proposition}

\begin{proof}
Suppose that $sR^Nt$, to show that $sR^Kt$. By supposition, for all $\phi$, if $\Delta(\phi\vee\psi)\land\circ(\phi\vee\psi)\in s$ for all $\psi$, then $\phi\in t$. Letting $\phi=\bot$, we can infer that $\Delta\psi\land\circ\psi\notin s$ for some $\psi$. If $\Delta\psi\notin s$, that is, $\nabla\psi\in s$, then by axiom $\text{E3}$, we derive that $\bullet\psi\in s$ or $\bullet\neg\psi\in s$. If $\circ\psi\notin s$, then we have $\bullet\psi\in s$. Either case implies that $\bullet\delta\in s$ for some $\delta$. Now suppose for any $\phi'$ that $\Delta\phi'\land\circ(\neg\delta\to\phi')\in s$. By Prop.~\ref{prop.preparation}, we infer that $\Delta(\phi'\vee\chi)\land\circ(\phi'\vee\chi)\in s$ for all $\chi$. Then by supposition again, we conclude that $\phi'\in t$. Therefore, $sR^Kt$.

Conversely, assume that $sR^Kt$, then there exists $\delta$ such that $(a)$ $\bullet\delta\in s$, and $(b)$ for all $\phi$, if $\Delta\phi\land\circ(\neg\delta\to\phi)\in s$, then $\phi\in t$. It remains to prove that $sR^Nt$. For this, suppose for any $\phi$ that $\Delta(\phi\vee\psi)\land\circ(\phi\vee\psi)\in s$ for all $\psi$. By letting $\psi=\bot$, we obtain that $\Delta\phi\in s$; by letting $\psi=\delta$, we infer that $\circ(\neg\delta\to\phi)\in s$. Thus $\Delta\phi\land \circ(\neg \delta\to\phi)\in s$. Then by $(b)$, we conclude that $\phi\in t$, and therefore $sR^Nt$, as desired. 
\end{proof}

\weg{\begin{proof}
Suppose that $sR^Nt$, to show that $sR^Kt$. By supposition, for all $\phi$, if $\Delta(\phi\vee\psi)\land\circ(\phi\vee\chi)\in s$ for all $\psi$ and $\chi$, then $\phi\in t$. Letting $\phi=\bot$, we can infer that $\Delta\psi\land\circ\chi\notin s$ for some $\psi$ and some $\chi$. If $\Delta\psi\notin s$, that is, $\nabla\psi\in s$, then by axiom $\text{E3}$, we derive that $\bullet\psi\in s$ or $\bullet\neg\psi\in s$. If $\circ\chi\notin s$, then we have $\bullet\chi\in s$. Either case implies that $\bullet\delta\in s$ for some $\delta$. Now suppose for any $\phi'$ that $\Delta\phi'\land\circ(\neg\delta\to\phi')\in s$. By Prop.~\ref{prop.preparation}, we infer that $\Delta(\phi'\vee\psi)\land\circ(\phi'\vee\chi)\in s$ for all $\psi$ and all $\chi$. Then by supposition again, we conclude that $\phi'\in t$.

Conversely, assume that $sR^Kt$, then there exists $\delta$ such that $(a)$ $\bullet\delta\in s$, and $(b)$ for all $\phi$, if $\Delta\phi\land\circ(\neg\delta\to\phi)\in s$, then $\phi\in t$. It remains to prove that $sR^Nt$. For this, suppose for any $\phi$ that $\Delta(\phi\vee\psi)\land\circ(\phi\vee\chi)\in s$ for all $\psi$ and $\chi$. By letting $\psi=\bot$ and $\chi=\delta$, we obtain that $\Delta\phi\land\circ(\neg\delta\to\phi)\in s$. Then by $(b)$, we conclude that $\phi\in t$, as desired. 
\end{proof}}

%\section{Adding public announcements}
%\section{Two ways of updating neighborhood models}
\section{Updating neighborhood models}\label{sec.updating}

%The intersection semantics and the subset semantics

In this section, we extend the previous results to the dynamic case of public announcements. Syntactically, we add the constructor $[\phi]\phi$ into the previous languages $\mathcal{L}(\nabla)$, $\mathcal{L}(\bullet)$ and $\mathcal{L}(\nabla,\bullet)$, and denote the obtained extensions by $\mathcal{L}(\nabla,[\cdot])$, $\mathcal{L}(\bullet,[\cdot])$, $\mathcal{L}(\nabla,\bullet,[\cdot])$, respectively. $[\psi]\phi$ is read ``after every truthfully public announcement of $\psi$, $\phi$ holds''. Also, as usual, $\lr{\psi}\phi$ abbreviates $\neg[\psi]\neg\phi$. Semantically, we adopt the intersection semantics in the literature (e.g.~\cite{Zvesper:2010,ma2013update,MaSano:2015}). In detail, given a monotone neighborhood model $\M=\lr{S,N,V}$ and a state $s\in S$,
%The intersection semantics
\[
\begin{array}{lll}
    \M,s\vDash[\psi]\phi & \iff & \M,s\vDash \psi \text{ implies }\M^{\cap\psi},s\vDash\phi  \\
\end{array}
\]
where $\M^{\cap\psi}$ is the intersection submodel $\M^{\cap\psi^\M}$, and the notion of intersection submodels is defined as below.

\begin{definition}\cite[Def.~3]{ma2013update}
Let $\M=\lr{S,N,V}$ be a monotone model, and $X$ a nonempty subset of $S$. Define {\em the intersection model} $\M^{\cap X}=\lr{X,N^{\cap X},V^X}$ induced from $X$ in the following.
\begin{itemize}
    \item for every $s\in X$, $N^{\cap X}=\{Y\mid Y=P\cap X\text{ for some }P\in N(s)\}$,
    \item $V^X(p)=V(p)\cap X$.
\end{itemize}
\end{definition}

\begin{proposition}\cite[Prop.~2]{ma2013update}
The neighborhood property $(s)$ is preserved under taking the intersection submodel. That is, if $\M$ is a monotone neighborhood model with the domain $S$, then for any nonempty subset $X$ of $S$, the intersection submodel $\M^{\cap X}$ is also monotone.
\end{proposition}

The following lists the reduction axioms of $\mathcal{L}(\nabla,\bullet,[\cdot])$ and its sublanguages $\mathcal{L}(\nabla,[\cdot])$ and $\mathcal{L}(\bullet,[\cdot])$ under the intersection semantics.
%${\bf M^{\nabla\bullet}_\texttt{Int}}$
\[
\begin{array}{ll}
    \text{AP} & [\psi]p\lra (\psi\to p) \\
    \text{AN} & [\psi]\neg\phi\lra (\psi\to\neg[\psi]\phi)\\
    \text{AC} & [\psi](\phi\land\chi)\lra ([\psi]\phi\land[\psi]\chi)\\
    \text{AA} & [\psi][\chi]\phi\lra [\psi\land[\psi]\chi]\phi\\
    \text{A}\nabla & [\psi]\nabla\phi\lra (\psi\to\nabla[\psi]\phi\land\nabla[\psi]\neg\phi)\\
    \text{A}\bullet&{[\psi]\bullet}\phi\lra (\psi\to \bullet[\psi]\phi)\\
\end{array}
\]

The following reduction axioms are derivable from the above reduction axioms.
\[
\begin{array}{ll}
    \text{A}\Delta & [\psi]\Delta\phi\lra (\psi\to\Delta[\psi]\phi\vee\Delta[\psi]\neg\phi) \\
    \text{A}\circ & [\psi]{\circ\phi}\lra (\psi\to\circ[\psi]\phi)\\
\end{array}
\]

\begin{theorem}
Let $\Lambda$ be a system of $\mathcal{L}(\nabla)$ (resp. $\mathcal{L}(\bullet)$, $\mathcal{L}(\nabla,\bullet)$). If $\Lambda$ is sound and strongly complete with respect to the class of monotone neighborhood frames, then so is $\Lambda$ plus $\text{AP}$, $\text{AN}$, $\text{AC}$, $\text{AA}$ and $\text{A}\nabla$ (resp. plus $\text{AP}$, $\text{AN}$, $\text{AC}$, $\text{AA}$ and $\text{A}\bullet$, plus $\text{AP}$, $\text{AN}$, $\text{AC}$, $\text{AA}$, $\text{A}\nabla$ and $\text{A}\bullet$) under intersection semantics.
\end{theorem}

\begin{proof}
The validity of axioms AP, AN, AC, AA can be found in~\cite[Thm.~1]{ma2013update},~\cite[Thm.~2, Thm.~3]{MaSano:2015} and~\cite[Prop.~3.1]{Zvesper:2010}. The validity of $\text{A}\bullet$ has been shown in~\cite{Fan:2020neighborhood}, where the axiom is named $\texttt{A}{\bullet\texttt{Int}}$. The validity of $\text{A}\nabla$ is shown as follows.
Let $\M=\lr{S,N,V}$ be an $(s)$-model and $s\in S$. 

To begin with, suppose that $\M,s\vDash [\psi]\nabla\phi$ and $\M,s\vDash\psi$, to show that $\M,s\vDash\nabla[\psi]\phi\land\nabla[\psi]\neg\phi$. By supposition, we have $\M^{\cap \psi},s\vDash\nabla\phi$, which implies $\phi^{\M^{\cap \psi}}\notin N^{\cap \psi}(s)$ and $\psi^\M\backslash\phi^{\M^{\cap\psi}}\notin N^{\cap\psi}(s)$. 

We claim that $\M,s\vDash\nabla[\psi]\phi$, that is, $([\psi]\phi)^\M\notin N(s)$ and $S\backslash([\psi]\phi)^\M\notin N(s)$. If $([\psi]\phi)^\M\in N(s)$, then $([\psi]\phi)^\M\cap \psi^\M\in N^{\cap \psi}(s)$. As $([\psi]\phi)^\M\cap \psi^\M\subseteq \phi^{\M^{\cap\psi}}$, by $(s)$, we have $\phi^{\M^{\cap\psi}}\in N^{\cap\psi}(s)$: a contradiction. If $S\backslash([\psi]\phi)^\M\in N(s)$, then $(S\backslash([\psi]\phi)^\M)\cap \psi^\M\in N^{\cap\psi}(s)$. Note that $(S\backslash([\psi]\phi)^\M)\cap \psi^\M\subseteq \psi^\M\backslash\phi^{\M^{\cap\psi}}$: for any $x\in (S\backslash([\psi]\phi)^\M)\cap \psi^\M$, $x\notin([\psi]\phi)^\M$, thus $x\in \psi^\M$ and $x\notin \phi^{\M^{\cap\psi}}$, and hence $x\in \psi^\M\backslash\phi^{\M^{\cap\psi}}$. By $(s)$ again, $\psi^\M\backslash\phi^{\M^{\cap\psi}}\in N^{\cap\psi}(s)$: a contradiction again.

We also claim that  $\M,s\vDash\nabla[\psi]\neg\phi$, that is, $([\psi]\neg\phi)^\M\notin N(s)$ and $S\backslash([\psi]\neg\phi)^\M\notin N(s)$. If $([\psi]\neg\phi)^\M\in N(s)$, then $([\psi]\neg\phi)^\M\cap \psi^\M\in N^{\cap\psi}(s)$. As $([\psi]\neg\phi)^\M\cap \psi^\M\subseteq \psi^\M\backslash\phi^{\M^{\cap\psi}}$, we infer by $(s)$ that $\psi^\M\backslash\phi^{\M^{\cap\psi}}\in N^{\cap\psi}(s)$: a contradiction. If $S\backslash([\psi]\neg\phi)^\M\in N(s)$, then $(S\backslash([\psi]\neg\phi)^\M)\cap\psi^\M\in N^{\cap\psi}(s)$. Since $(S\backslash([\psi]\neg\phi)^\M)\cap\psi^\M\subseteq \phi^{\M^{\cap\psi}}$, by $(s)$ again, we derive that $\phi^{\M^{\cap\psi}}\in N^{\cap\psi}(s)$: a contradiction again.

Conversely, assume that $\M,s\vDash\psi\to\nabla[\psi]\phi\land\nabla[\psi]\neg\phi$, to prove that $\M,s\vDash[\psi]\nabla\phi$. For this, we suppose that $\M,s\vDash\psi$, it remains only to show that $\M^{\cap\psi},s\vDash\nabla\phi$, that is, $\phi^{\M^{\cap\psi}}\notin N^{\cap\psi}(s)$ and $\psi^\M\backslash\phi^{\M^{\cap\psi}}\notin N^{\cap\psi}(s)$. By assumption and supposition, we obtain $\M,s\vDash\nabla[\psi]\phi\land\nabla[\psi]\neg\phi$. This follows that $([\psi]\phi)^\M\notin N(s)$ and $S\backslash([\psi]\phi)^\M\notin N(s)$, and $([\psi]\neg\phi)^\M\notin N(s)$ and $S\backslash([\psi]\neg\phi)^\M\notin N(s)$.

We claim that $\phi^{\M^{\cap\psi}}\notin N^{\cap\psi}(s)$. Otherwise, that is, $\phi^{\M^{\cap\psi}}\in N^{\cap\psi}(s)$, we have $\phi^{\M^{\cap\psi}}=P\cap \psi^\M$ for some $P\in N(s)$. This implies that $P\subseteq ([\psi]\phi)^\M$: for any $x\in P$, we would have $x\in([\psi]\phi)^\M$, since if $x\in \psi^\M$, then $x\in P\cap\psi^\M=\phi^{\M^{\cap\psi}}$. By $(s)$, $([\psi]\phi)^\M\in N(s)$: a contradiction.

We also claim that $\psi^\M\backslash\phi^{\M^{\cap\psi}}\notin N^{\cap\psi}(s)$. Otherwise, that is, $\psi^\M\backslash\phi^{\M^{\cap\psi}}\in N^{\cap\psi}(s)$, we infer that $\psi^\M\backslash\phi^{\M^{\cap\psi}}=P\cap \psi^\M$ for some $P\in N(s)$. It then follows that $P\subseteq ([\psi]\neg\phi)^\M$: for any $x\in P$, we have $x\in ([\psi]\neg\phi)^\M$, since if $x\in \psi^\M$, then $x=P\cap\psi^\M=\psi^\M\backslash \phi^{\M^{\cap\psi}}$, and so $x\in (\neg\phi)^{\M^{\cap\psi}}$. By $(s)$ again, $([\psi]\neg\phi)^\M\in N(s)$: a contradiction, as desired.
\end{proof}

For the sake of reference, we use ${\bf M^{\nabla\bullet[\cdot]}}$ to denote the extension of ${\bf M^{\nabla\bullet}}$ with all the above reduction axioms. By dropping $\text{A}\bullet$ from ${\bf M^{\nabla\bullet[\cdot]}}$, we obtain the system ${\bf M^{\nabla[\cdot]}}$; by dropping $\text{A}\nabla$ from ${\bf M^{\nabla\bullet[\cdot]}}$, we obtain the system ${\bf M^{\bullet[\cdot]}}$.

In what follows, we will focus on some successful formulas in our languages. A formula is said to be {\em successful}, if it still holds after being announced; in symbols, $\vDash[\phi]\phi$. Recall that $\neg{\bullet p}$ is shown to be successful under the relational semantics in~\cite[Prop.~39]{Fan:2019} and under the intersection semantics in~\cite[Prop.~6.5]{Fan:2020neighborhood}. We will follow this line of research and say much more. As we shall show, any combination of $p$, $\neg p$, ${\neg\bullet}p$, and $\neg\nabla p$ via conjunction (or, via disjunction) is successful under the intersection semantics.\footnote{It is shown in~\cite[Prop.~38]{Fan:2019} that under Kripke semantics, $\bullet p$ is self-refuting and $\neg{\bullet p}$ is successful.} %This is the main result of this section.

%Some results about successful in $\mathcal{L}(\bullet,[\cdot])$ under relational semantics have been obtained in. 

%\cite{Fan:2020neighborhood} shows that the formula $[\neg\bullet p]\neg\bullet p$ are provable in ${\bf M^{\bullet[\cdot]}}$, which means that the negation of Moore sentences are successful. Axiom A$\nabla$ extends the result of ${\bf K^{\nabla[\cdot]}}$ in~\cite{Fanetal:2015} to ${\bf M^{\nabla[\cdot]}}$.\footnote{${\bf K^{\nabla[\cdot]}}$ is called $\mathbb{CLA}$ in~\cite[Def.~7.3]{Fanetal:2015}. There, a reduction axiom $[\phi]\Delta\psi\lra (\phi\to(\Delta[\phi]\psi\vee\Delta[\phi]\neg\psi))$ is proposed, from which and the definition of $\nabla$ we can obtain the reduction axiom A$\nabla$.}

\weg{\begin{theorem}
${\bf M^{\nabla\bullet[\cdot]}}$ is sound and strongly complete with respect to the class of $(s)$-frames.
\end{theorem}}

%It turns out that any conjunct of $p\land {\neg\bullet} p\land \neg\nabla p$ is successful under the intersection semantics.

To begin with, we show that, provably, any combination of $p$, $\neg p$, ${\neg\bullet}p$, and $\neg\nabla p$ via {\em conjunction} is successful under the intersection semantics.

\begin{proposition}
$p$ is successful under the intersection semantics. That is, $[p]p$ is provable in ${\bf M^{\bullet[\cdot]}}$. 
\end{proposition}

\begin{proof}
Straightforward by $\text{AP}$.
\end{proof}

\begin{proposition}\label{prop.notpissuccessful}
$\neg p$ is successful under the intersection semantics. That is, $[\neg p]\neg p$ is provable in ${\bf M^{\bullet[\cdot]}}$. 
\end{proposition}

\begin{proof}
Straightforward by $\text{AN}$ and $\text{AP}$.
\end{proof}

\begin{proposition}\label{prop.notbulletpissuccessful}
${\neg\bullet}p$ is successful under the intersection semantics. That is, $[{\neg\bullet} p]{\neg\bullet} p$ is provable in ${\bf M^{\bullet[\cdot]}}$.
\end{proposition}

\begin{proof}
Refer to~\cite[Prop.~6.5]{Fan:2020neighborhood}.
\end{proof}

\begin{proposition}
$\neg\nabla p$ is successful under the intersection semantics. That is, $[\neg\nabla p]\neg\nabla p$ is provable in ${\bf M^{\nabla[\cdot]}}$.
\end{proposition}

\begin{proof}
We have the following proof sequence in ${\bf M^{\nabla[\cdot]}}$.
\[
\begin{array}{lll}
   & [\neg\nabla p]\neg\nabla p\\ 
   \lra &
    (\neg\nabla p\to \neg[\neg\nabla p]\nabla p) & \text{AN}\\
    \lra & (\neg\nabla p\to \neg (\neg\nabla p\to\nabla[\neg\nabla p]p\land \nabla[\neg\nabla p]\neg p) & \text{A}\nabla\\
      \lra & (\neg\nabla p\to \neg(\nabla (\neg\nabla p\to p)\land \nabla (\neg\nabla p\to \neg (\neg\nabla p\to p)))) & \text{AP},\text{AN} \\
      \lra &(\nabla(\neg\nabla p\to p)\land \nabla (\neg \nabla p\to \neg p)\to \nabla p) & \text{TAUT},\text{RE}\nabla\\
      \lra & (\nabla(p\vee \nabla p)\land \nabla (\neg p\vee\nabla p)\to \nabla p) & \text{TAUT},\text{RE}\nabla\\
     \lra & \top & \text{M1}\\
\end{array}
\]
Therefore, $[\neg\nabla p]\neg\nabla p$ is provable in ${\bf M^{\nabla[\cdot]}}$.
\end{proof}

Intuitively, $[\neg\nabla p]\neg\nabla p$ means that ``after being told that one is not ignorant whether $p$, one is still not ignorant whether $p$.'' In other words, one's non-ignorance about a fact cannot be altered by being announced.

\begin{proposition}\label{prop.notpandpissuccessful}
$p\land\neg  p$ is successful under the intersection semantics.
\end{proposition}

\begin{proof}
Note that $p\land\neg  p$ is equivalent to $\bot$, and $\bot$ is successful.
\end{proof}

\begin{proposition}
$p\land{\neg\bullet} p$ is successful under the intersection semantics. That is, $[p\land{\neg\bullet} p](p\land{\neg\bullet} p)$ is provable in ${\bf M^{\bullet[\cdot]}}$.
\end{proposition}

\begin{proof}
We have the following proof sequence in ${\bf M^{\bullet[\cdot]}}$.
\[
\begin{array}{lll}
  & [p\land{\neg\bullet} p](p\land{\neg\bullet} p) & \\
  \lra &  ([p\land{\neg\bullet} p]p\land [p\land{\neg\bullet} p]{\neg\bullet} p) & \text{AC} \\
    \lra & (p\land{\neg\bullet} p\to p)\land (p\land{\neg\bullet} p\to \neg[p\land{\neg\bullet} p]{\bullet} p) & \text{AP},\text{AN}\\
    \lra  &  (p\land{\neg\bullet} p\to \neg(p\land{\neg\bullet} p\to {\bullet[p\land{\neg\bullet} p]} p) & \text{A}\bullet\\
    \lra & (p\land{\neg\bullet} p\to {\neg\bullet}[p\land{\neg\bullet} p] p) & \text{TAUT}\\
    \lra & (p\land{\neg\bullet}p\to {\neg\bullet}(p\land{\neg\bullet}p\to p)) & \text{AP}\\
    \lra & (p\land{\neg\bullet} p\to {\neg \bullet} \top) & \text{TAUT},\text{RE}\bullet\\
    \lra & (p\land \circ p\to\circ\top) & \text{Def.~}\circ\\
    \lra & \top & \text{Prop.~}\ref{prop.impliescirctop}\\
\end{array}
\]
Therefore, $[p\land{\neg\bullet} p](p\land{\neg\bullet} p)$ is provable in ${\bf M^{\bullet[\cdot]}}$.
\end{proof}

\begin{proposition}\label{prop.pandnablapissuccessful}
$p\land\neg\nabla p$ is successful under the intersection semantics. That is, $[p\land\neg\nabla p](p\land\neg\nabla p)$ is provable in ${\bf M^{\nabla[\cdot]}}$.
\end{proposition}

\begin{proof}
We have the following proof sequence in ${\bf M^{\nabla[\cdot]}}$.
\[
\begin{array}{lll}
    & [p\land\neg\nabla p](p\land\neg\nabla p) & \\
    \lra & ([p\land\neg\nabla p]p\land [p\land\neg\nabla p]\neg\nabla p) & \text{AC} \\
     \lra & (p\land\neg\nabla p\to p)\land (p\land\neg\nabla p\to \neg[p\land\neg\nabla p]\nabla p) & \text{AP},\text{AN}\\
     \lra & (p\land\neg\nabla p\to \neg(p\land\neg\nabla p\to\nabla [p\land\neg\nabla p]p\land \nabla [p\land\neg\nabla p]\neg p)) & \text{A}\nabla\\
     \lra & (p\land \neg\nabla p\to \neg(\nabla[p\land\neg\nabla p]p\land\nabla [p\land\neg\nabla p]\neg p)) & \text{TAUT}\\
     \lra & (p\land\neg\nabla p\to \neg (\nabla \top\land \nabla [p\land\neg\nabla p]\neg p))& \text{AP},\text{RE}\nabla\\
     \lra & (p\land\neg\nabla p\to \neg\nabla \top\vee \neg\nabla (p\land\neg\nabla p\to \neg p))& \text{AN},\text{AP},\text{RE}\nabla\\
     \lra & (p\land\Delta p\to \Delta \top\vee \Delta (p\land\Delta p\to \neg p))& \text{Def.~}\Delta\\
\end{array}
\]
By Prop.~\ref{prop.impliesdeltatop}, $\Delta p\to \Delta \top$ is provable in ${\bf M^{\nabla}}$, so is the last formula in the above proof sequence, and thus $[p\land\neg\nabla p](p\land\neg\nabla p)$ is provable in ${\bf M^{\nabla[\cdot]}}$.
\end{proof}

\begin{proposition}\label{prop.notpandnotbulletpissuccessful}
$\neg p\land{\neg\bullet}p$ is successful under the intersection semantics.
\end{proposition}

\begin{proof}
By $\text{E2}$, $\neg p\land{\neg\bullet}p$ is equivalent to $\neg p$. And we have already known from Prop.~\ref{prop.notpissuccessful} that $\neg p$ is successful.
\end{proof}

\begin{proposition}\label{prop.notpandnotnablapissuccessful}
$\neg p\land\neg\nabla p$ is successful under the intersection semantics. That is, $[\neg p\land\neg\nabla p](\neg p\land\neg\nabla p)$ is provable in ${\bf M^{\nabla[\cdot]}}$.
\end{proposition}

\begin{proof}
We have the following proof sequence in ${\bf M^{\nabla[\cdot]}}$.
\[
\begin{array}{lll}
     & [\neg p\land\neg\nabla p](\neg p\land\neg\nabla p) & \\
    \lra & [\neg p\land\neg\nabla p]\neg p\land [\neg p\land\neg\nabla p]\neg\nabla p & \text{AC}\\
    \lra & (\neg p\land\neg\nabla p\to \neg p)\land (\neg p\land\neg\nabla p\to\neg[\neg p\land\neg\nabla p]\nabla p) & \text{AN},\text{AP}\\
    \lra &  (\neg p\land\neg\nabla p\to\neg[\neg p\land\neg\nabla p]\nabla p) & \text{TAUT}\\
    \lra & (\neg p\land\neg\nabla p\to\neg(\nabla[\neg p\land\neg\nabla p]p\land \nabla[\neg p\land\neg\nabla p]\neg p)&\text{A}\nabla\\
    \lra & (\neg p\land\Delta p\to \Delta (\neg p\land\Delta p\to p)\vee\Delta \top) & \text{AP},\text{AN},\text{RE}\nabla\\
    \lra & \top & \text{Prop.~}\ref{prop.impliesdeltatop}\\
\end{array}
\]
Therefore, $[\neg p\land\neg\nabla p](\neg p\land\neg\nabla p)$ is provable in ${\bf M^{\nabla[\cdot]}}$.
\end{proof}

We have seen that both ${\neg\bullet} p$ and $\neg\nabla p$ are successful. One natural question would be whether their conjunction, viz. ${\neg\bullet} p\land \neg\nabla p$, is successful. Note that this does not obviously hold, since for instance, both $p$ and $\neg K p$ are successful, whereas $p\land\neg Kp$ is not, see e.g.~\cite[Example~4.34]{hvdetal.del:2007}.
\begin{proposition}
${\neg\bullet} p\land \neg\nabla p$ is successful under the intersection semantics. That is, $[{\neg\bullet} p\land \neg\nabla p]({\neg\bullet} p\land \neg\nabla p)$ is provable in ${\bf M^{\nabla\bullet[\cdot]}}$.
\end{proposition}

\begin{proof}
By $\text{AC}$, $[{\neg\bullet} p\land \neg\nabla p]({\neg\bullet} p\land \neg\nabla p) \lra ([{\neg\bullet} p\land \neg\nabla p]{\neg\bullet} p\land [{\neg\bullet} p\land \neg\nabla p]\neg\nabla p)$. We show that both $[{\neg\bullet} p\land \neg\nabla p]{\neg\bullet} p$ and $[{\neg\bullet} p\land \neg\nabla p]\neg\nabla p$ are provable in ${\bf M^{\nabla\bullet[\cdot]}}$.

We have the following proof sequence in ${\bf M^{\nabla\bullet[\cdot]}}$.
\[
\begin{array}{lll}
  & [{\neg\bullet} p\land \neg\nabla p]{\neg\bullet} p  & \\
  \lra & ({\neg\bullet} p\land \neg\nabla p\to \neg [{\neg\bullet} p\land \neg\nabla p]{\bullet p}) & \text{AN} \\
     \lra & ({\neg\bullet} p\land \neg\nabla p\to\neg ({\neg\bullet} p\land \neg\nabla p\to \bullet[{\neg\bullet} p\land \neg\nabla p]p)& \text{A}\bullet\\
    \lra & ({\neg\bullet} p\land \neg\nabla p\to{\neg\bullet}[{\neg\bullet} p\land \neg\nabla p]p)&\text{TAUT}\\
    \lra & ({\neg\bullet} p\land \neg\nabla p\to{\neg\bullet}({\neg\bullet} p\land \neg\nabla p\to p)) & \text{AP},\text{RE}\bullet\\
    \lra & (\bullet({\neg\bullet} p\land \neg\nabla p\to p)\to (\bullet p\vee\nabla p)) & \text{TAUT}\\
    \lra & (\bullet({\bullet} p\lor \nabla p\vee p)\to (\bullet p\vee\nabla p)) &\text{TAUT},\text{RE}\bullet\\
     \lra & (\bullet(p\vee\nabla p)\to (\bullet p\vee\nabla p)) & \text{E2},\text{RE}\bullet\\
     \lra & \top & \text{Prop.~}\ref{prop.useful}\\
\end{array}
\]
\[
\begin{array}{lll}
  & [{\neg\bullet} p\land \neg\nabla p]\neg\nabla p & \\
  \lra &   ({\neg\bullet} p\land \neg\nabla p\to \neg [{\neg\bullet} p\land \neg\nabla p]{\nabla p}) & \text{AN} \\
     \lra &  ({\neg\bullet} p\land \neg\nabla p\to \neg({\neg\bullet} p\land \neg\nabla p\to \nabla[{\neg\bullet} p\land \neg\nabla p]p\land &\\ &\nabla[{\neg\bullet} p\land \neg\nabla p]\neg p)) & \text{A}\nabla \\
     \lra & ({\neg\bullet} p\land \neg\nabla p\to \neg(\nabla[{\neg\bullet} p\land \neg\nabla p]p\land\nabla[{\neg\bullet} p\land \neg\nabla p]\neg p))& \text{TAUT}\\ 
     \lra & ({\neg\bullet} p\land \neg\nabla p\to \neg(\nabla({\neg\bullet} p\land \neg\nabla p\to p)\land\nabla({\neg\bullet} p\land \neg\nabla p\to\neg p)) & \text{AP},\text{AN}\\
     \lra & (\nabla (p\vee\bullet p\vee\nabla p)\land \nabla (\neg p\vee \bullet p\lor \nabla p)\to (\nabla p\vee \bullet p)) & \text{TAUT},\text{RE}\bullet\\
    \lra & \top & \text{M1}\\
\end{array}
\]

Thus both $[{\neg\bullet} p\land \neg\nabla p]{\neg\bullet} p$ and $[{\neg\bullet} p\land \neg\nabla p]\neg\nabla p$ are provable in ${\bf M^{\nabla\bullet[\cdot]}}$. Therefore, $[{\neg\bullet} p\land \neg\nabla p]({\neg\bullet} p\land \neg\nabla p)$ is provable in ${\bf M^{\nabla\bullet[\cdot]}}$.
\end{proof}

Intuitively, $[{\neg\bullet} p\land \neg\nabla p]({\neg\bullet} p\land \neg\nabla p)$ says that after being told that one is neither ignorant whether nor 
ignorant of $p$, one is still neither ignorant whether nor ignorant of $p$. In short, one's non-ignorance whether and non-ignorance of a fact cannot be altered by being announced.

The following two propositions can be shown as in Prop.~\ref{prop.notpandpissuccessful}.

\begin{proposition}
$p\land\neg  p\land {\neg\bullet} p$ is successful under the intersection semantics.
\end{proposition}

\begin{proposition}
$p\land\neg  p\land \neg\nabla p$ is successful under the intersection semantics.
\end{proposition}

\begin{proposition}\label{prop.threeconjunct-successful}
$p\land{\neg\bullet} p\land\neg\nabla p$ is successful under the intersection semantics. That is, $[p\land{\neg\bullet} p\land\neg\nabla p](p\land{\neg\bullet} p\land\neg\nabla p)$ is provable in ${\bf M^{\nabla\bullet[\cdot]}}$.
\end{proposition}

\begin{proof}
By $\text{AC}$, $[p\land{\neg\bullet} p\land\neg\nabla p](p\land{\neg\bullet} p\land\neg\nabla p)\lra ([p\land{\neg\bullet} p\land\neg\nabla p]p\land [p\land{\neg\bullet} p\land\neg\nabla p]{\neg\bullet} p\land [p\land{\neg\bullet} p\land\neg\nabla p]\neg\nabla p)$. One may easily verify that $[p\land{\neg\bullet} p\land\neg\nabla p]p$ is provable in ${\bf M^{\nabla\bullet[\cdot]}}$. It remains only to show that both $[p\land{\neg\bullet} p\land\neg\nabla p]{\neg\bullet} p$ and $[p\land{\neg\bullet} p\land\neg\nabla p]\neg\nabla p$ are provable in the system in question.

We have the following proof sequence in ${\bf M^{\nabla\bullet[\cdot]}}$.
\[
\begin{array}{lll}
   & [p\land{\neg\bullet} p\land\neg\nabla p]{\neg\bullet} p & \\
   \lra &  (p\land{\neg\bullet} p\land\neg\nabla p\to \neg[p\land{\neg\bullet} p\land\neg\nabla p]{\bullet p}) & \text{AN} \\
    \lra & (p\land{\neg\bullet} p\land\neg\nabla p\to{\neg\bullet}[p\land{\neg\bullet} p\land\neg\nabla p]p) & \text{A}\bullet\\
    \lra & (p\land{\neg\bullet} p\land\neg\nabla p\to{\neg\bullet}\top) & \text{AP},\text{RE}\bullet\\
    \lra & (p\land \circ p\land \Delta p\to \circ\top) & \text{Def.~}\circ,\text{Def.~}\Delta\\
\end{array}
\]
By Prop.~\ref{prop.impliescirctop}, $p\land\circ p\to \circ\top$ is provable in ${\bf M^\bullet}$, so is the last formula in the above proof sequence, and thus $[p\land{\neg\bullet} p\land\neg\nabla p]{\neg\bullet} p$ is provable in ${\bf M^{\nabla{\bullet[\cdot]}}}$.

Also, we have the following proof sequence in ${\bf M^{\nabla{\bullet[\cdot]}}}$.
\[
\begin{array}{lll}
   & [p\land{\neg\bullet} p\land\neg\nabla p]\neg\nabla p &\\ \lra & (p\land{\neg\bullet} p\land\neg\nabla p\to \neg[p\land{\neg\bullet} p\land\neg\nabla p]\nabla p)  & \text{AN} \\
     \lra & (p\land{\neg\bullet} p\land\neg\nabla p\to \neg(\nabla[p\land{\neg\bullet} p\land\neg\nabla p] p\land \nabla[p\land{\neg\bullet} p\land\neg\nabla p]\neg p)) & \text{A}\nabla\\
     \lra & (p\land{\circ} p\land\Delta p\to \Delta [p\land{\neg\bullet} p\land\neg\nabla p] p\lor\Delta [p\land{\neg\bullet} p\land\neg\nabla p]\neg p) & \text{Def.~}\circ,\text{Def.~}\Delta\\
     \lra & (p\land{\circ} p\land\Delta p\to \Delta \top\lor\Delta [p\land{\neg\bullet} p\land\neg\nabla p]\neg p) & \text{AP},\text{RE}\nabla\\
\end{array}
\]
By Prop.~\ref{prop.impliesdeltatop}, $\Delta p\to \Delta \top$ is provable in ${\bf M^{\nabla}}$, thus the last formula in the above proof sequence is provable in ${\bf M^{\nabla\bullet}}$. Therefore, $[p\land{\neg\bullet} p\land\neg\nabla p]\neg\nabla p$ is provable in ${\bf M^{\nabla\bullet[\cdot]}}$.

According to the previous analysis, $[p\land{\neg\bullet} p\land\neg\nabla p](p\land{\neg\bullet} p\land\neg\nabla p)$ is provable in ${\bf M^{\nabla\bullet[\cdot]}}$.
\end{proof}

\begin{proposition}
$\neg p\land {\neg\bullet} p\land\neg\nabla p$ is successful under the intersection semantics. That is, $[\neg p\land{\neg\bullet} p\land\neg\nabla p](\neg p\land{\neg\bullet} p\land\neg\nabla p)$ is provable in ${\bf M^{\nabla\bullet[\cdot]}}$.
\end{proposition}

\begin{proof}
By the proof of Prop.~\ref{prop.notpandnotbulletpissuccessful}, $\neg p\land{\neg\bullet} p\land\neg\nabla p$ is equivalent to $\neg p\land \neg \nabla p$. And we have already shown in Prop.~\ref{prop.notpandnotnablapissuccessful} that $\neg p\land \neg \nabla p$ is successful under the intersection semantics.
\end{proof}

\weg{\begin{proof}
By $\text{AC}$, $[\neg p\land{\neg\bullet} p\land\neg\nabla p](\neg p\land{\neg\bullet} p\land\neg\nabla p)\lra ([\neg p\land{\neg\bullet} p\land\neg\nabla p]\neg p\land [\neg p\land{\neg\bullet} p\land\neg\nabla p]{\neg\bullet} p\land [\neg p\land{\neg\bullet} p\land\neg\nabla p]\neg\nabla p)$. One may easily verify that $[\neg p\land{\neg\bullet} p\land\neg\nabla p]\neg p$ is provable in the system in question. It remains only to show that $[\neg p\land{\neg\bullet} p\land\neg\nabla p]{\neg\bullet} p$ and $[\neg p\land{\neg\bullet} p\land\neg\nabla p]\neg\nabla p$ are provable in that system.

We have the following proof sequence.
\[
\begin{array}{lll}
     & [\neg p\land{\neg\bullet} p\land\neg\nabla p]{\neg\bullet} p & \\
    \lra & (\neg p\land{\neg\bullet} p\land\neg\nabla p\to \neg[\neg p\land{\neg\bullet} p\land\neg\nabla p]{\bullet p}) & \text{AN} \\
    \lra & (\neg p\land{\neg\bullet} p\land\neg\nabla p\to {\neg\bullet}[\neg p\land{\neg\bullet} p\land\neg\nabla p]p & \text{A}\bullet\\
\end{array}
\]
\end{proof}}

\begin{proposition}
$p\land\neg  p\land {\neg\bullet} p\land \neg\nabla p$ is successful under the intersection semantics.
\end{proposition}

\begin{proof}
The proof is similar to that of Prop.~\ref{prop.notpandpissuccessful}.
\end{proof}

Now we demonstrate that any combination of $p$, $\neg p$, ${\neg\bullet}p$, and $\neg\nabla p$ via {\em disjunction} is successful under the intersection semantics. First, one may show that $[\psi](\phi\vee\chi)\lra ([\psi]\phi\vee[\psi]\chi)$ is provable from the above reduction axioms. For the sake of reference, we denote it $\text{AD}$.

\begin{proposition}\label{prop.pornotpissuccessful}
$p\vee\neg p$ is successful under the intersection semantics.
\end{proposition}

\begin{proof}
Note that $p\vee \neg p$ is equivalent to $\top$, and $\top$ is successful.
\end{proof}

\begin{proposition}\label{prop.pornotbulletpissuccessful}
$p\vee{\neg\bullet }p$ is successful under the intersection semantics. That is, $[p\vee{\neg\bullet }p](p\vee{\neg\bullet }p)$ is provable in ${\bf M^{\bullet[\cdot]}}$.
\end{proposition}

\begin{proof}
Just note that $p\vee{\neg\bullet} p$ is equivalent to $\bullet p\to p$, which by $\text{E2}$ is equivalent to $\top$. And $\top$ is successful.
\end{proof}

\weg{\begin{proof}
We have the following proof sequence in ${\bf M^{\bullet[\cdot]}}$.
\[
\begin{array}{lll}
   & [p\vee{\neg\bullet }p](p\vee{\neg\bullet }p) & \\
   \lra & ([p\vee{\neg\bullet }p]p\vee [p\vee{\neg\bullet }p]{\neg\bullet }p)  & \text{AD} \\
     \lra & (p\vee{\neg\bullet }p\to p)\vee (p\vee{\neg\bullet }p\to \neg[p\vee{\neg\bullet }p]{\bullet p}) & \text{AP},\text{AN}\\
     \lra & (p\vee{\neg\bullet }p\to p)\vee(p\vee{\neg\bullet }p\to {\neg\bullet}[p\vee{\neg\bullet }p]p) & \text{A}\bullet\\
     \lra & (p\vee{\neg\bullet }p\to p)\vee(p\vee{\neg\bullet }p\to {\neg\bullet}(p\vee{\neg\bullet }p\to p)) & \text{AP}\\
     \lra & (p\vee{\neg\bullet }p\to p\vee {\neg\bullet}(p\vee{\neg\bullet }p\to p)) &\text{TAUT}\\
     \lra & ({\neg\bullet }p\to p\vee {\neg\bullet}({\neg\bullet }p\to p)) & \text{TAUT},\text{RE}\bullet\\
     \lra & (\bullet p\vee p\vee {\neg\bullet}({\neg\bullet }p\to p)) & \text{TAUT}\\
     \lra & (p\vee{\neg\bullet}({\circ }p\to p)) & \text{E2},\text{Def.~}\circ\\
     \lra & (\bullet(\circ p\to p)\to p) & \text{TAUT}\\
     \lra & \top & \text{Prop.~}\ref{prop.impliesphi}\\
\end{array}
\]
Therefore, $[p\vee{\neg\bullet }p](p\vee{\neg\bullet }p)$ is provable in ${\bf M^{\bullet[\cdot]}}$.
\end{proof}}

\begin{proposition}
$p\vee\neg\nabla p$ is successful under the intersection semantics. That is, $[p\vee\neg\nabla p](p\vee\neg\nabla p)$ is provable in ${\bf M^{\nabla[\cdot]}}$.
\end{proposition}

\begin{proof}
We have the following proof sequence in ${\bf M^{\nabla[\cdot]}}$.
\[
\begin{array}{lll}
     & [p\vee\neg\nabla p](p\vee\neg\nabla p) &\\
     \lra & ([p\vee\neg\nabla p]p\vee [p\vee\neg\nabla p]\neg\nabla p) &  \text{AD} \\
     \lra & (p\vee\neg\nabla p\to p)\vee (p\vee\neg\nabla p\to \neg[p\vee\neg\nabla p]\nabla p) & \text{AP},\text{AN}\\
     \lra & (p\vee\neg\nabla p\to p)\vee (p\vee\neg\nabla p\to\neg(\nabla[p\vee\neg\nabla p]p\land \nabla[p\vee\neg\nabla p]\neg p))& \text{A}\nabla\\
     \lra &  (p\vee\neg\nabla p\to p)\vee (p\vee\neg\nabla p\to\neg(\nabla(p\vee\neg\nabla p\to p)\land \nabla(p\vee\neg\nabla p\to\neg p))) & \text{AP},\text{AN}\\
     \lra & (p\vee\neg\nabla p\to p\vee\neg(\nabla(p\vee\neg\nabla p\to p)\land \nabla(p\vee\neg\nabla p\to\neg p))) & \text{TAUT}\\
     \lra & (\neg p\land \nabla(p\vee\neg(p\vee\neg\nabla p) )\land \nabla(\neg p\vee\neg(p\vee\neg\nabla p))\to \neg p\land\nabla p) & \text{TAUT},\text{RE}\nabla\\
     \lra & \top & \text{M1}\\
\end{array}
\]
Therefore, $[p\vee\neg\nabla p](p\vee\neg\nabla p)$ is provable in ${\bf M^{\nabla[\cdot]}}$.
\end{proof}

\begin{proposition}
$\neg p\vee{\neg\bullet}p$ is successful under the intersection semantics.
\end{proposition}

\begin{proof}
By $\text{E2}$, $\neg p\vee{\neg\bullet}p$ is equivalent to ${\neg\bullet}p$, and Prop.~\ref{prop.notbulletpissuccessful} has shown that ${\neg\bullet}p$ is successful under the intersection semantics. 
\end{proof}

%F1 and F2 mean that, respectively, ``Moore sentences are self-refuting'' and ``''. As a matter of fact, both formulas are provable in ${\bf M^{\bullet[\cdot]}}$.

\begin{proposition}
$\neg p\vee\neg \nabla p$ is successful under the intersection semantics. That is, $[\neg p\vee\neg \nabla p](\neg p\vee\neg \nabla p)$ is provable in ${\bf M^{\nabla[\cdot]}}$.
\end{proposition}

\begin{proof}
We have the following proof sequence in ${\bf M^{\nabla[\cdot]}}$.
\[
\begin{array}{lll}
     &  [\neg p\vee\neg \nabla p](\neg p\vee\neg \nabla p) & \\
    \lra & [\neg p\vee\neg \nabla p]\neg p\vee [\neg p\vee\neg \nabla p]\neg\nabla p & \text{AD}\\
    \lra & (\neg p\vee\neg \nabla p\to \neg p)\vee (\neg p\vee\neg\nabla p\to\neg[\neg p\vee\neg \nabla p]\nabla p) & \text{AN},\text{AP}\\
    \lra & (\neg p\vee\neg \nabla p\to \neg p)\vee (\neg p\vee\neg\nabla p\to\neg(\nabla[\neg p\vee\neg\nabla p]p\land \nabla[\neg p\vee\neg\nabla p]\neg p))& \text{A}\nabla \\
    \lra & (\neg p\vee\neg \nabla p\to \neg p)\vee (\neg p\vee\neg\nabla p\to\neg(\nabla(\neg p\vee\neg\nabla p\to p)\land \nabla(\neg p\vee\neg\nabla p\to\neg p))) &\text{AP},\text{AN}\\
    \lra & (\neg p\vee\neg \nabla p\to \neg p\vee \neg(\nabla(\neg p\vee\neg\nabla p\to p)\land \nabla(\neg p\vee\neg\nabla p\to\neg p))) & \text{TAUT}\\
    \lra & p\land \nabla (p\vee \neg(\neg p\vee\neg\nabla p))\land \nabla(\neg p\vee \neg(\neg p\vee\neg\nabla p))\to p\land\nabla p& \text{TAUT}\\
    \lra & \top & \text{M1}\\
\end{array}
\]
Therefore, $[\neg p\vee\neg \nabla p](\neg p\vee\neg \nabla p)$ is provable in ${\bf M^{\nabla[\cdot]}}$.
\end{proof}

\begin{proposition}\label{prop.notbulletpornotnablapissuccessful}
${\neg\bullet}p\vee\neg\nabla p$ is successful under the intersection semantics. That is, $[{\neg\bullet}p\vee\neg\nabla p]({\neg\bullet}p\vee\neg\nabla p)$ is provable in ${\bf M^{\nabla\bullet[\cdot]}}$.
\end{proposition}

\begin{proof}
%Since $[{\neg\bullet}p\vee\neg\nabla p]({\neg\bullet}p\vee\neg\nabla p)\lra ([{\neg\bullet}p\vee\neg\nabla p]{\neg\bullet}p\vee [{\neg\bullet}p\vee\neg\nabla p]{\neg\nabla}p)$, it suffices to show the provability of $[{\neg\bullet}p\vee\neg\nabla p]{\neg\bullet}p\vee [{\neg\bullet}p\vee\neg\nabla p]{\neg\nabla}p$.
We have the following proof sequence in ${\bf M^{\nabla\bullet[\cdot]}}$.
\[
\begin{array}{lll}
   & [{\neg\bullet}p\vee\neg\nabla p]({\neg\bullet}p\vee\neg\nabla p) &\\ 
   \lra & ([{\neg\bullet}p\vee\neg\nabla p]{\neg\bullet}p\vee [{\neg\bullet}p\vee\neg\nabla p]{\neg\nabla}p) & \text{AD} \\
    \lra & ({\neg\bullet}p\vee\neg\nabla p\to \neg[{\neg\bullet}p\vee\neg\nabla p]{\bullet}p)\vee &\\
     & ({\neg\bullet}p\vee\neg\nabla p\to \neg[{\neg\bullet}p\vee\neg\nabla p]{\nabla}p) & \text{AN}\\
     \lra & ({\neg\bullet}p\vee\neg\nabla p\to {\neg\bullet}[{\neg\bullet}p\vee\neg\nabla p]p)\vee &\\
     & ({\neg\bullet}p\vee\neg\nabla p\to \neg({\nabla}[{\neg\bullet}p\vee\neg\nabla p]p\land & \\
     & {\nabla}[{\neg\bullet}p\vee\neg\nabla p]\neg p)) & \text{A}\bullet,\text{A}\nabla\\
     \lra & ({\neg\bullet}p\vee\neg\nabla p\to {\neg\bullet}({\neg\bullet}p\vee\neg\nabla p\to p))\vee \\
    & ({\neg\bullet}p\vee\neg\nabla p\to \neg({\nabla}({\neg\bullet}p\vee\neg\nabla p\to p)\land & \\
    & {\nabla}({\neg\bullet}p\vee\neg\nabla p\to \neg p))) & \text{AP},\text{AN}\\
    \lra & ({\neg\bullet}p\vee\neg\nabla p\to{\neg\bullet}({\neg\bullet}p\vee\neg\nabla p\to p)\vee & \\ 
    & \neg({\nabla}({\neg\bullet}p\vee\neg\nabla p\to p)\land {\nabla}({\neg\bullet}p\vee\neg\nabla p\to \neg p))) &\text{TAUT}\\
     \lra & {\bullet}({\neg\bullet}p\vee\neg\nabla p\to p)\land {\nabla}({\neg\bullet}p\vee\neg\nabla p\to p)\land & \\  
     & {\nabla}({\neg\bullet}p\vee\neg\nabla p\to \neg p)\to \bullet p\land \nabla p & \text{TAUT}\\
    \lra & \bullet(\circ p\vee\Delta p\to p)\land \nabla(\circ p\vee\Delta p\to p)\land & \\ &\nabla (\circ p\vee\Delta p\to\neg p)\to \bullet p\land \nabla p & \text{Def.~}\circ,\text{Def.~}\Delta \\
\end{array}
\]

By Prop.~\ref{prop.useful-2}, $\bullet(\circ p\vee\Delta p\to p)\to \bullet p$ is provable in ${\bf M^{\nabla\bullet}}$; by axiom $\text{M1}$ and $\text{RE}\nabla$, we can show the provability of $\nabla(\circ p\vee\Delta p\to p)\land\nabla (\circ p\vee\Delta p\to\neg p)\to\nabla p$ in ${\bf M^{\nabla\bullet}}$. Therefore, $[{\neg\bullet}p\vee\neg\nabla p]({\neg\bullet}p\vee\neg\nabla p)$ is provable in ${\bf M^{\nabla\bullet[\cdot]}}$.
\end{proof}

Intuitively, $[{\neg\bullet}p\vee\neg\nabla p]({\neg\bullet}p\vee\neg\nabla p)$ says that one's either non-ignorance of or non-ignorance whether a fact cannot be altered by being announced: after being told that one is either not ignorant of or not ignorant whether $p$, one is still either not ignorant of or not ignorant whether $p$.

Next two propositions are shown as in Prop.~\ref{prop.pornotpissuccessful}.

\begin{proposition}
$p\vee\neg p\vee{\neg\bullet }p$ is successful under the intersection semantics.%That is, $[p\vee{\neg\bullet }p\vee\neg\nabla p](p\vee{\neg\bullet }p\vee\neg\nabla p)$ is provable in ${\bf M^{\nabla\bullet[\cdot]}}$.
\end{proposition}

\begin{proposition}
$p\vee\neg p\vee{\neg\nabla }p$ is successful under the intersection semantics.%That is, $[p\vee{\neg\bullet }p\vee\neg\nabla p](p\vee{\neg\bullet }p\vee\neg\nabla p)$ is provable in ${\bf M^{\nabla\bullet[\cdot]}}$.
\end{proposition}

\begin{proposition}
$p\vee{\neg\bullet }p\vee\neg\nabla p$ is successful under the intersection semantics. That is, $[p\vee{\neg\bullet }p\vee\neg\nabla p](p\vee{\neg\bullet }p\vee\neg\nabla p)$ is provable in ${\bf M^{\nabla\bullet[\cdot]}}$.
\end{proposition}

\begin{proof}
By the proof of Prop.~\ref{prop.pornotbulletpissuccessful}, $p\vee{\neg\bullet }p$ is equivalent to $\top$, so is $p\vee{\neg\bullet }p\vee\neg\nabla p$.
And $\top$ is successful.
\end{proof}

\weg{\begin{proof}
We have the following proof sequence in ${\bf M^{\nabla\bullet[\cdot]}}$.
\[
\begin{array}{lll}
    & [p\vee{\neg\bullet }p\vee\neg\nabla p](p\vee{\neg\bullet }p\vee\neg\nabla p) &  \\
    \lra  & ([p\vee{\neg\bullet }p\vee\neg\nabla p]p\vee [p\vee{\neg\bullet }p\vee\neg\nabla p]{\neg\bullet }p\vee [p\vee{\neg\bullet }p\vee\neg\nabla p]\neg\nabla p) & \text{AD}\\
    \lra & (p\vee{\neg\bullet }p\vee\neg\nabla p\to p)\vee(p\lor{\neg\bullet} p\lor\neg\nabla p\to{\neg\bullet}[p\lor{\neg\bullet} p\lor\neg\nabla p]p)\vee & \\
    & (p\lor{\neg\bullet} p\lor\neg\nabla p\to \neg(\nabla[p\lor{\neg\bullet} p\lor\neg\nabla p] p\land \nabla[p\lor{\neg\bullet} p\lor\neg\nabla p]\neg p)) & \text{similar to Prop.~}\ref{prop.threeconjunct-successful}\\
    \lra & (p\vee{\neg\bullet }p\vee\neg\nabla p\to p\vee {\neg\bullet}[p\lor{\neg\bullet} p\lor\neg\nabla p]p\vee & \\ & \neg(\nabla[p\lor{\neg\bullet} p\lor\neg\nabla p] p\land \nabla[p\lor{\neg\bullet} p\lor\neg\nabla p]\neg p)) & \text{TAUT}\\
    \lra & (p\vee{\neg\bullet }p\vee\neg\nabla p\to p\vee {\neg\bullet}(p\lor{\neg\bullet} p\lor\neg\nabla p\to p)\vee & \\ & \neg(\nabla(p\lor{\neg\bullet} p\lor\neg\nabla p\to p)\land \nabla(p\lor{\neg\bullet} p\lor\neg\nabla p\to \neg p))) & \text{AP},\text{AN}\\
    \lra & (\neg p\land \bullet(p\lor{\neg\bullet} p\lor\neg\nabla p\to p)\land \nabla(p\lor{\neg\bullet} p\lor\neg\nabla p\to p)\land &\\ & \nabla(p\lor{\neg\bullet} p\lor\neg\nabla p\to \neg p)\to \neg p\land\bullet p\land\nabla p) &\text{TAUT}\\
    \lra & (\neg p\land \bullet(p\lor\circ p\lor\Delta p\to p)\land \nabla(p\lor\circ p\lor\Delta p\to p)\land &\\ & \nabla(p\lor\circ p\lor\Delta p\to \neg p)\to \neg p\land\bullet p\land\nabla p) & \text{Def.~}\circ,\text{Def.~}\Delta\\
    \lra & (\neg p\land \bullet(\circ p\lor (p\lor\Delta p)\to p)\land \nabla(p\vee \neg (p\lor\circ p\lor\Delta p) )\land &\\ & \nabla(\neg p\vee \neg(p\lor\circ p\lor\Delta p))\to \neg p\land\bullet p\land\nabla p) &\text{TAUT},\text{RE}\nabla\\
    \lra & \top & \text{Prop.~}\ref{prop.impliesphi},\text{M1}\\
\end{array}
\]
\end{proof}}

\begin{proposition}
$\neg p\vee{\neg\bullet}p\vee\neg\nabla p$ is successful under the intersection semantics.
\end{proposition}

\begin{proof}
By $\text{E2}$, $\neg p\vee{\neg\bullet }p\vee\neg\nabla p$ is equivalent to ${\neg\bullet }p\vee\neg\nabla p$, and we have shown in Prop.~\ref{prop.notbulletpornotnablapissuccessful} that ${\neg\bullet }p\vee\neg\nabla p$ is successful under the intersection semantics.
\end{proof}

\begin{proposition}
$p\vee\neg p\vee{\neg\bullet}p\vee\neg\nabla p$ is successful under the intersection semantics.
\end{proposition}

\begin{proof}
The proof is similar to that of Prop.~\ref{prop.pornotpissuccessful}.
\end{proof}

%Question: Indicate under which conditions $\bullet p$ and $\nabla p$ and their Boolean combinations are successful and self-refuting.

%Q: indicate the fragment of successful formulas.

\section{Conclusion and Future work}\label{sec.conclusion}

In this paper, we investigated the bimodal logic of Fitchean ignorance and first-order ignorance $\mathcal{L}(\nabla,\bullet)$ under the neighborhood semantics. We compared the relative expressivity between $\mathcal{L}(\nabla,\bullet)$ and the logic of (first-order) ignorance $\mathcal{L}(\nabla)$ and the logic of Fitchean ignorance $\mathcal{L}(\bullet)$, and between $\mathcal{L}(\nabla,\bullet)$ and standard epistemic logic $\mathcal{L}(\Diamond)$. It turns out that over the class of models possessing $(c)$ or $(t)$, all of these logics are equally expressive, whereas over the class of models possessing either of other eight neighborhood properties, $\mathcal{L}(\nabla,\bullet)$ is more expressive than both $\mathcal{L}(\nabla)$ and $\mathcal{L}(\bullet)$, and over the class of models possessing either of eight neighborhood properties except for $(d)$,
$\mathcal{L}(\nabla,\bullet)$ is less expressive than $\mathcal{L}(\Diamond)$. We explored the frame definability of the bimodal logic, which turns out that all ten frame properties except for $(n)$ are undefinable in $\mathcal{L}(\nabla,\bullet)$. We axiomatized the bimodal logic over various classes of neighborhood frames, which among other results includes the classical logic, the monotone logic, and the regular logic. Last but not least, by updating the neighborhood models via the intersection semantics, we found suitable reduction axioms and thus reduced the public announcement operators to the bimodal logic. This gives us good applications to successful formulas, since as we have shown, any combination of $p$, $\neg p$, ${\neg\bullet}p$ and $\neg\nabla p$ via conjunction (or, via disjunction) is successful under the intersection semantics. We also partly answers open questions raised in~\cite{Fan:2021,Fan:2020neighborhood}.

For future work, we hope to know whether $\mathcal{L}(\nabla,\bullet)$ is less expressive than $\mathcal{L}(\Diamond)$
over the class of $(d)$-models. We conjecture the answer is positive, but the model
constructions seems hard, where the desired models both needs at least three points.
Moreover, as we have seen, the proofs of the expressivity and frame definability results involve nontrivial (if not highly nontrivial) constructions of neighborhood models
and frames, we thus also hope to find the bisimulation notion for $\mathcal{L}(\nabla,\bullet)$ under the
neighborhood semantics.

\bibliographystyle{plain}
\bibliography{biblio2019,biblio2016}

\begin{thebibliography}{10}

\bibitem{Bett:2010}
R.~Bett.
\newblock Socratic ignorance.
\newblock In Donald~R. Morrison, editor, {\em The Cambridge Companion to
  Socrates}, pages 215--236. Cambridge University Press, 2011.

\bibitem{bjerring2014rationality}
J.~C. Bjerring, J.~U. Hansen, and N.~J. L.~L. Pedersen.
\newblock On the rationality of pluralistic ignorance.
\newblock {\em Synthese}, 191(11):2445--2470, 2014.

\bibitem{bonzio2022logical}
S.~Bonzio, V.~Fano, P.~Graziani, and M.~Pra~Baldi.
\newblock A logical modeling of severe ignorance.
\newblock {\em Journal of Philosophical Logic}, 52:1053--1080, 2023.

\bibitem{Chellas1980}
B.~F. Chellas.
\newblock {\em Modal Logic: An Introduction}.
\newblock Cambridge University Press, 1980.

\bibitem{Fan:2016}
J.~Fan.
\newblock Removing your ignorance by announcing group ignorance: {A} group
  announcement logic for ignorance.
\newblock {\em Studies in Logic}, 9(4):4--33, 2016.

\bibitem{Fan:2018}
J.~Fan.
\newblock Neighborhood contingency logic: A new perspective.
\newblock {\em Studies in Logic}, 11(4):37--55, 2018.

\bibitem{Fan:2019}
J.~Fan.
\newblock Bimodal logics with contingency and accident.
\newblock {\em Journal of Philosophical Logic}, 48:425--445, 2019.

\bibitem{fan2019family}
J.~Fan.
\newblock A family of neighborhood contingency logics.
\newblock {\em Notre Dame Journal of Formal Logic}, 60(4):683--699, 2019.

\bibitem{Fan:2021}
J.~Fan.
\newblock Bimodal logic with contingency and accident: Bisimulation and
  axiomatizations.
\newblock {\em Logica Universalis}, 2021.
\newblock https://doi.org/10.1007/s11787-021-00270-9.

\bibitem{Fan:2021disjunctive}
J.~Fan.
\newblock A logic for disjunctive ignorance.
\newblock {\em Journal of Philosophical Logic}, 2021.
\newblock https://doi.org/10.1007/s10992-021-09599-4.

\bibitem{Fan:2020neighborhood}
J.~Fan.
\newblock Unknown truths and false beliefs: Completeness and expressivity
  results for the neighborhood semantics.
\newblock {\em Studia Logica}, 110:1--45, 2022.
\newblock https://doi.org/10.1007/s11225-021-09950-5.

\bibitem{FanvD:neighborhood}
J.~Fan and H.~van Ditmarsch.
\newblock Neighborhood contingency logic.
\newblock In M.~Banerjee and S.~Krishna, editors, {\em Logic and Its
  Application}, volume 8923 of {\em Lecture Notes in Computer Science}, pages
  88--99. Springer, 2015.

\bibitem{Fanetal:2014}
J.~Fan, Y.~Wang, and H.~van Ditmarsch.
\newblock Almost necessary.
\newblock In {\em Advances in Modal Logic}, volume~10, pages 178--196, 2014.

\bibitem{Fanetal:2015}
J.~Fan, Y.~Wang, and H.~van Ditmarsch.
\newblock Contingency and knowing whether.
\newblock {\em The Review of Symbolic Logic}, 8(1):75--107, 2015.

\bibitem{Fine:2018}
K.~Fine.
\newblock Ignorance of ignorance.
\newblock {\em Synthese}, 195:4031--4045, 2018.
\newblock https://doi.org/10.1007/s11229-017-1406-z.

\bibitem{Goranko:2021}
V.~Goranko.
\newblock On relative ignorance.
\newblock {\em Filosofiska Notiser}, 8(1):119–140, 2021.

\bibitem{hintikka:1962}
J.~Hintikka.
\newblock {\em Knowledge and Belief}.
\newblock Cornell University Press, Ithaca, NY, 1962.

\bibitem{konolige1982circumscriptive}
K.~Konolige.
\newblock Circumscriptive ignorance.
\newblock In {\em AAAI}, pages 202--204, 1982.

\bibitem{kubyshkina2019logic}
E.~Kubyshkina and M.~Petrolo.
\newblock A logic for factive ignorance.
\newblock {\em Synthese}, pages 1--12, 2019.

\bibitem{le2016nature}
P.~Le~Morvan and R.~Peels.
\newblock The nature of ignorance: Two views.
\newblock In R.~Peels and M.~Blaauw, editors, {\em The epistemic dimensions of
  ignorance}. Cambridge: Cambridge University Press, 2016.

\bibitem{ma2013update}
M.~Ma and K.~Sano.
\newblock How to update neighborhood models.
\newblock In {\em International Workshop on Logic, Rationality and
  Interaction}, pages 204--217. Springer, 2013.

\bibitem{MaSano:2015}
M.~Ma and K.~Sano.
\newblock How to update neighbourhood models.
\newblock {\em Journal of Logic and Computation}, 28(8):1781--1804, December
  2018.
\newblock https://doi.org/10.1093/logcom/exv026.

\bibitem{Marcos:2005}
J.~Marcos.
\newblock Logics of essence and accident.
\newblock {\em Bulletin of the Section of Logic}, 34(1):43--56, 2005.

\bibitem{Montague:1970}
R.~Montague.
\newblock Universal grammar.
\newblock {\em Theoria}, 36:373--398, 1970.

\bibitem{LeMorvan:2011}
P.~Le Morvan.
\newblock On ignorance: {A} reply to {P}eels.
\newblock {\em Philosophia}, 39(2):335--344, 2011.

\bibitem{o1976pluralistic}
Hubert~J O'Gorman and Stephen~L Garry.
\newblock Pluralistic ignorance --- a replication and extension.
\newblock {\em Public Opinion Quarterly}, 40(4):449--458, 1976.

\bibitem{olsson2015explicating}
E.~J. Olsson and C.~Proietti.
\newblock Explicating ignorance and doubt: A possible worlds approach.
\newblock {\em The Epistemic Dimensions of Ignorance}, 2015.

\bibitem{pacuit2017neighborhood}
E.~Pacuit.
\newblock {\em Neighborhood semantics for modal logic}.
\newblock Springer, 2017.

\bibitem{Peels:2011}
R.~Peels.
\newblock {Ignorance is Lack of True Belief: A Rejoinder to Le Morvan}.
\newblock {\em Philosophia}, 39(2):344--355, 2011.

\bibitem{proietti2014ddl}
C.~Proietti and E.~J. Olsson.
\newblock A ddl approach to pluralistic ignorance and collective belief.
\newblock {\em Journal of Philosophical Logic}, 43(2):499--515, 2014.

\bibitem{Scott:1970}
D.~Scott.
\newblock Advice on modal logic.
\newblock {\em Philosophical Problems in Logic: Some Recent Developments},
  pages 143--173, 1970.

\bibitem{shoham1986chronological}
Y.~Shoham.
\newblock Chronological ignorance.
\newblock In {\em Proceedings of the Fifth National Conference on Artificial
  Intelligence}, pages 389--393, 1986.

\bibitem{steinsvold:2008}
C.~Steinsvold.
\newblock A note on logics of ignorance and borders.
\newblock {\em Notre Dame Journal of Formal Logic}, 49(4):385--392, 2008.

\bibitem{wiebeetal:2003}
W.~van~der Hoek and A.~Lomuscio.
\newblock Ignore at your peril - towards a logic for ignorance.
\newblock In {\em Proc.\ of 2nd AAMAS}, pages 1148--1149. ACM, 2003.

\bibitem{hoeketal:2004}
W.~van~der Hoek and A.~Lomuscio.
\newblock A logic for ignorance.
\newblock {\em Electronic Notes in Theoretical Computer Science},
  85(2):117--133, 2004.

\bibitem{hvdetal.del:2007}
H.~van Ditmarsch, W.~van~der Hoek, and B.~Kooi.
\newblock {\em Dynamic Epistemic Logic}, volume 337 of {\em Synthese Library}.
\newblock Springer, 2007.

\bibitem{Zvesper:2010}
J.~Zvesper.
\newblock {\em Playing with information}.
\newblock PhD thesis, Institute for Logic, Language and Computation, University
  of Amsterdam, 2010.

\end{thebibliography}

\end{document}